\chardef\@x10\chardef\@xv60
\def\tcitime{
\def\@time{%
  \@minute\time\@hour\@minute\divide\@hour\@xv
  \ifnum\@hour<\@x 0\fi\the\@hour:%
  \multiply\@hour\@xv\advance\@minute-\@hour
  \ifnum\@minute<\@x 0\fi\the\@minute
  }}%
\def\x@hyperref#1#2#3{%
   \catcode`\~ = 12
   \catcode`\$ = 12
   \catcode`\_ = 12
   \catcode`\# = 12
   \catcode`\& = 12
   \y@hyperref{#1}{#2}{#3}%
}
\def\y@hyperref#1#2#3#4{%
   #2\ref{#4}#3
   \catcode`\~ = 13
   \catcode`\$ = 3
   \catcode`\_ = 8
   \catcode`\# = 6
   \catcode`\& = 4
}
\def\QCTOpt[#1]#2{%
  \def\QCTOptB{#1}
  \def\QCTOptA{#2}
}
\def\QCTNOpt#1{%
  \def\QCTOptA{#1}
  \let\QCTOptB\empty
}
\def\Qct{%
  \@ifnextchar[{%
    \QCTOpt}{\QCTNOpt}
}
\def\QCBOpt[#1]#2{%
  \def\QCBOptB{#1}%
  \def\QCBOptA{#2}%
}
\def\QCBNOpt#1{%
  \def\QCBOptA{#1}%
  \let\QCBOptB\empty
}
\def\Qcb{%
  \@ifnextchar[{%
    \QCBOpt}{\QCBNOpt}%
}
\def\PrepCapArgs{%
  \ifx\QCBOptA\empty
    \ifx\QCTOptA\empty
      {}%
    \else
      \ifx\QCTOptB\empty
        {\QCTOptA}%
      \else
        [\QCTOptB]{\QCTOptA}%
      \fi
    \fi
  \else
    \ifx\QCBOptA\empty
      {}%
    \else
      \ifx\QCBOptB\empty
        {\QCBOptA}%
      \else
        [\QCBOptB]{\QCBOptA}%
      \fi
    \fi
  \fi
}
\def\GRAPHICSPS#1{%
 \ifcase\GRAPHICSTYPE
   \special{ps: #1}%
 \or
   \special{language "PS", include "#1"}%
 \fi
}%
\def\graffile#1#2#3#4{%
    \bgroup
	   \@inlabelfalse
       \leavevmode
       \@ifundefined{bbl@deactivate}{\def~{\string~}}{\activesoff}%
        \raise -#4 \BOXTHEFRAME{%
           \hbox to #2{\raise #3\hbox to #2{\null #1\hfil}}}%
    \egroup
}%
\def\draftbox#1#2#3#4{%
 \leavevmode\raise -#4 \hbox{%
  \frame{\rlap{\protect\tiny #1}\hbox to #2%
   {\vrule height#3 width\z@ depth\z@\hfil}%
  }%
 }%
}%
\let\nographics=\@msidraft
\newif\ifwasdraft
\def\GRAPHIC#1#2#3#4#5{%
   \ifnum\@msidraft=\@ne\draftbox{#2}{#3}{#4}{#5}%
   \else\graffile{#1}{#3}{#4}{#5}%
   \fi
}
\def\addtoLaTeXparams#1{%
    \edef\LaTeXparams{\LaTeXparams #1}}%
\newif\ifBoxFrame \BoxFramefalse
\newif\ifOverFrame \OverFramefalse
\newif\ifUnderFrame \UnderFramefalse
\def\BOXTHEFRAME#1{%
   \hbox{%
      \ifBoxFrame
         \frame{#1}%
      \else
         {#1}%
      \fi
   }%
}
\def\doFRAMEparams#1{\BoxFramefalse\OverFramefalse\UnderFramefalse\readFRAMEparams#1\end}%
\def\readFRAMEparams#1{%
 \ifx#1\end%
  \let\next=\relax
  \else
  \ifx#1i\dispkind=\z@\fi
  \ifx#1d\dispkind=\@ne\fi
  \ifx#1f\dispkind=\tw@\fi
  \ifx#1t\addtoLaTeXparams{t}\fi
  \ifx#1b\addtoLaTeXparams{b}\fi
  \ifx#1p\addtoLaTeXparams{p}\fi
  \ifx#1h\addtoLaTeXparams{h}\fi
  \ifx#1X\BoxFrametrue\fi
  \ifx#1O\OverFrametrue\fi
  \ifx#1U\UnderFrametrue\fi
  \ifx#1w
    \ifnum\@msidraft=1\wasdrafttrue\else\wasdraftfalse\fi
    \@msidraft=\@ne
  \fi
  \let\next=\readFRAMEparams
  \fi
 \next
 }%
\def\IFRAME#1#2#3#4#5#6{%
      \bgroup
      \let\QCTOptA\empty
      \let\QCTOptB\empty
      \let\QCBOptA\empty
      \let\QCBOptB\empty
      #6%
      \parindent=0pt
      \leftskip=0pt
      \rightskip=0pt
      \setbox0=\hbox{\QCBOptA}%
      \@tempdima=#1\relax
      \ifOverFrame
          \typeout{This is not implemented yet}%
          \show\HELP
      \else
         \ifdim\wd0>\@tempdima
            \advance\@tempdima by \@tempdima
            \ifdim\wd0 >\@tempdima
               \setbox1 =\vbox{%
                  \unskip\hbox to \@tempdima{\hfill\GRAPHIC{#5}{#4}{#1}{#2}{#3}\hfill}%
                  \unskip\hbox to \@tempdima{\parbox[b]{\@tempdima}{\QCBOptA}}%
               }%
               \wd1=\@tempdima
            \else
               \textwidth=\wd0
               \setbox1 =\vbox{%
                 \noindent\hbox to \wd0{\hfill\GRAPHIC{#5}{#4}{#1}{#2}{#3}\hfill}\\%
                 \noindent\hbox{\QCBOptA}%
               }%
               \wd1=\wd0
            \fi
         \else
            \ifdim\wd0>0pt
              \hsize=\@tempdima
              \setbox1=\vbox{%
                \unskip\GRAPHIC{#5}{#4}{#1}{#2}{0pt}%
                \break
                \unskip\hbox to \@tempdima{\hfill \QCBOptA\hfill}%
              }%
              \wd1=\@tempdima
           \else
              \hsize=\@tempdima
              \setbox1=\vbox{%
                \unskip\GRAPHIC{#5}{#4}{#1}{#2}{0pt}%
              }%
              \wd1=\@tempdima
           \fi
         \fi
         \@tempdimb=\ht1
         \advance\@tempdimb by -#2
         \advance\@tempdimb by #3
         \leavevmode
         \raise -\@tempdimb \hbox{\box1}%
      \fi
      \egroup%
}%
\def\DFRAME#1#2#3#4#5{%
  \hfil\break
  \bgroup
     \leftskip\@flushglue
	 \rightskip\@flushglue
	 \parindent\z@
	 \parfillskip\z@skip
     \let\QCTOptA\empty
     \let\QCTOptB\empty
     \let\QCBOptA\empty
     \let\QCBOptB\empty
	 \vbox\bgroup
        \ifOverFrame 
           #5\QCTOptA\par
        \fi
        \GRAPHIC{#4}{#3}{#1}{#2}{\z@}%
        \ifUnderFrame 
           \break#5\QCBOptA
        \fi
	 \egroup
   \egroup
   \break
}%
\def\FFRAME#1#2#3#4#5#6#7{%
  \@ifundefined{floatstyle}
    {
     \begin{figure}[#1]%
    }
    {
	 \ifx#1h
      \begin{figure}[H]%
	 \else
      \begin{figure}[#1]%
	 \fi
	}
  \let\QCTOptA\empty
  \let\QCTOptB\empty
  \let\QCBOptA\empty
  \let\QCBOptB\empty
  \ifOverFrame
    #4
    \ifx\QCTOptA\empty
    \else
      \ifx\QCTOptB\empty
        \caption{\QCTOptA}%
      \else
        \caption[\QCTOptB]{\QCTOptA}%
      \fi
    \fi
    \ifUnderFrame\else
      \label{#5}%
    \fi
  \else
    \UnderFrametrue%
  \fi
  \begin{center}\GRAPHIC{#7}{#6}{#2}{#3}{\z@}\end{center}%
  \ifUnderFrame
    #4
    \ifx\QCBOptA\empty
      \caption{}%
    \else
      \ifx\QCBOptB\empty
        \caption{\QCBOptA}%
      \else
        \caption[\QCBOptB]{\QCBOptA}%
      \fi
    \fi
    \label{#5}%
  \fi
  \end{figure}%
 }%
\def\makeactives{
  \catcode`\"=\active
  \catcode`\;=\active
  \catcode`\:=\active
  \catcode`\'=\active
  \catcode`\~=\active
}
   \gdef\activesoff{%
      \def"{\string"}
      \def;{\string;}
      \def:{\string:}
      \def'{\string'}
      \def~{\string~}
    }
\def\FRAME#1#2#3#4#5#6#7#8{%
 \bgroup
 \ifnum\@msidraft=\@ne
   \wasdrafttrue
 \else
   \wasdraftfalse%
 \fi
 \def\LaTeXparams{}%
 \dispkind=\z@
 \def\LaTeXparams{}%
 \doFRAMEparams{#1}%
 \ifnum\dispkind=\z@\IFRAME{#2}{#3}{#4}{#7}{#8}{#5}\else
  \ifnum\dispkind=\@ne\DFRAME{#2}{#3}{#7}{#8}{#5}\else
   \ifnum\dispkind=\tw@
    \edef\@tempa{\noexpand\FFRAME{\LaTeXparams}}%
    \@tempa{#2}{#3}{#5}{#6}{#7}{#8}%
    \fi
   \fi
  \fi
  \ifwasdraft\@msidraft=1\else\@msidraft=0\fi{}%
  \egroup
 }%
\def\TEXUX#1{"texux"}
\def\func#1{\mathop{\rm #1}\nolimits}%
\long\def\QQQ#1#2{%
     \long\expandafter\def\csname#1\endcsname{#2}}%
\long\def\QQA#1#2{}%
\def\QTR#1#2{{\csname#1\endcsname #2}}
\def\EXPAND#1[#2]#3{}%
\def\NOEXPAND#1[#2]#3{}%
\def\LaTeXparent#1{}%
\def\ChildStyles#1{}%
\def\ChildDefaults#1{}%
\def\QTagDef#1#2#3{}%
  \providecommand{\UNICODE}[2][]{\protect\rule{.1in}{.1in}}
  \providecommand{\U}[1]{\protect\rule{.1in}{.1in}}
\def\QQfnmark#1{\footnotemark}
 \def\abstract{%
  \if@twocolumn
   \section*{Abstract (Not appropriate in this style!)}%
   \else \small 
   \begin{center}{\bf Abstract\vspace{-.5em}\vspace{\z@}}\end{center}%
   \quotation 
   \fi
  }%
   \def\registered{\relax\ifmmode{}\r@gistered
                    \else$\m@th\r@gistered$\fi}%
 \def\r@gistered{^{\ooalign
  {\hfil\raise.07ex\hbox{$\scriptstyle\rm\text{R}$}\hfil\crcr
  \mathhexbox20D}}}}{}%
\newdimen\theight
\def\newfmtname{LaTeX2e}
  \DeclareOldFontCommand{\rm}{\normalfont\rmfamily}{\mathrm}
  \DeclareOldFontCommand{\sf}{\normalfont\sffamily}{\mathsf}
  \DeclareOldFontCommand{\tt}{\normalfont\ttfamily}{\mathtt}
  \DeclareOldFontCommand{\bf}{\normalfont\bfseries}{\mathbf}
  \DeclareOldFontCommand{\it}{\normalfont\itshape}{\mathit}
  \DeclareOldFontCommand{\sl}{\normalfont\slshape}{\@nomath\sl}
  \DeclareOldFontCommand{\sc}{\normalfont\scshape}{\@nomath\sc}
\def\alpha{{\Greekmath 010B}}%
\def\beta{{\Greekmath 010C}}%
\def\gamma{{\Greekmath 010D}}%
\def\delta{{\Greekmath 010E}}%
\def\epsilon{{\Greekmath 010F}}%
\def\zeta{{\Greekmath 0110}}%
\def\eta{{\Greekmath 0111}}%
\def\theta{{\Greekmath 0112}}%
\def\iota{{\Greekmath 0113}}%
\def\kappa{{\Greekmath 0114}}%
\def\lambda{{\Greekmath 0115}}%
\def\mu{{\Greekmath 0116}}%
\def\nu{{\Greekmath 0117}}%
\def\xi{{\Greekmath 0118}}%
\def\pi{{\Greekmath 0119}}%
\def\rho{{\Greekmath 011A}}%
\def\sigma{{\Greekmath 011B}}%
\def\tau{{\Greekmath 011C}}%
\def\upsilon{{\Greekmath 011D}}%
\def\phi{{\Greekmath 011E}}%
\def\chi{{\Greekmath 011F}}%
\def\psi{{\Greekmath 0120}}%
\def\omega{{\Greekmath 0121}}%
\def\varepsilon{{\Greekmath 0122}}%
\def\vartheta{{\Greekmath 0123}}%
\def\varpi{{\Greekmath 0124}}%
\def\varrho{{\Greekmath 0125}}%
\def\varsigma{{\Greekmath 0126}}%
\def\varphi{{\Greekmath 0127}}%
\def\nabla{{\Greekmath 0272}}
\def\FindBoldGroup{%
   {\setbox0=\hbox{$\mathbf{x\global\edef\theboldgroup{\the\mathgroup}}$}}%
}
\def\Greekmath#1#2#3#4{%
    \if@compatibility
        \ifnum\mathgroup=\symbold
           \mathchoice{\mbox{\boldmath$\displaystyle\mathchar"#1#2#3#4$}}%
                      {\mbox{\boldmath$\textstyle\mathchar"#1#2#3#4$}}%
                      {\mbox{\boldmath$\scriptstyle\mathchar"#1#2#3#4$}}%
                      {\mbox{\boldmath$\scriptscriptstyle\mathchar"#1#2#3#4$}}%
        \else
           \mathchar"#1#2#3#4%
        \fi 
    \else 
        \FindBoldGroup
        \ifnum\mathgroup=\theboldgroup 
           \mathchoice{\mbox{\boldmath$\displaystyle\mathchar"#1#2#3#4$}}%
                      {\mbox{\boldmath$\textstyle\mathchar"#1#2#3#4$}}%
                      {\mbox{\boldmath$\scriptstyle\mathchar"#1#2#3#4$}}%
                      {\mbox{\boldmath$\scriptscriptstyle\mathchar"#1#2#3#4$}}%
        \else
           \mathchar"#1#2#3#4%
        \fi     	    
	  \fi}
\newif\ifGreekBold  \GreekBoldfalse
\let\SAVEPBF=\pbf
\def\pbf{\GreekBoldtrue\SAVEPBF}%
  \newcounter{equationnumber}  
  \def\mathletters{%
     \addtocounter{equation}{1}
     \edef\@currentlabel{\theequation}%
     \setcounter{equationnumber}{\c@equation}
     \setcounter{equation}{0}%
     \edef\theequation{\@currentlabel\noexpand\alph{equation}}%
  }
    \def\BibTeX{{\rm B\kern-.05em{\sc i\kern-.025em b}\kern-.08em
                 T\kern-.1667em\lower.7ex\hbox{E}\kern-.125emX}}}{}%
\def\AmS{{\protect\usefont{OMS}{cmsy}{m}{n}%
                A\kern-.1667em\lower.5ex\hbox{M}\kern-.125emS}}}{}%
\def\@@eqncr{\let\@tempa\relax
    \ifcase\@eqcnt \def\@tempa{& & &}\or \def\@tempa{& &}%
      \else \def\@tempa{&}\fi
     \@tempa
     \if@eqnsw
        \iftag@
           \@taggnum
        \else
           \@eqnnum\stepcounter{equation}%
        \fi
     \fi
     \global\tag@false
     \global\@eqnswtrue
     \global\@eqcnt\z@\cr}
\def\TCItag{\@ifnextchar*{\@TCItagstar}{\@TCItag}}
\def\@TCItag#1{%
    \global\tag@true
    \global\def\@taggnum{(#1)}}
\def\@TCItagstar*#1{%
    \global\tag@true
    \global\def\@taggnum{#1}}
\def\dsum{\mathop{\displaystyle \sum }}%
\def\dprod{\mathop{\displaystyle \prod }}%
\newtheorem{theo}{Theorem}
\newtheorem{lemm}{Lemma}
\newtheorem{coro}{Corollary}
\newtheorem{rema}{Remark}
\begin{document}

\title{MIMO Two-Way Relaying: A Space-Division Approach}
\author{ \thanks{
This manuscript is submitted to IEEE Transactions on Information Theory} \authorblockA {Xiaojun Yuan, \emph{Member}, \emph{IEEE}, Tao Yang, \emph{Member}, \emph{IEEE}, and Iain B. Collings, \emph{Senior Member}, \emph{IEEE},
} }
\maketitle

\begin{abstract}
We propose a novel space-division based network-coding scheme for
multiple-input multiple-output (MIMO) two-way relay channels (TWRCs), in
which two multi-antenna users exchange information via a multi-antenna
relay. In the proposed scheme, the overall signal space at the relay is
divided into two subspaces. In one subspace, the spatial streams of the two
users have nearly orthogonal directions, and are completely decoded at the
relay. In the other subspace, the signal directions of the two users are
nearly parallel, and linear functions of the spatial streams are computed at
the relay, following the principle of physical-layer network coding (PNC).
Based on the recovered messages and message-functions, the relay generates
and forwards network-coded messages to the two users. We show that, at high signal-to-noise ratio
(SNR), the proposed scheme achieves the asymptotic sum-rate capacity of MIMO TWRCs within $\frac{1}{2}%
\log(5/4)\approx 0.161$ bits per user-antenna for any
antenna configuration and channel realization. We perform large-system analysis to derive the average sum-rate of the proposed scheme
over Rayleigh-fading MIMO TWRCs. We show that the average asymptotic sum-rate gap to the capacity upper bound is at most
0.053 bits per relay-antenna. It is demonstrated that the proposed scheme
significantly outperforms the existing schemes.
\end{abstract}

\section{Introduction}

During the past decade, tremendous progress has been made in the field of
network coding \cite{RAhlswededIT00}. In \cite{ZhangMobicom06}-\cite{Petar07}%
, the concept of physical-layer network coding (PNC) was introduced and
applied to wireless networks. The simplest model for wireless PNC is a
two-way relay channel (TWRC), in which two users $A$ and $B$ exchange
information via an intermediate relay. Compared with conventional schemes,
PNC allows the relay to deliver linear functions of the users'
messages, which can potentially double the network throughput. It has been
shown that the PNC scheme can achieve the capacity of a Gaussian TWRC within
1/2 bit per user \cite{NamIT10}\cite{WilsonIT10}, and its gap to the
capacity vanishes at high signal-to-noise ratio (SNR).

Recently, efficient communications over MIMO TWRCs have attracted much
research interest, where the two users and the relay are equipped with
multiple antennas. Most work on MIMO TWRCs focuses on classical relaying
strategies borrowed from one-way relay channels, such as amplify-and-forward
(AF) \cite{KattiSIGCOMM07}-\cite{XuICCASP10}, compress-and-forward \cite%
{AleksicIT09}\cite{LimIT11}, and decode-and-forward (DF) \cite{KramerIT05}-%
\cite{GunduzAsilomar08}. These strategies generally perform well away from
the channel capacity due to noise amplification and multiplexing loss \cite%
{GunduzAsilomar08}. Recently, several relaying schemes have been proposed to
support PNC in MIMO TWRCs \cite{HJYangIT11}-\cite{NazerIT11}. The basic idea
is to jointly decompose the channel matrices of the two users to create
multiple scalar channels, over which multiple PNC streams are transmitted.
Let $n_{A}$, $n_{B}$, and $n_{R}$ denote the numbers of antennas of user $A$%
, user $B$, and the relay, respectively. For configurations with $%
n_{A},n_{B}\geq n_{R}$, a generalized singular-value-decomposition (GSVD)
scheme was shown to achieve the asymptotic capacity of MIMO TWRCs at high
SNR \cite{HJYangIT11}. For configurations with $n_{A},n_{B}<$ $n_{R}$, all
existing schemes may perform quite far away from the capacity. Such
configurations, however, are of most practical importance. For example, due
to the constrained physical sizes of the user terminals, it is usually
convenient to implement more antennas at the relay station than at the user
ends, as suggested in the standards of next generation networks \cite{3GPP}%
\cite{WiMAX}.

In this paper, we propose a new space-division based PNC scheme for MIMO
TWRCs. Specifically, we first establish a novel joint channel decomposition,
which characterizes the mutual orthogonality of the channel directions of
the two users seen at the relay. Based on this decomposition, the overall
signal space is divided into two orthogonal subspaces. In one subspace, the
channel directions of one user are orthogonal (or close to orthogonal) to
those of the other user. In this subspace, the spatial streams of the two
users are completely decoded. In the other subspace, the channel directions
of the two users are parallel or close to parallel. In this subspace, linear
functions of the corresponding spatial streams are computed, without
completely decoding the individual spatial streams. These linear
functions of the spatial streams are referred to as \textit{network-coded
messages}. The messages and the network-coded messages, respectively
generated from the two subspaces, are jointly encoded at the relay, and then
forwarded to the two users. Afterwards, the two users recover their desired
messages.

We derive the achievable rates of the proposed space-division based PNC
scheme for MIMO TWRCs. We analytically show that, in the high SNR regime,
the proposed scheme can achieve the sum capacity of the MIMO TWRC within $%
\min \{n_{A},n_{B}\}\log (5/4)$ bits, or $\frac{1}{2}\log (5/4)\approx 0.161$
bit/user-antenna, for any antenna setup and channel realization. This gap is
much smaller than (as low as 10\% of) the gap for the existing best scheme.
We also perform large-system analysis to derive the average achievable
sum-rate of the proposed scheme in Rayleigh fading MIMO TWRCs. We show that,
in the high SNR regime, the average gap between our scheme and the
sum-capacity upper bound is greatest when the antenna configuration is $%
n_{A}=n_{B}=\frac{1}{2}n_{R}$, with the gap being only $0.053$
bit/relay-antenna. For all other configurations, the proposed scheme perform
even closer to the capacity upper bound. Particularly, as the ratio $n_{A}/n_{R}$ (or $%
n_{B}/n_{R}$) tends to 0 or 1, the gap to the capacity upper bound vanishes.
All these analytical results agree well with the simulation. Numerical results demonstrate that the proposed scheme significantly
outperforms the existing schemes in the literature across the full range of
SNRs.

\section{Preliminaries}

\subsection{Notation}

The following notations are used throughout this paper. We use lowercase
regular letters for scalars, lowercase bold letters for vectors, and
uppercase bold letters for matrices. The superscripts $(\cdot)^T$ and $(\cdot)^\dag$ denote transpose and Hermitian transpose, respectively. $\|\cdot\|$ and $\|\cdot\|_F$ represent the Euclidian norm of a vector and the Frobenius norm of a matrix, respectively.  $\mathcal{C}(\mathbf{X})$ represents
the columnspace of a matrix $\mathbf{X}$. $\mathcal{%
\mathbb{R}
}^{n\times m}$ and $\mathcal{%
\mathbb{C}
}^{n\times m}$ denote the $n$-by-$m$ dimensional real space and complex
space, respectively. The operation $\log (\cdot )$ denotes the logarithm
with base 2, and $|\cdot |$ the determinant. $I(i)$ is the indicator
function with $I(i)=1$ for $i=1$ and $I(i)=0$ for $i\neq 1$; $[\cdot ]^{+}$
represents $\max \{\cdot ,0\}$; sign$(x)$ represents the sign of $x$; $%
\mathcal{N}_{c}(\mu ,\sigma ^{2})$ denotes the circularly symmetric complex
Gaussian distribution with mean $\mu $ and variable $\sigma ^{2}$.

\subsection{System Model}

In this paper, we consider a discrete memoryless MIMO TWRC in which users $A$
and $B$ exchange information via a relay, as illustrated in Fig. \ref%
{Fig_Config_MIMOTWRC1}. User $m$ is equipped with $n_{m}$ antennas, $m\in
\left\{ A,B\right\} $, and the relay with $n_{R}$ antennas. We assume that
there is no direct link between the two users. The channel is assumed to be
flat-fading and quasi-static, i.e., the channel coefficients remain
unchanged during each round of information exchange. The channel matrix from
user $m$ to the relay is denoted by $\mathbf{H}_{mR}\in \mathcal{%
\mathbb{C}
}^{n_{R}\times n_{m}}$, and that from the relay to user $m$ is
denoted by $\mathbf{H}_{Rm}\in \mathcal{%
\mathbb{C}
}^{n_{m}\times n_{R}},m\in \left\{ A,B\right\} $. We further assume that the
channel matrices are always of either full column rank or full row rank,
whichever is smaller, and are globally known by both users as well as by the
relay.

The system operates in a half-duplex mode. Two time-slots are employed for
each round of information exchange. Following the convention in \cite%
{HJYangIT11}-\cite{KhinaISIT11}, we assume that the two time-slots have same
duration. The extension of our results to the case of unequal duration is
straightforward.

In the first time-slot (referred to as \emph{uplink phase}), the two users
transmit to the relay simultaneously and the relay remains silent. The
transmit signal matrix at user $m$ is denoted by $\mathbf{X}_{m}\in \mathcal{%
\mathbb{C}
}^{n_{m}\times T},$ $m\in \left\{ A,B\right\} $, where $T$ is the number of
channel uses in one time-slot. Each column of $\mathbf{X}_{m}$ denotes the
signal vector transmitted by the $n_{m}$ antennas in one channel use. The
average power at each user is constrained as $\frac{1}{T}E\left[ \left\Vert
\mathbf{X}_{m}\right\Vert _{F}^{2}\right] \leq P_{m},$ $m\in \left\{
A,B\right\} $. The received signal at the relay is denoted by $\mathbf{Y}%
_{R}\in \mathcal{%
\mathbb{C}
}^{n_{R}\times T}$ with%
\begin{equation}
\mathbf{Y}_{R}=\mathbf{H}_{AR}\mathbf{X}_{A}+\mathbf{H}_{BR}\mathbf{X}_{B}+%
\mathbf{Z}_{R},  \label{YR}
\end{equation}%
where $\mathbf{Z}_{R}\in \mathcal{%
\mathbb{C}
}^{n_{R}\times T}$ denotes the additive white Gaussian noise (AWGN) at the
relay. We assume that the elements of $\mathbf{Z}_{R}$ are independent and
identically drawn from $\mathcal{N}_{c}(0,N_{0})$. Upon receiving $\mathbf{Y}%
_{R}$, the relay generates a signal matrix $\mathbf{X}_{R}\in \mathcal{%
\mathbb{C}
}^{n_{R}\times T}$.

In the second time-slot (referred to as \textit{downlink phase}), $\mathbf{X}%
_{R}$ is broadcast to the two users. The average power at the relay is
constrained as $\frac{1}{T}E\left[ \left\Vert \mathbf{X}_{R}\right\Vert
_{F}^{2}\right] \leq P_{R}.$ The signal matrix received by user $m$ is
denoted by $\mathbf{Y}_{m}\in \mathcal{%
\mathbb{C}
}^{n_{m}\times T},m\in \left\{ A,B\right\} $, with
\begin{equation}
\mathbf{Y}_{m}=\mathbf{H}_{Rm}\mathbf{X}_{R}+\mathbf{Z}_{m},m\in \left\{
A,B\right\} ,
\end{equation}%
where $\mathbf{Z}_{m}$ denotes the AWGN matrix at user $m$, with the
elements independently drawn from $\mathcal{N}_{c}(0,N_{0})$.

\subsection{Definition of Achievable Rates}

For the system model described above, denote the message of user $m$ by $%
W_{m}\in \{1,2,...,2^{2TR_{m}}\}$. The cardinality of $W_{m}$ is given by $%
2^{2TR_{m}}$, where the factor of $2T$ is because each round of information
exchange consists of two length-$T$ time-slots. At user $A$, the estimated
message of user $B$, denoted by $\hat{W}_{B}$, is obtained from the received
signal $\mathbf{Y}_{A}$ and the perfect knowledge of the self message $W_{A}$%
. The decoding operation at user $B$ is similar. The error probability is
defined as $P_{e}\triangleq \Pr \{\hat{W}_{A}\neq W_{A}$\ or $\hat{W}%
_{B}\neq W_{B}\}$\textbf{. }We say that a rate-pair $(R_{A},R_{B})$ is
achievable if the error probability $P_{e}$\ vanishes as $T$ tends to
infinity. The achievable rate-region is defined as the closure of all
possible achievable rate-pairs.

\subsection{Capacity Upper Bound}

Here we briefly describe a capacity upper bound of the MIMO TWRC. Let $%
\mathbf{Q}_{m}\triangleq \frac{1}{T}E\left[ \mathbf{X}_{m}\mathbf{X}%
_{m}^{\dagger }\right] ,m\in \left\{ A,B,R\right\},$ be the input covariance
matrices. For given \{$\mathbf{Q}_{A}$, $\mathbf{Q}_{B}$, $\mathbf{Q}_{R}$\}
satisfying tr$(\mathbf{Q}_{m})\leq P_{m},m\in \{A,B,R\} $, the achievable
rate-pairs $(R_{A},R_{B})$ of the MIMO TWRC is upper bounded as \cite%
{HJYangIT11}
\begin{subequations}
\label{rate}\begin{eqnarray}
R_{A} &\leq &\min \left\{ R_{A}^{UL}\left( \mathbf{Q}_{A}\right)
,R_{A}^{DL}\left( \mathbf{Q}_{R}\right) \right\}   \\
R_{B} &\leq &\min \left\{ R_{B}^{UL}\left( \mathbf{Q}_{B}\right)
,R_{B}^{DL}\left( \mathbf{Q}_{R}\right) \right\}
\end{eqnarray}%
\end{subequations}
where%
\begin{subequations}\label{Outer Bound}
\begin{eqnarray}
R_{m}^{UL}\left( \mathbf{Q}_{m}\right) &=&\frac{1}{2}\log \left\vert \mathbf{%
I}_{n_{R}}+\frac{1}{N_{0}}\mathbf{H}_{mR}\mathbf{Q}_{m}\mathbf{H}%
_{mR}^{\dagger }\right\vert ,m\in \left\{ A,B\right\}   \\
R_{A}^{DL}\left( \mathbf{Q}_{R}\right) &=&\frac{1}{2}\log \left\vert \mathbf{%
I}_{n_{B}}+\frac{1}{N_{0}}\mathbf{H}_{RB}\mathbf{Q}_{R}\mathbf{H}%
_{RB}^{\dagger }\right\vert , \\
R_{B}^{DL}\left( \mathbf{Q}_{R}\right) &=&\frac{1}{2}\log \left\vert \mathbf{%
I}_{n_{A}}+\frac{1}{N_{0}}\mathbf{H}_{RA}\mathbf{Q}_{R}\mathbf{H}%
_{RA}^{\dagger }\right\vert .
\end{eqnarray}%
\end{subequations}
Here, the superscripts \textquotedblleft UL\textquotedblright\ and
\textquotedblleft DL\textquotedblright\ respectively represent uplink and
downlink, and the factor of 1/2 is due to the two time-slots used for each
round of information exchange.

A capacity-region outer bound is defined as the closure of the upper-bound
rate-pairs in (\ref{rate}). This outer bound can be determined by optimizing
$\mathbf{Q}_{A}$, $\mathbf{Q}_{B}$, and $\mathbf{Q}_{R}$, as detailed in
\cite{HJYangIT11} and \cite{YangIT11}. The goal of this paper is to develop
a communication strategy that can approach this outer bound.

\section{Relaying Strategies for TWRCs with Single-Antenna Users}

In this section, we study efficient communications over TWRCs with
single-antenna users and a multi-antenna relay, i.e., $n_{A}=n_{B}=1$ and $%
n_{R}\geq 1$. The results developed in this section will be used in our
studies on general MIMO TWRCs later.

\subsection{Relaying Strategies: Complete Decoding versus PNC}

For the case of single-antenna users, the channel model of the uplink phase
in (\ref{YR}) can be simplified as%
\begin{equation}
\mathbf{Y}_{R}=\mathbf{h}_{AR}\mathbf{x}_{A}^{T}+\mathbf{h}_{BR}\mathbf{x}%
_{B}^{T}+\mathbf{Z}_{R}  \label{Vector}
\end{equation}%
where $\mathbf{h}_{mR}\in \mathcal{%
\mathbb{C}
}^{n_{R}\times 1}$ is the reduced version of $\mathbf{H}_{mR}$, and $\mathbf{%
x}_{m}\in \mathcal{%
\mathbb{C}
}^{T\times 1}$ is the transmit signal vector of user $m$, with the $i$th
entry of $\mathbf{x}_{m}$ being the signal transmitted at the $i$th time
interval, $m\in \left\{ A,B\right\} $. Upon receiving $\mathbf{Y}_{R}$, the
relay generates a network-coded message following the spirit of network
coding. This network-coded message will be forwarded to the two users in the
downlink phase.

The relay operation is detailed as follows.
In the uplink phase, the signal direction of user $m$ is given by $\mathbf{h}%
_{mR},$ $m\in \left\{ A,B\right\} $. On one hand, if $\mathbf{h}_{AR}$ and $%
\mathbf{h}_{BR}$ are orthogonal, both messages of the two users can be
decoded free of interference from each other. The recovered messages of the
two users are then network-coded and forwarded to the two users. We refer to
this first strategy as the \textit{complete-decoding (CD) strategy}. On the
other hand, if $\mathbf{h}_{AR}$ and $\mathbf{h}_{BR}$ turn out to be
parallel (i.e., in a same direction), then it is advantageous to compute a
linear function of $\mathbf{x}_{A}$ and $\mathbf{x}_{B}$, referred to as a
network-coded message, instead of completely decoding both $\mathbf{x}_{A}$
and $\mathbf{x}_{B}$. We refer to this second strategy as the \textit{PNC
strategy}.

In general, the following strategy can be adopted: if $\mathbf{h}_{AR}$ and $%
\mathbf{h}_{BR}$ tend to be orthogonal, the complete-decoding strategy is
applied; if $\mathbf{h}_{AR}$ and $\mathbf{h}_{BR}$ tend to be parallel, the
PNC strategy is applied. The selection between these two strategies is based
on their achievable rates, as described below.

\subsubsection{The Complete-Decoding Strategy}

For complete-decoding, the uplink channel in (\ref{Vector}) becomes a
multiple-access (MAC) channel. Let $R_{m}^{CD},m\in \left\{ A,B\right\} $,
be the rate of user $m$ for the complete-decoding strategy. Then, the uplink
rate-region of the complete-decoding strategy, denoted by $\mathcal{R}%
_{{}}^{CD}$, is given by%
\begin{subequations}\label{MAC}
\begin{eqnarray}
R_{A}^{CD}+R_{B}^{CD} &\leq &\frac{1}{2}\log \left\vert \mathbf{I}%
_{n_{R}}+\dsum\limits_{m\in \left\{ A,B\right\} }\frac{P_{m}}{N_{0}}\mathbf{h%
}_{mR}\mathbf{h}_{mR}^{\dagger }\right\vert   \\
R_{m}^{CD} &\leq &\frac{1}{2}\log \left( 1+\frac{P_{m}}{N_{0}}\mathbf{h}%
_{mR}^{\dagger }\mathbf{h}_{mR}\right) ,m\in \left\{ A,B\right\} .
\end{eqnarray}%
\end{subequations}
which follows from the well-known MAC capacity region \cite{Cover91}.

\subsubsection{The PNC Strategy}

For the PNC strategy in \cite{NamIT10}\cite{WilsonIT10}, it is required that
the two user-signals lie in a same spatial direction. This is not guaranteed
here due to the availability of multiple antennas at the relay. To
facilitate PNC, we next propose a projection-based method as follows. The
signals from the two users are first projected to a common direction,
denoted by a unit vector $\mathbf{p}\in \mathcal{%
\mathbb{C}
}^{n_{R}\times 1}$. The choice of $\mathbf{p}$ will be discussed in the next
subsection. This projection operation creates an aligned scalar channel
given by%
\begin{equation}
\mathbf{p}^{\dag }\mathbf{Y}_{R}=\mathbf{p}^{\dag }\mathbf{h}_{AR}\mathbf{x}%
_{A}^{T}+\mathbf{p}^{\dag }\mathbf{h}_{BR}\mathbf{x}_{B}^{T}+\mathbf{p}%
^{\dag }\mathbf{Z}_{R}
\end{equation}%
with the effective channel coefficients given by $\mathbf{p}^{\dag }\mathbf{h%
}_{AR}$ and $\mathbf{p}^{\dag }\mathbf{h}_{BR}$.

From \cite{HJYangIT11}, if the sum of the two users' codewords, i.e., $%
\mathbf{p}^{\dag }\mathbf{h}_{AR}\mathbf{x}_{A}^{T}+\mathbf{p}^{\dag }%
\mathbf{h}_{BR}\mathbf{x}_{B}^{T}$, is required to be decoded, an achievable
rate-pair ($R_{A}^{PNC},R_{B}^{PNC}$) of the uplink phase is given by%
\begin{equation}
R_{m}^{PNC}=\frac{1}{2}\left[ \log \left( \frac{Q_{m}\left\vert \mathbf{p}%
^{\dag }\mathbf{h}_{mR}\right\vert ^{2}}{N_{0}}\right) \right] ^{+},m\in
\left\{ A,B\right\} ,  \label{R_PNC}
\end{equation}%
where $Q_{m}=\frac{1}{T}E[\mathbf{x}_{m}^{\dagger }\mathbf{x}_{m}]$
represents the transmission power of user $m$. If $\mathbf{p}^{\dag }\mathbf{%
h}_{AR}\mathbf{x}_{A}^{T}+\mathbf{p}^{\dag }\mathbf{h}_{BR}\mathbf{x}%
_{B}^{T} $ is not necessarily decoded, the above rate-pair can be further
improved by using the lattice-modulo operation and minimum mean-square error
(MMSE) scaling \cite{NamIT10}\cite{ErezIT04}, with the achievable rate-pair
given by%
\begin{equation}
R_{m}^{PNC}=\frac{1}{2}\left[ \log \left( \frac{Q_{m}\left\vert \mathbf{p}%
^{\dag }\mathbf{h}_{mR}\right\vert ^{2}}{Q_{A}\left\vert \mathbf{p}^{\dag }%
\mathbf{h}_{AR}\right\vert ^{2}+Q_{B}\left\vert \mathbf{p}^{\dag }\mathbf{h}%
_{BR}\right\vert ^{2}}+\frac{Q_{m}\left\vert \mathbf{p}^{\dag }\mathbf{h}%
_{mR}\right\vert ^{2}}{N_{0}}\right) \right] ^{+},m\in \left\{ A,B\right\} .
\label{R_PNC2}
\end{equation}%
Notice that (\ref{R_PNC}) and (\ref{R_PNC2}) become identical and both
approaches the uplink channel capacity at high SNR.

The uplink rate-region of the PNC scheme is given by%
\begin{equation}
\mathcal{R}_{{}}^{PNC}\triangleq \left\{ (R_{A},R_{B})|R_{m}\leq
R_{m}^{PNC},Q_{m}\leq P_{m},m\in \left\{ A,B\right\} ,\mathbf{p}^{\dag }%
\mathbf{p=1}\right\} .  \label{Eq PNC SIMO}
\end{equation}%
The boundary of $\mathcal{R}_{{}}^{PNC}$ can be found by optimizing $Q_{A}$,
$Q_{B}$, and $\mathbf{p}$, as detailed in the next subsection.

\subsection{Optimization of Projection Direction}

Here we focus on the rate-pair given in (\ref{R_PNC}).\footnote{%
The treatment for the rate-pair in (\ref{R_PNC2}) is similar, and thus
omitted.} As achievable rate-regions are convex, the boundary points of $%
\mathcal{R}_{{}}^{PNC}$ can be determined by solving the weighted sum-rate
maximization problem:%
\begin{subequations}\label{P1}
\begin{eqnarray}
\text{maximize} &&w_{A}R_{A}^{PNC}+w_{B}R_{B}^{PNC} \\
\text{subject to} &&||\mathbf{p}||=1,Q_{m}\leq P_{m},m\in \left\{ A,B\right\}
\end{eqnarray}%
\end{subequations}
where $w_{A}$ and $w_{B}$ are arbitrary nonnegative weighting coefficients.
By inspecting (\ref{R_PNC}), the maximum of (\ref{P1}) is achieved at $%
Q_{m}=P_{m},m\in \left\{ A,B\right\} $. Thus, we only need to optimize $%
\mathbf{p}$.

Suppose that $R_{A}^{PNC}=0$ (or $R_{B}^{PNC}=0$). Then, from (\ref{R_PNC}),
the optimal $\mathbf{p}$ is trivially taken as $\frac{\mathbf{h}_{BR}}{%
\left\Vert \mathbf{h}_{BR}\right\Vert }$ (or $\frac{\mathbf{h}_{AR}}{%
\left\Vert \mathbf{h}_{AR}\right\Vert }$). Thus, we focus on the case of $%
R_{m}^{PNC}>0,m\in \left\{ A,B\right\} $. In this case, this weighted
sum-rate maximization problem is equivalent to maximizing%
\begin{equation}
\max_{||\mathbf{p}||=1}w_{A}\log \left( \left\vert \mathbf{p}^{\dag }\mathbf{%
h}_{AR}\right\vert ^{2}\right) +w_{B}\log \left( \left\vert \mathbf{p}^{\dag
}\mathbf{h}_{BR}\right\vert ^{2}\right)
\end{equation}%
or equivalently%
\begin{equation*}
\max_{||\widetilde{\mathbf{p}}||=1}\left| \widetilde{\mathbf{h}}_{AR}^{T}\widetilde{\mathbf{%
p}}\right| ^{2w_{A}}\left| \widetilde{\mathbf{h}}_{BR}^{T}\widetilde{\mathbf{p}}\right| ^{2w_{B}}
\end{equation*}%
where $\widetilde{\mathbf{p}}=[\func{Re}[\mathbf{p}^{T}],\func{Im}[\mathbf{p}%
^{T}]]^{T}$ and $\widetilde{\mathbf{h}}_{mR}=[\func{Re}[\mathbf{h}_{mR}^{T}],%
\func{Im}[\mathbf{h}_{mR}^{T}]]^{T}$, $m\in \left\{ A,B\right\} $.
By setting the derivative of the Lagrangian with respect to $\textbf{p}$ to zero, the optimal $\textbf{p}$ satisfies%
\begin{equation}
\alpha \widetilde{\mathbf{p}}=\frac{w_{A}\widetilde{\mathbf{h}}_{AR}}{%
\widetilde{\mathbf{p}}^{T}\widetilde{\mathbf{h}}_{AR}}+\frac{w_{B}\widetilde{%
\mathbf{h}}_{BR}}{\widetilde{\mathbf{p}}^{T}\widetilde{\mathbf{h}}_{BR}}, \notag
\end{equation}%
where $\alpha$ is a scaling factor. Then, with some straightforward algebra, we obtain the optimal projection direction given by
\begin{equation}
\widetilde{\mathbf{p}}^{opt}=\gamma \left( \frac{\widetilde{\mathbf{h}}_{AR}%
}{\left\Vert \widetilde{\mathbf{h}}_{AR}\right\Vert }+\beta \frac{\widetilde{%
\mathbf{h}}_{BR}}{\left\Vert \widetilde{\mathbf{h}}_{BR}\right\Vert }\right)
,  \label{d}
\end{equation}%
where%
\begin{equation}
\beta =\frac{\text{sign}(\widetilde{\mathbf{h}}_{AR}^{T}\widetilde{\mathbf{h}%
}_{BR})}{2}\left( \sqrt{\left( \frac{\widetilde{\mathbf{h}}_{AR}^{T}%
\widetilde{\mathbf{h}}_{BR}\left( 1-\frac{w_{B}}{w_{A}}\right) }{\left\Vert
\widetilde{\mathbf{h}}_{AR}\right\Vert \left\Vert \widetilde{\mathbf{h}}%
_{BR}\right\Vert }\right) ^{2}+4\frac{w_{B}}{w_{A}}}-\frac{\left\vert
\widetilde{\mathbf{h}}_{AR}^{T}\widetilde{\mathbf{h}}_{BR}\right\vert \left(
1-\frac{w_{B}}{w_{A}}\right) }{\left\Vert \widetilde{\mathbf{h}}%
_{AR}\right\Vert \left\Vert \widetilde{\mathbf{h}}_{BR}\right\Vert }\right)
\end{equation}%
and $\gamma $ is a scaling factor to ensure $||\widetilde{\mathbf{p}}^{opt}||$ $=1$. Particularly, for
the sum-rate case, i.e., $w_{A}=w_{B}=1$, the optimal projection direction $%
\widetilde{\mathbf{p}}$ is just the angular bisector of $\widetilde{\mathbf{h%
}}_{AR}$ and $\widetilde{\mathbf{h}}_{BR}$ if $\widetilde{\mathbf{h}}%
_{AR}^{T}\widetilde{\mathbf{h}}_{BR}>0$, or the angular bisector of $%
\widetilde{\mathbf{h}}_{AR}$ and $-\widetilde{\mathbf{h}}_{BR}$ if $%
\widetilde{\mathbf{h}}_{AR}^{T}\widetilde{\mathbf{h}}_{BR}<0$. By varying $%
w_{A}$ and $w_{B}$, $\mathcal{R}_{{}}^{PNC}$ can be determined.

\subsection{The Overall Scheme}

For the uplink phase, the achievable rate-regions $\mathcal{R}_{{}}^{CD}$
and $\mathcal{R}_{{}}^{PNC}$, for certain channel realizations of $\mathbf{h}%
_{AR}$ and $\mathbf{h}_{BR}$, are depicted in Fig. \ref{SIMO}. The overall
uplink rate-region, denoted by $\mathcal{R}_{{}}^{UL}$, is given by the
convex hull of $\mathcal{R}_{{}}^{CD}$ and $\mathcal{R}_{{}}^{PNC}$. In the
overall scheme, the relay will select between the complete-decoding and PNC
strategies for a larger achievable rate-region, according to (\ref{MAC}) and
(\ref{Eq PNC SIMO}).

For the downlink phase, the achievable rate-region is determined as follows.
For the complete-decoding strategy, the relay jointly re-encode the decoded
messages $\mathbf{x}_{A}$ and $\mathbf{x}_{B}$, and forward the resulting
codeword to the two users in the downlink. For the PNC strategy, the relay
forward the lattice-modulo of $\mathbf{p}^{\dag }\mathbf{h}_{AR}\mathbf{x}%
_{A}^{T}+\mathbf{p}^{\dag }\mathbf{h}_{BR}\mathbf{x}_{B}^{T}$, referred to
as a network-coded message, to the two users. Then each user recovers the
message of the other user with the help of the perfect knowledge of self
message. From \cite{HJYangIT11}-\cite{KhinaISIT11}, the downlink rate-region
$\mathcal{R}_{{}}^{DL}$ for the two strategies are the same and given by
\begin{equation}
\mathcal{R}_{{}}^{DL}\triangleq \left\{ (R_{A},R_{B})|R_{A}\leq
R_{A}^{DL},R_{B}\leq R_{B}^{DL}\right\}
\end{equation}%
with%
\begin{equation}
R_{A}^{DL}=\frac{1}{2}\log \left( 1+\frac{P_{R}}{N_{0}}\mathbf{h}%
_{RB}^{\dagger }\mathbf{h}_{RB}\right) \text{ and }R_{B}^{DL}=\frac{1}{2}%
\log \left( 1+\frac{P_{R}}{N_{0}}\mathbf{h}_{RA}^{\dagger }\mathbf{h}%
_{RA}\right) .
\end{equation}%
Finally, an achievable rate-region of the overall scheme is the intersection
of the uplink rate-region $\mathcal{R}_{{}}^{UL}$ and the downlink
rate-region $\mathcal{R}_{{}}^{DL}$.

\section{Space-Division Approach for MIMO TWRCs}

In the preceding section, we have studied the design of relaying strategies
for TWRCs with single-antenna users. We have shown how to exploit the
benefits of the complete-decoding and PNC strategies. In this section, we
proceed to study the general case of {$n_{A}\geq 1$, $n_{B}\geq 1$}. We
propose a new space-division based network-coding scheme, as a
generalization for the case of single-antenna users.

\subsection{Motivations}

What motivates the proposed space-division approach is the following
property of $\mathbf{H}_{AR}$ and $\mathbf{H}_{BR}$. Denote by $\mathcal{C}(\mathbf{H}_{AR})$ and $\mathcal{C}(\mathbf{H}%
_{BR})$ the columnspaces of the uplink channel matrices $\mathbf{H}_{AR}$ and $\mathbf{H}_{BR}$,
respectively. In general, we can partition the columnspace $\mathcal{C}(\mathbf{H}%
_{AR})$ $\in \mathcal{%
\mathbb{C}
}^{n_{R}}$ as the direct sum\footnote{%
Let $\mathcal{S}$ be a vector space, and let $\mathcal{S}_{1},\mathcal{S}%
_{2},...,\mathcal{S}_{n}$ be subspaces of $\mathcal{S}$. $\mathcal{S}$
is defined to be a direct sum of $\mathcal{S}_{1},\mathcal{S}_{2},...,%
\mathcal{S}_{n}$ when $\mathcal{S}_{1},\mathcal{S}_{2},...,\mathcal{S}_{n}$
are mutually orthogonal and for every vector $\mathbf{x}$ in $\mathcal{S}$,
there is $\mathbf{x}_{i}$ in $\mathcal{S}_{i},i=1,2,...,n,$ such that $%
\mathbf{x=}\dsum\limits_{i=1}^{n}\mathbf{x}_{i}$.} of three orthogonal
subspaces: a subspace $\mathcal{S}_{A\parallel B}$ that is parallel to $%
\mathcal{C}(\mathbf{H}_{BR})$, i.e., any vector in $\mathcal{S}_{A\parallel
B}$ belongs to $\mathcal{C}(\mathbf{H}_{BR})$; a subspace $\mathcal{S}%
_{A\nparallel B}$ that is neither parallel nor orthogonal to $\mathcal{C}(%
\mathbf{H}_{BR})$; and a subspace $\mathcal{S}_{A\perp B}$ that is
orthogonal to $\mathcal{C}(\mathbf{H}_{BR})$. Similarly, $\mathcal{C}(%
\mathbf{H}_{BR})$ is the direct sum of three orthogonal subspaces $\mathcal{S%
}_{B||A},\mathcal{S}_{B\nparallel A},$ and $\mathcal{S}_{B\perp A}$. Note
that $\mathcal{S}_{A\parallel B}=\mathcal{S}_{B||A}$ since both represent
the \textit{common space} of $\mathcal{C}(\mathbf{H}_{AR})$ and $\mathcal{C}(%
\mathbf{H}_{BR})$.

In $\mathcal{S}_{A\parallel B}$, the signal directions of the two users can
be efficiently aligned to a same set of directions, providing a platform to
carry out PNC, as in \cite{HJYangIT11}-\cite{KhinaISIT11}. On the other
hand,\ in $\mathcal{S}_{A\perp B}$ or $\mathcal{S}_{B\perp A}$, the two
users do not interfere with each other, hence the complete-decoding strategy
can be employed. The above treatments are similar to those for the case of
single-antenna users, as discussed in the preceding section. What remains is
the treatment for the signals in $\mathcal{S}_{A\nparallel B}$ and $\mathcal{%
S}_{B\nparallel A}$ that are neither parallel nor orthogonal. Heuristically,
some channel directions in $\mathcal{S}_{A\nparallel B}$ and $\mathcal{S}%
_{B\nparallel A}$ may be nearly parallel to each other. For these channel
directions, the PNC strategy is preferable for the related spatial streams.
On the other hand, some channel directions in $\mathcal{S}_{A\nparallel B}$
and $\mathcal{S}_{B\nparallel A}$ may be nearly orthogonal to each other.
Then, the complete-decoding strategy is preferable. The main challenge lies
in how to identify those nearly parallel/orthogonal channel directions. To
this end, we next propose a new joint channel decomposition.

\subsection{Joint Channel Decomposition}

Let the compact singular value decomposition (SVD) of $\mathbf{H}_{mR}$ be%
\begin{equation}
\mathbf{H}_{mR}=\mathbf{U}_{m}\mathbf{\Delta }_{m}\mathbf{V}_{m}^{\dag },%
\mathbf{\ }m\in \left\{ A,B\right\}  \label{Channel_Decomp}
\end{equation}%
where $\mathbf{U}_{m}$ is an $n_{R}$-by-$n_{m}$ orthonormal matrix with $%
\mathbf{U}_{m}^{\dag }\mathbf{U}_{m}=\mathbf{I}_{n_{m}}$.


Denote by $\lambda _{i}$ the $i$th eigenvalue of the matrix $\mathbf{U}_{A}%
\mathbf{U}_{A}^{\dag }+\mathbf{U}_{B}\mathbf{U}_{B}^{\dag }$, and by $%
\mathbf{u}_{i}$ the corresponding eigenvector. Without loss of generality,
we arrange \{$\lambda _{i}$\} in the descending order. As the eigenvalues of
$\mathbf{U}_{m}\mathbf{U}_{m}^{\dag }$ are either 1 or 0, the eigenvalues \{$%
\lambda _{i}$\} are valued between $0$ and $2$. Note that $\lambda
_{i}=2$ implies that $\mathbf{U}_{A}\mathbf{U}_{A}^{\dag }\mathbf{u}_{i}%
\mathbf{=u}_{i}$ and $\mathbf{U}_{B}\mathbf{U}_{B}^{\dag }\mathbf{u}_{i}%
\mathbf{=u}_{i}$. This means that $\mathbf{u}_{i}$ is in the common space $%
\mathcal{S}_{A\parallel B}$ of $\mathbf{H}_{AR}$ and $\mathbf{H}_{BR}$. Also note
that $\lambda _{i}=1$ implies \{$\mathbf{U}_{A}\mathbf{U}_{A}^{\dag }\mathbf{u}_{i}\mathbf{=u}_{i}$,
 $\mathbf{U}_{B}\mathbf{U}_{B}^{\dag }\mathbf{u}_{i}\mathbf{=0}$\} or \{$%
\mathbf{U}_{A}\mathbf{U}_{A}^{\dag }\mathbf{u}_{i}\mathbf{=0}$, $\mathbf{U%
}_{B}\mathbf{U}_{B}^{\dag }\mathbf{u}_{i}\mathbf{=u}_{i}$\} (cf., Theorem 4.3.4 in \cite{HornTextbook}). That is, $\mathbf{%
u}_{i}$ is in $\mathcal{S}_{A\perp B}$ (or $\mathcal{S}_{B\perp A}$) which
is orthogonal to the space spanned by $\mathbf{H}_{BR}$ (or $\mathbf{H}_{AR}$%
). In addition, the number of eigenvalues between 1 and 2 is
the same as that between 0 and 1, as we will see later.

Let $k$ be the number of eigenvalues of $\mathbf{U}_{A}\mathbf{U}_{A}^{\dag
}+\mathbf{U}_{B}\mathbf{U}_{B}^{\dag }$ equal to 2; $l$ be the number of
eigenvalues between 1 and 2; $d_{A}$ be the number of eigenvalues equal to 1
with \{$\mathbf{U}_{A}\mathbf{U}_{A}^{\dag }\mathbf{u}_{i}\mathbf{=u}_{i}$, $%
\mathbf{U}_{B}\mathbf{U}_{B}^{\dag }\mathbf{u}_{i}\mathbf{=0}$\}; $d_{B}$ be
the number of eigenvalues equal to 1 with \{$\mathbf{U}_{A}\mathbf{U}_{A}^{\dag }%
\mathbf{u}_{i}\mathbf{=0}$, $\mathbf{U}_{B}\mathbf{U}_{B}^{\dag }\mathbf{u}%
_{i}\mathbf{=u}_{i}$\}. Also let $\mathbf{U}\in $ $\mathcal{%
\mathbb{C}
}^{n_{R}\times (n_{A}+n_{B}-k)}$ be a matrix with the columns consisting of
the eigenvectors corresponding to the $n_{A}+n_{B}-k$ largest eigenvalues of
$\mathbf{U}_{A}\mathbf{U}_{A}^{\dag }+\mathbf{U}_{B}\mathbf{U}_{B}^{\dag }$
(as specified in (\ref{Def_U})). Due to the orthogonality of the
eigenvectors, $\mathbf{U}$ is orthonormal, i.e., $\mathbf{U}_{{}}^{\dag }%
\mathbf{U}=\mathbf{I}_{n_{A}+n_{B}-k}$.

We are now ready to present the joint channel decomposition, with the proof
given in Appendix A.

\begin{theo}
\label{Theorem 1}The channel matrices\textit{\ }$\mathbf{H}_{AR}$ and $%
\mathbf{H}_{BR}$\ can be jointly decomposed as%
\begin{equation}
\mathbf{H}_{mR}=\mathbf{UD}_{m}\mathbf{G}_{m},\mathbf{\ }m\in \left\{
A,B\right\}  \label{Hm}
\end{equation}%
where $\mathbf{G}_{m}\in $ $\mathcal{%
\mathbb{C}
}^{n_{m}\times n_{m}}$ is a square matrix, and $\mathbf{D}_{m}\in \mathcal{%
\mathbb{C}
}^{(n_{A}+n_{B}-k)\times n_{m}},m\in \left\{ A,B\right\} ,$ are orthonormal
matrices with a block-diagonal structure given by%
\begin{subequations}
\begin{equation}
\mathbf{D}_{A}=\left[
\begin{array}{ccc}
\mathbf{I}_{k} & \mathbf{0} & \mathbf{0} \\
\mathbf{0} & \mathbf{E}_{A} & \mathbf{0} \\
\mathbf{0} & \mathbf{0} & \mathbf{I}_{d_{A}} \\
\mathbf{0} & \mathbf{0} & \mathbf{0}%
\end{array}%
\right] \text{ and }\mathbf{D}_{B}=\left[
\begin{array}{ccc}
\mathbf{I}_{k} & \mathbf{0} & \mathbf{0} \\
\mathbf{0} & \mathbf{E}_{B} & \mathbf{0} \\
\mathbf{0} & \mathbf{0} & \mathbf{0} \\
\mathbf{0} & \mathbf{0} & \mathbf{I}_{d_{B}}%
\end{array}%
\right]  \label{Dm}
\end{equation}%
with
\begin{equation}
\mathbf{E}_{m}=\left[
\begin{array}{cccc}
\mathbf{e}_{m;k+1} & \mathbf{0} & \cdots & \mathbf{0} \\
\mathbf{0} & \mathbf{e}_{m;k+2} & \ddots & \vdots \\
\vdots & \ddots & \ddots & \mathbf{0} \\
\mathbf{0} & \cdots & \mathbf{0} & \mathbf{e}_{m;k+l}%
\end{array}%
\right] \in \mathcal{%
\mathbb{C}
}^{2l\times l}\text{, }  \label{E}
\end{equation}%
\begin{equation}
\mathbf{e}_{A;i}=\left[
\begin{array}{c}
\sqrt{\frac{\lambda _{i}}{2}} \\
\sqrt{\frac{2-\lambda _{i}}{2}}%
\end{array}%
\right] \text{ and }\mathbf{e}_{B;i}=\left[
\begin{array}{c}
\sqrt{\frac{\lambda _{i}}{2}} \\
-\sqrt{\frac{2-\lambda _{i}}{2}}%
\end{array}%
\right] .  \label{Direction}
\end{equation}
\end{subequations}
\end{theo}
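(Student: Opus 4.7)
The plan is to reduce Theorem~\ref{Theorem 1} to a spectral analysis of the Hermitian operator $\mathbf{M}=\mathbf{P}_{A}+\mathbf{P}_{B}$, where $\mathbf{P}_{m}=\mathbf{U}_{m}\mathbf{U}_{m}^{\dag}$ is the orthogonal projector onto $\mathcal{C}(\mathbf{H}_{mR})$. Note that in the SVD (\ref{Channel_Decomp}) we are free to replace $\mathbf{U}_{m}$ by any other orthonormal basis $\widetilde{\mathbf{U}}_{m}$ of $\mathcal{C}(\mathbf{H}_{mR})$, provided the change-of-basis unitary is absorbed on the right into $\mathbf{G}_{m}$ together with $\mathbf{\Delta}_{m}\mathbf{V}_{m}^{\dag}$. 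Hence the task is to exhibit an orthonormal basis $\widetilde{\mathbf{U}}_{m}$ of $\mathcal{C}(\mathbf{H}_{mR})$ whose coordinate matrix $\mathbf{U}^{\dag}\widetilde{\mathbf{U}}_{m}$ in the eigenbasis of $\mathbf{M}$ has exactly the block form of $\mathbf{D}_{m}$ in (\ref{Dm})--(\ref{Direction}); once this is done, $\mathbf{H}_{mR}=\mathbf{U}(\mathbf{U}^{\dag}\widetilde{\mathbf{U}}_{m})\mathbf{G}_{m}=\mathbf{U}\mathbf{D}_{m}\mathbf{G}_{m}$ follows from $\mathcal{C}(\widetilde{\mathbf{U}}_{m})\subseteq\mathcal{C}(\mathbf{U})$.

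I would first catalogue the spectrum of $\mathbf{M}$. Its eigenvalues lie in $[0,2]$; the eigenvalue $2$ has multiplicity $k=\dim(\mathcal{C}(\mathbf{U}_A)\cap\mathcal{C}(\mathbf{U}_B))$, because $\mathbf{M}\mathbf{u}=2\mathbf{u}$ forces $\mathbf{P}_A\mathbf{u}=\mathbf{P}_B\mathbf{u}=\mathbf{u}$; and the range of $\mathbf{M}$ coincides with $\mathcal{C}(\mathbf{U}_A)+\mathcal{C}(\mathbf{U}_B)$, of dimension $n_A+n_B-k$, giving the column count of $\mathbf{U}$. The pairing of the remaining eigenvalues in $(0,2)\setminus\{1\}$ would follow from the anticommutation identity
\begin{equation*}
(\mathbf{P}_A-\mathbf{P}_B)(\mathbf{M}-\mathbf{I})+(\mathbf{M}-\mathbf{I})(\mathbf{P}_A-\mathbf{P}_B)=\mathbf{0},
\end{equation*}
obtained by direct expansion using $\mathbf{P}_m^{2}=\mathbf{P}_m$. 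This identity implies that for any unit eigenvector $\mathbf{u}$ of $\mathbf{M}$ with eigenvalue $\lambda\notin\{0,1,2\}$, the vector $\mathbf{v}=(\mathbf{P}_A-\mathbf{P}_B)\mathbf{u}$ is a nonzero eigenvector with eigenvalue $2-\lambda$; the companion identity $(\mathbf{P}_A-\mathbf{P}_B)^{2}=2\mathbf{M}-\mathbf{M}^{2}$ yields $\|\mathbf{v}\|^{2}=\lambda(2-\lambda)$, so $\mathbf{u}'=\mathbf{v}/\sqrt{\lambda(2-\lambda)}$ is an orthonormal companion, and the plane $\mathrm{span}\{\mathbf{u},\mathbf{u}'\}$ is invariant under both $\mathbf{P}_A$ and $\mathbf{P}_B$ individually.

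Writing $\mathbf{P}_A$ as a $2\times 2$ matrix on this invariant plane in the basis $(\mathbf{u},\mathbf{u}')$ gives a rank-$1$ Hermitian projector whose unit fixed vector is precisely
\begin{equation*}
\sqrt{\lambda/2}\,\mathbf{u}+\sqrt{(2-\lambda)/2}\,\mathbf{u}' \ \in\ \mathcal{C}(\mathbf{U}_A),
\end{equation*}
and $\mathbf{P}_B$ fixes the orthogonal combination $\sqrt{\lambda/2}\,\mathbf{u}-\sqrt{(2-\lambda)/2}\,\mathbf{u}'\in\mathcal{C}(\mathbf{U}_B)$; these are exactly the entries of $\mathbf{e}_{A;i}$ and $\mathbf{e}_{B;i}$ in (\ref{Direction}). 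Collecting (i) the $k$ common-space eigenvectors, (ii) the $2l$ pair eigenvectors arranged pair-by-pair, (iii) the $d_A$ eigenvalue-$1$ eigenvectors fixed by $\mathbf{P}_A$, and (iv) the $d_B$ eigenvalue-$1$ eigenvectors fixed by $\mathbf{P}_B$, into a single orthonormal matrix $\mathbf{U}\in\mathbb{C}^{n_R\times(n_A+n_B-k)}$, and using the corresponding unit vectors of $\mathcal{C}(\mathbf{U}_m)$ as $\widetilde{\mathbf{U}}_m$, reproduces the block forms of $\mathbf{D}_A$ and $\mathbf{D}_B$ in (\ref{Dm}) by inspection, completing the factorization.

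The main obstacle is to handle the case where an intermediate eigenvalue of $\mathbf{M}$ has multiplicity greater than one: the map $\mathbf{u}\mapsto(\mathbf{P}_A-\mathbf{P}_B)\mathbf{u}$ must be shown to be a bijection, up to the $\sqrt{\lambda(2-\lambda)}$ scaling, between the $\lambda$-eigenspace and the $(2-\lambda)$-eigenspace of $\mathbf{M}$, and the two eigenbases must be chosen compatibly so that the $2\times 2$ picture above applies pair-by-pair. Once that is secured, the remainder --- orthonormality of $\mathbf{U}$, the dimension count $n_A+n_B=2k+2l+d_A+d_B$ matching the row count of $\mathbf{U}$, and the fact that an eigenvalue-$1$ eigenvector lies in exactly one of $\mathcal{C}(\mathbf{U}_A),\mathcal{C}(\mathbf{U}_B)$ (invoking Theorem~4.3.4 in \cite{HornTextbook} as already cited) --- is routine bookkeeping.
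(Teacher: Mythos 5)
Your proposal follows essentially the same route as the paper's Appendix A: both diagonalize $\mathbf{M}=\mathbf{U}_A\mathbf{U}_A^{\dag}+\mathbf{U}_B\mathbf{U}_B^{\dag}$, pair each eigenvalue $\lambda\in(1,2)$ with $2-\lambda$, identify the two-dimensional invariant planes in which $\mathbf{P}_A$ and $\mathbf{P}_B$ each fix one direction (your $(\mathbf{P}_A-\mathbf{P}_B)\mathbf{u}_i$ is exactly the paper's $\mathbf{l}_{A;i}-\mathbf{l}_{B;i}$, with the same normalization $\sqrt{\lambda_i(2-\lambda_i)}$), and assemble $\mathbf{U}$, $\mathbf{D}_m$, and $\mathbf{G}_m$ in the same way. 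The only real difference is in packaging: you obtain the pairing, the norm, and the orthogonality of the planes from the operator identities $(\mathbf{P}_A-\mathbf{P}_B)(\mathbf{M}-\mathbf{I})=-(\mathbf{M}-\mathbf{I})(\mathbf{P}_A-\mathbf{P}_B)$ and $(\mathbf{P}_A-\mathbf{P}_B)^2=2\mathbf{M}-\mathbf{M}^2$, which also cleanly dispose of the repeated-eigenvalue issue you flag, whereas the paper proves the same facts coordinate-wise through its appendix lemmas.
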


\begin{rema}
From (\ref{Hm}), we see that $\mathcal{C}(\mathbf{U)}$ is the overall
columnspace of the two channel matrices, i.e., $\mathcal{C}(\mathbf{U)}=%
\mathcal{C}([\mathbf{H}_{AR}$ $\mathbf{H}_{BR}]\mathbf{)}$. Moreover, $%
\mathbf{UD}_{m}$ specifies the columnspace of $\mathbf{H}_{m}$, i.e., $%
\mathcal{C}(\mathbf{UD}_{m})=\mathcal{C}(\mathbf{H}_{m})$. Note that $%
\mathbf{UD}_{m}$ is orthonormal, as $\mathbf{D}_{m}^{\dag }\mathbf{U}^{\dag }%
\mathbf{UD}_{m}=\mathbf{I}_{n_{m}}$, $m\in \left\{ A,B\right\} $. Therefore,
the columns of $\mathbf{UD}_{m}$ give an orthogonal basis of $\mathcal{C}(%
\mathbf{H}_{mR})$, with the coordinates of $\mathbf{H}_{mR}$ specified in $%
\mathbf{G}_{m}$.
\end{rema}

\begin{rema}
The column structures of $\mathbf{UD}_{A}$ and $\mathbf{UD}_{B}$ are
explained as follows. In the first place, we note that $\mathbf{UD}_{A}$ and
$\mathbf{UD}_{B}$ share the same first $k$ columns. Thus, the first $k$
columns of $\mathbf{UD}_{A}$ span $\mathcal{S}_{A\parallel B}$, i.e., the
common space of $\mathcal{C}(\mathbf{H}_{AR})$ and $\mathcal{C}(\mathbf{H}%
_{BR})$. Second, from (\ref{Dm}), the last $d_{A}$ columns of $\mathbf{UD}%
_{A}$ (obtained from multiplying $\mathbf{U}$ with the third block column of
$\mathbf{D}_{A}$) are orthogonal to $\mathbf{UD}_{B}$. Hence, these columns
of $\mathbf{UD}_{A}$ span the subspace $\mathcal{S}_{A\perp B}$, i.e., the
subspace orthogonal to $\mathcal{C}(\mathbf{H}_{BR})$. Third, the remaining $%
l$ columns of $\mathbf{UD}_{A}$ span the subspace $\mathcal{S}_{A\nparallel
B}$, by noting the facts that $\mathcal{C}(\mathbf{H}_{AR}\mathbf{)}=%
\mathcal{C}(\mathbf{UD}_{A}\mathbf{)}$ and that $\mathcal{C}(\mathbf{H}%
_{AR}) $ is the direct-sum of three orthogonal subspaces $\mathcal{S}%
_{A\parallel B} $, $\mathcal{S}_{A\nparallel B}$, and $\mathcal{S}_{A\perp
B} $. Similarly, the first $k$ columns of $\mathbf{UD}_{B}$ span $\mathcal{S}%
_{A\parallel B}$, the next $l$ columns span $\mathcal{S}_{B\nparallel A}$,
and the last $d_{B}$ columns span $\mathcal{S}_{B\perp A}$. Recall that $%
\mathcal{C}(\mathbf{H}_{AR})$ is the direct sum of $\mathcal{S}_{A\parallel
B}$, $\mathcal{S}_{A\nparallel B}$, and $\mathcal{S}_{A\perp B}$, and that $%
\mathcal{C}(\mathbf{H}_{BR})$ is the direct sum of $\mathcal{S}_{B\parallel
A}$, $\mathcal{S}_{B\nparallel A}$, and $\mathcal{S}_{B\perp A}$. Thus, the
dimensions of these subspaces have the following relationship:
\begin{equation}
k+l+p_{m}=n_{m},m\in \left\{ A,B\right\} \text{.}
\end{equation}%
We summarize the geometrical meanings of the aforementioned subspaces and
their dimensions as follows.%
\begin{equation*}
\begin{tabular}{lll}
Subspace & Dimension & Property \\
$\mathcal{S}_{A\parallel B}$ & $k$ & $\text{common space of }\mathcal{C}(%
\mathbf{H}_{AR})\text{ and }\mathcal{C}(\mathbf{H}_{BR})$ \\
$\mathcal{S}_{A\nparallel B}$ & $l$ & not parallel/orthogonal to$\text{ }%
\mathcal{C}(\mathbf{H}_{BR})$ \\
$\mathcal{S}_{B\nparallel A}$ & $l$ & not parallel/orthogonal to$\text{ }%
\mathcal{C}(\mathbf{H}_{AR})$ \\
$\mathcal{S}_{A\perp B}$ & $d_{A}$ & orthogonal to$\text{ }\mathcal{C}(%
\mathbf{H}_{BR})$ \\
$\mathcal{S}_{B\perp A}$ & $d_{B}$ & orthogonal to$\text{ }\mathcal{C}(%
\mathbf{H}_{AR})$%
\end{tabular}%
\end{equation*}
\end{rema}

Let $\mathbf{v}_{m;i}$ be the $i$th column of $\mathbf{UD}_{m}$, $m\in
\left\{ A,B\right\} $. We refer to $\mathbf{v}_{A;i}$ and $\mathbf{v}_{B;i}$
as the $i$th \textit{channel direction pair}. Here, $\mathbf{v}%
_{A;i}^{^{\dag }}\mathbf{v}_{B;i}=1$ means that $\mathbf{v}_{A;i}$ and $%
\mathbf{v}_{B;i}$ are parallel, and $\mathbf{v}_{A;i}^{^{\dag }}\mathbf{v}%
_{B;i}=0$ means that they are orthogonal. Thus, $\mathbf{v}_{A;i}^{^{\dag }}%
\mathbf{v}_{B;i}$ can be regarded as a measure of the \textit{degree of
orthogonality }of $\mathbf{v}_{A;i}$ and $\mathbf{v}_{B;i}$. In the
following corollary, the degree of orthogonality of each channel direction
pair ($\mathbf{v}_{A;i}$, $\mathbf{v}_{B;i}$) is determined by the magnitude
of $\lambda _{i}$, i.e., the $i$th eigenvalue of $\mathbf{U}_{A}\mathbf{U}%
_{A}^{\dag }+\mathbf{U}_{B}\mathbf{U}_{B}^{\dag }$.

\begin{coro}
\label{Corollary 1}For $i=1,...,k+l$, the degree of orthogonality of the $i$%
th channel direction pair ($\mathbf{v}_{A,i}$, $\mathbf{v}_{B,i}$) is given
by $\mathbf{v}_{A,i}^{^{\dag }}\mathbf{v}_{B,i}=\lambda _{i}-1$.
\end{coro}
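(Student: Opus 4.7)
The plan is to reduce the inner product $\mathbf{v}_{A;i}^{\dag}\mathbf{v}_{B;i}$ to a diagonal entry of $\mathbf{D}_A^{\dag}\mathbf{D}_B$ and then read it off from the block structure given in Theorem~\ref{Theorem 1}.

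First I would write $\mathbf{v}_{m;i}=\mathbf{UD}_m\mathbf{e}_i$, where $\mathbf{e}_i$ is the $i$th standard basis vector of appropriate dimension. Then
\begin{equation*}
\mathbf{v}_{A;i}^{\dag}\mathbf{v}_{B;i}=\mathbf{e}_i^{T}\mathbf{D}_A^{\dag}\mathbf{U}^{\dag}\mathbf{U}\mathbf{D}_B\mathbf{e}_i=\bigl[\mathbf{D}_A^{\dag}\mathbf{D}_B\bigr]_{i,i},
\end{equation*}
where the second equality uses the orthonormality $\mathbf{U}^{\dag}\mathbf{U}=\mathbf{I}_{n_A+n_B-k}$. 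So the task reduces to computing the $i$th diagonal entry of $\mathbf{D}_A^{\dag}\mathbf{D}_B$ for $i=1,\dots,k+l$.

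Next I would split into two cases according to the block form of $\mathbf{D}_A$ and $\mathbf{D}_B$ in (\ref{Dm}). For $1\le i\le k$, both $\mathbf{D}_A$ and $\mathbf{D}_B$ have their $i$th column equal to the $i$th standard basis vector (the leading $\mathbf{I}_k$ block), so $[\mathbf{D}_A^{\dag}\mathbf{D}_B]_{i,i}=1$. On the other hand, the corresponding eigenvalue satisfies $\lambda_i=2$, so $\lambda_i-1=1$ and the identity holds in this range. For $k+1\le i\le k+l$, the $i$th columns of $\mathbf{D}_A$ and $\mathbf{D}_B$ lie in the $\mathbf{E}_A$ and $\mathbf{E}_B$ blocks of (\ref{E}), with the only nonzero entries being $\mathbf{e}_{A;i}$ and $\mathbf{e}_{B;i}$ supported on the same two rows. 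Hence
\begin{equation*}
\bigl[\mathbf{D}_A^{\dag}\mathbf{D}_B\bigr]_{i,i}=\mathbf{e}_{A;i}^{\dag}\mathbf{e}_{B;i}=\sqrt{\tfrac{\lambda_i}{2}}\sqrt{\tfrac{\lambda_i}{2}}-\sqrt{\tfrac{2-\lambda_i}{2}}\sqrt{\tfrac{2-\lambda_i}{2}}=\frac{\lambda_i-(2-\lambda_i)}{2}=\lambda_i-1,
\end{equation*}
using the explicit definitions in (\ref{Direction}).

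Combining the two cases yields $\mathbf{v}_{A;i}^{\dag}\mathbf{v}_{B;i}=\lambda_i-1$ for all $i=1,\dots,k+l$, as claimed. There is no real obstacle here beyond careful bookkeeping of block indices; the identity is essentially an immediate consequence of the block-diagonal structure established in Theorem~\ref{Theorem 1} together with the orthonormality of $\mathbf{U}$, and the only nontrivial step is verifying that the two nonzero entries in the $i$th columns of $\mathbf{D}_A$ and $\mathbf{D}_B$ are aligned in the same row positions so that their inner product collapses to $\mathbf{e}_{A;i}^{\dag}\mathbf{e}_{B;i}$.
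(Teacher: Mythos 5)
Your proposal is correct and follows essentially the same route as the paper's own proof: reduce $\mathbf{v}_{A;i}^{\dag}\mathbf{v}_{B;i}$ to the inner product of the $i$th columns of $\mathbf{D}_A$ and $\mathbf{D}_B$ via the orthonormality of $\mathbf{U}$, handle $i\le k$ trivially, and for $k<i\le k+l$ compute $\mathbf{e}_{A;i}^{\dag}\mathbf{e}_{B;i}=\lambda_i-1$ from the explicit form in (\ref{Direction}). Your bookkeeping through $[\mathbf{D}_A^{\dag}\mathbf{D}_B]_{i,i}$ is a slightly more systematic phrasing of the same argument.
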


\begin{proof}
For $i=1,...,k$, we see from (\ref{Dm}) that $\lambda _{i}=2$ and $\mathbf{v}%
_{A;i}=\mathbf{v}_{B;i}$, and so $\mathbf{v}_{A;i}^{\dag }\mathbf{v}%
_{B;i}=\lambda _{i}-1$. For $i=k+1,...,k+l$, from (\ref{Dm}) and (\ref{E}),
the $i$th column of $\mathbf{UD}_{m}$ is given by
\begin{equation}
\mathbf{v}_{m;i}=\left[
\begin{array}{cc}
\widetilde{\mathbf{u}}_{2i-k-1} & \widetilde{\mathbf{u}}_{2i-k}%
\end{array}%
\right] \mathbf{e}_{m;i},\mathbf{\ }m\in \left\{ A,B\right\} ,\text{ }
\end{equation}%
where $\widetilde{\mathbf{u}}_{i}$ represents the $i$th column of $\mathbf{U}
$. Then, we obtain $\mathbf{v}_{A;k+i}^{\dag }\mathbf{v}_{B;k+i}=\mathbf{e}%
_{A;k+i}^{\dag }\mathbf{e}_{B;k+i}=\lambda _{i}-1$, where the first step
utilizes the fact that $\mathbf{U}$ is orthonormal, and the second step
follows from (\ref{Direction}).
\end{proof}

\begin{coro}
\label{Corollary 2}For $i=k+l+1,...,n_{A}$, $\mathbf{v}_{A;i}$ is an
eigenvector of $\mathbf{U}_{A}\mathbf{U}_{A}^{\dag }+\mathbf{U}_{B}\mathbf{U}%
_{B}^{\dag }$ corresponding to $\lambda _{i}=1$, and is orthogonal to $%
\mathcal{C}(\mathbf{H}_{BR})$; for $i=k+l+1,...,n_{B}$, $\mathbf{v}_{B;i}$
is an eigenvector corresponding to $\lambda _{i}=1$, and is orthogonal to $%
\mathcal{C}(\mathbf{H}_{AR})$.
\end{coro}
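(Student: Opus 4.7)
The plan is to read the corollary off directly from the block structure of $\mathbf{D}_A$ and $\mathbf{D}_B$ established in Theorem~\ref{Theorem 1}, combined with the way the columns of $\mathbf{U}$ were constructed from the eigendecomposition of $\mathbf{U}_A\mathbf{U}_A^{\dag} + \mathbf{U}_B\mathbf{U}_B^{\dag}$.

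Concretely, first I would express $\mathbf{v}_{A;i}$ for $i = k+l+1,\ldots,n_A = k+l+d_A$ as an explicit column of $\mathbf{U}$. Since these indices fall into the third block column of $\mathbf{D}_A$ in (\ref{Dm})---the one containing $\mathbf{I}_{d_A}$---multiplying $\mathbf{U}$ by the corresponding column of $\mathbf{D}_A$ simply picks out one column of $\mathbf{U}$. By the definition of $d_A$ given immediately before Theorem~\ref{Theorem 1}, these are precisely the $d_A$ eigenvectors $\mathbf{u}$ of $\mathbf{U}_A\mathbf{U}_A^{\dag} + \mathbf{U}_B\mathbf{U}_B^{\dag}$ satisfying $\mathbf{U}_A\mathbf{U}_A^{\dag}\mathbf{u} = \mathbf{u}$ and $\mathbf{U}_B\mathbf{U}_B^{\dag}\mathbf{u} = \mathbf{0}$, with associated eigenvalue $1 + 0 = 1$. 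This immediately gives the first assertion, that $\mathbf{v}_{A;i}$ is an eigenvector with $\lambda_i = 1$.

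For the orthogonality claim, I would invoke the compact SVD in (\ref{Channel_Decomp}), which makes $\mathbf{U}_B$ an orthonormal basis of $\mathcal{C}(\mathbf{H}_{BR})$, and hence $\mathbf{U}_B\mathbf{U}_B^{\dag}$ the orthogonal projector onto $\mathcal{C}(\mathbf{H}_{BR})$. The identity $\mathbf{U}_B\mathbf{U}_B^{\dag}\mathbf{v}_{A;i} = \mathbf{0}$ then translates directly into $\mathbf{v}_{A;i} \perp \mathcal{C}(\mathbf{H}_{BR})$, completing the first half. The claim for $\mathbf{v}_{B;i}$ follows by an entirely symmetric argument: the $\mathbf{I}_{d_B}$ block in $\mathbf{D}_B$ selects the $d_B$ columns of $\mathbf{U}$ that are $\lambda = 1$ eigenvectors satisfying the mirrored conditions $\mathbf{U}_B\mathbf{U}_B^{\dag}\mathbf{u} = \mathbf{u}$ and $\mathbf{U}_A\mathbf{U}_A^{\dag}\mathbf{u} = \mathbf{0}$, and the latter condition together with the projector interpretation of $\mathbf{U}_A\mathbf{U}_A^{\dag}$ gives orthogonality to $\mathcal{C}(\mathbf{H}_{AR})$.

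The only subtlety---really a bookkeeping matter rather than a genuine obstacle---is confirming that the column ordering inside $\mathbf{U}$ (as specified by the construction referenced in (\ref{Def_U}) in the proof of Theorem~\ref{Theorem 1}) is arranged so that the A-type $\lambda = 1$ eigenvectors sit in the rows of $\mathbf{U}$ that align with the $\mathbf{I}_{d_A}$ block of $\mathbf{D}_A$, and symmetrically the B-type ones align with $\mathbf{I}_{d_B}$ in $\mathbf{D}_B$. Once this alignment is noted, no further calculation is needed and the corollary is immediate.
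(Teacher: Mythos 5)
Your argument is correct and matches what the paper intends: the paper states Corollary~\ref{Corollary 2} without proof, treating it as immediate from the block structure of $\mathbf{D}_m$ in (\ref{Dm}) and the ordering of eigenvectors in (\ref{Def_U}), which is exactly the reading-off you perform, including the projector interpretation of $\mathbf{U}_B\mathbf{U}_B^{\dag}$ via the compact SVD. The column-alignment "bookkeeping" you flag is indeed the only thing to check, and it holds by construction of $\mathbf{U}$ in Appendix~A.
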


\begin{rema}
The above corollaries show that the eigenvalue $\lambda _{i}$ is an
indicator of the degree of orthogonality of the $i$th direction pair. In
particular, $\lambda _{i}\approx 2$ means that the two channel directions
are close to parallel; and $\lambda _{i}\approx 1$ means that the two channel
directions are close to orthogonal.
\end{rema}

\begin{rema}
\label{Remark 4}Before leaving this subsection, we emphasize that the joint
channel decomposition in Theorem \ref{Theorem 1} is general for any sizes of
$\mathbf{H}_{AR}$ and $\mathbf{H}_{BR}$. Particularly, if $n_{m}\geq
n_{R},m\in \left\{ A,B\right\} $, then $k=n_{R}$ and $l=0$, implying that
all the eigenvalues $\{\lambda _{i}\}$ are valued at $2$. In this case, $%
\mathbf{H}_{AR}$ and $\mathbf{H}_{BR}$ span the same columnspace. Channel
alignment techniques have been proposed in \cite{HJYangIT11}-\cite%
{KhinaISIT11} for efficient implementation of PNC. In what follows, we are
mainly interested in the case of $n_{A},n_{B}<n_{R}$, i.e., there exist $%
\{\lambda _{i}\}$ valued between, but not including, $1$ and $2$.
\end{rema}

\subsection{Space-Division Approach for MIMO Two-Way Relaying}

Based on the joint channel decomposition in Theorem \ref{Theorem 1}, we now
propose a new space-division approach for MIMO two-way relaying. The main
idea is to divide the overall signal space $\mathcal{C}([\mathbf{H}_{AR}$ $%
\mathbf{H}_{BR}]\mathbf{)}=\mathcal{C}(\mathbf{U)}$ into two orthogonal
subspaces: 1) $\mathcal{S}^{PNC}$, in which the channel direction pairs ($%
\mathbf{v}_{A;i},\mathbf{v}_{B;i}$) are parallel or close to parallel, for
carrying out PNC; 2) $\mathcal{S}^{CD}$ for carrying out the
complete-decoding strategy. Let $l%
{\acute{}}%
$ be an arbitrary integer between 0 and $l$. Recall that the channel
direction pairs are ordered by the degree of orthogonality as in Corollary %
\ref{Corollary 1}. Therefore, the first $k+l%
{\acute{}}%
$ direction pairs have lower degree of orthogonality compared to the
remaining pairs. Thus, we allocate the first $k+l%
{\acute{}}%
$ direction pairs to form a basis of $\mathcal{S}^{PNC}$. The remaining
channel directions give a basis of $\mathcal{S}^{CD}$. In this section, we
assume that $l%
{\acute{}}%
$ is given. The details on the optimization of $l%
{\acute{}}%
$ will be discussed later in Sections V and VI.

\subsubsection{Space-Division Operation}

Let the RQ decomposition of $\mathbf{G}_{m}$ be%
\begin{equation}
\mathbf{G}_{m}=\mathbf{R}_{m}\mathbf{T}_{m}^{\dagger },\text{ }m\in \left\{
A,B\right\}
\end{equation}%
where $\mathbf{R}_{m}\in $ $\mathcal{%
\mathbb{C}
}^{n_{m}\times n_{m}}$ is an upper-triangular matrix given by%
\begin{equation}
\mathbf{R}_{m}=\left[
\begin{array}{cccc}
r_{m;1,1} & r_{m;1,2} & \cdots & r_{m;1,n_{m}} \\
0 & r_{m;2,2} & \cdots & r_{m;2,n_{m}} \\
\vdots & \vdots & \ddots & \vdots \\
0 & \cdots & 0 & r_{m;n_{m},n_{m}}%
\end{array}%
\right] ,
\end{equation}%
and $\mathbf{T}_{m}\in $ $\mathcal{%
\mathbb{C}
}^{n_{m}\times n_{m}}$ is unitary. Together with (\ref{Hm}), the channel
matrices\textit{\ }can be jointly decomposed as%
\begin{equation}
\mathbf{H}_{m}=\mathbf{UD}_{m}\mathbf{R}_{m}\mathbf{T}_{m}^{\dag },\mathbf{\
}m\in \left\{ A,B\right\} .  \label{Decomp}
\end{equation}
Then, the received signal at the relay, after
left-multiplying $\mathbf{U}^{\dag }$, can be represented as%
\begin{equation}
\mathbf{Y}_{R}^{\prime }=\mathbf{U}^{\dag }\mathbf{Y}_{R}=\mathbf{D}_{A}%
\mathbf{R}_{A}\mathbf{X}_{A}^{\prime }+\mathbf{D}_{B}\mathbf{R}_{B}\mathbf{X}%
_{B}^{\prime }+\mathbf{Z}_{R}^{\prime },  \label{33}
\end{equation}%
where $\mathbf{X}_{m}^{\prime }=\mathbf{T}_{m}^{\dag }\mathbf{X}_{m},m\in
\left\{ A,B\right\} $, and $\mathbf{Z}_{R}^{\prime }=\mathbf{U}^{\dag }%
\mathbf{Z}_{R}$ with i.i.d. elements $\sim \mathcal{N}_{c}(0,N_{0})$.

We partition $\mathbf{R}_{m}$ and $\mathbf{D}_{m}$ as%
\begin{equation}
\mathbf{R}_{m}=\left[
\begin{array}{cc}
\mathbf{R}_{m;1,1} & \mathbf{R}_{m;1,2} \\
\mathbf{0} & \mathbf{R}_{m;2,2}%
\end{array}%
\right] \text{ and }\mathbf{D}_{m}=\left[
\begin{array}{cc}
\mathbf{D}_{m;1,1} & \mathbf{0} \\
\mathbf{0} & \mathbf{D}_{m;2,2}%
\end{array}%
\right] ,m\in \left\{ A,B\right\}  \label{Dm3}
\end{equation}%
where $\mathbf{R}_{m;1,1}\in
\mathbb{C}
^{(k+l%
{\acute{}}%
)\times (k+l%
{\acute{}}%
)}$ and $\mathbf{R}_{m;2,2}\in
\mathbb{C}
^{(l-l%
{\acute{}}%
+p_{m})\times (l-l%
{\acute{}}%
+p_{m})},$ $m\in \left\{ A,B\right\} $, are upper triangular matrices, and $%
\mathbf{D}_{m;1,1}\in
\mathbb{C}
^{(k+2l%
{\acute{}}%
)\times (k+l)}$ and $\mathbf{D}_{m;2,2}\in
\mathbb{C}
^{(n_{R}-k-2l%
{\acute{}}%
)\times (l-l%
{\acute{}}%
+p_{m})},m\in \left\{ A,B\right\} $, are block-diagonal matrices. Then, (\ref%
{33}) can be written as%
\begin{equation}
\left[
\begin{array}{c}
\mathbf{Y}_{R}^{PNC} \\
\mathbf{Y}_{R}^{CD}%
\end{array}%
\right] =\dsum\limits_{m\in \left\{ A,B\right\} }\left[
\begin{array}{cc}
\mathbf{D}_{m;1,1}\mathbf{R}_{m;1,1} & \mathbf{D}_{m;1,1}\mathbf{R}_{m;1,2}
\\
\mathbf{0} & \mathbf{D}_{m;2,2}\mathbf{R}_{m;2,2}%
\end{array}%
\right] \left[
\begin{array}{c}
\mathbf{X}_{m}^{PNC} \\
\mathbf{X}_{m}^{CD}%
\end{array}%
\right] +\left[
\begin{array}{c}
\mathbf{Z}_{R}^{PNC} \\
\mathbf{Z}_{R}^{CD}%
\end{array}%
\right] ,  \label{SS}
\end{equation}%
where $\mathbf{Y}_{R}^{\prime }$, $\mathbf{X}_{m}^{\prime }$, and $\mathbf{Z}%
_{R}^{\prime }$ are correspondingly partitioned as%
\begin{equation}
\mathbf{Y}_{R}^{\prime }=\left[
\begin{array}{c}
\mathbf{Y}_{R}^{PNC} \\
\mathbf{Y}_{R}^{CD}%
\end{array}%
\right] \text{, }\mathbf{X}_{m}^{\prime }=\left[
\begin{array}{c}
\mathbf{X}_{m}^{PNC} \\
\mathbf{X}_{m}^{CD}%
\end{array}%
\right] ,\text{ and }\mathbf{Z}_{R}^{\prime }=\left[
\begin{array}{c}
\mathbf{Z}_{R}^{PNC} \\
\mathbf{Z}_{R}^{CD}%
\end{array}%
\right] .
\end{equation}%
Here, the superscript \textquotedblleft PNC\textquotedblright\ (or
\textquotedblleft CD\textquotedblright ) represents the PNC (or
complete-decoding) strategy.

Based on the signal model in (\ref{SS}), the proposed space-division based
relaying strategy is described as follows. At user $m$, two groups of
spatial streams are generated: one group, referred to as the \textit{%
complete-decoding spatial streams}, form the codeword matrix $\mathbf{X}%
_{m}^{CD}$; and the other group, referred to as the \textit{PNC spatial
streams}, form the codeword matrix $\mathbf{X}_{m}^{PNC}$, $m\in \left\{
A,B\right\} $.

\subsubsection{Complete-Decoding Spatial Streams}

Due to the block triangular structure of the channel matrices in (\ref{SS}),
the relay can completely decode the spatial streams $\mathbf{X}_{A}^{CD}$
and $\mathbf{X}_{B}^{CD}$ free of interference from the PNC spatial streams.
Specifically, the relay completely decodes both $\mathbf{X}_{A}^{CD}$ and $%
\mathbf{X}_{B}^{CD}$ based on%
\begin{equation}
\mathbf{Y}_{R}^{CD}=\dsum\limits_{m\in \left\{ A,B\right\} }\mathbf{D}%
_{m;2,2}\mathbf{R}_{m;2,2}\mathbf{X}_{m}^{CD}+\mathbf{Z}_{R}^{CD}.
\label{S_CD}
\end{equation}%
Then, $\mathbf{X}_{A}^{CD}$ and $\mathbf{X}_{B}^{CD}$ are canceled from the
received signal in (\ref{SS}).

\subsubsection{PNC Spatial Streams}

After the cancelation of $\mathbf{X}_{A}^{CD}$ and $\mathbf{X}_{B}^{CD}$, the system model for the PNC spatial streams is
given by%
\begin{equation}
\mathbf{Y}_{R}^{PNC}=\dsum\limits_{m\in \left\{ A,B\right\} }\mathbf{D}%
_{m;1,1}\mathbf{R}_{m;1,1}\mathbf{X}_{m}^{PNC}+\mathbf{Z}_{R}^{PNC}.
\end{equation}%
From (\ref{Dm}), the first $k$ columns of $\mathbf{D}_{A;1,1}$ and $\mathbf{D%
}_{B;1,1}$ are identical; however, for $i=k+1,...,k+l%
{\acute{}}%
$, the $i$th columns of $\mathbf{D}_{A;1,1}$ and $\mathbf{D}_{B;1,1}$ are
not. Following Section III, we project each column pair of $%
\mathbf{D}_{A;1,1}$ and $\mathbf{D}_{B;1,1}$ onto a common direction, so as
to facilitate PNC.

By inspection, the only difference between the $i$th columns of $\mathbf{D}%
_{A;1,1}$ and $\mathbf{D}_{B;1,1}$ is given by the 2-by-1 vectors $\mathbf{e}%
_{A;i}$ and $\mathbf{e}_{B;i}$, for $i=k+1,...,k+l%
{\acute{}}%
$. Without loss of generality, denote by $\mathbf{p}_{i}$ a 2-by-1 unit
vector representing the projection direction of $\mathbf{e}_{A;i}$ and $%
\mathbf{e}_{B;i}$. The choice of $\mathbf{p}_{i}$ is similar to that
described in Section III and will be detailed in the next section.

Now the projection process can be described in a matrix form as follows.
Define the projection matrix%
\begin{equation}
\mathbf{P}=\left[
\begin{array}{cccc}
\mathbf{I}_{k} & \mathbf{0} & \cdots & \mathbf{0} \\
\mathbf{0} & \mathbf{p}_{k+1} & \ddots & \vdots \\
\vdots & \ddots & \ddots & \mathbf{0} \\
\mathbf{0} & \cdots & \mathbf{0} & \mathbf{p}_{k+l%
{\acute{}}%
}%
\end{array}%
\right] \in
\mathbb{C}
^{(k+2l%
{\acute{}}%
)\times (k+l%
{\acute{}}%
)}\text{. }
\end{equation}%
After the projection, the resulting signal model is given by%
\begin{equation}
\widetilde{\mathbf{Y}}_{R}^{PNC}=\mathbf{P}^{T}\mathbf{Y}_{R}^{PNC}=\dsum%
\limits_{m\in \left\{ A,B\right\} }\widetilde{\mathbf{H}}_{m}^{PNC}\mathbf{X}%
_{m}^{PNC}+\widetilde{\mathbf{Z}}_{R}^{PNC}  \label{S_relay5}
\end{equation}%
where $\widetilde{\mathbf{H}}_{m}^{PNC}=\mathbf{P}^{T}\mathbf{D}_{m;1,1}%
\mathbf{R}_{m;1,1}=\mathbf{\tilde{D}}_{m;1,1}\mathbf{R}_{m;1,1}\in
\mathbb{C}
^{(k+l%
{\acute{}}%
)\times (k+l%
{\acute{}}%
)}$, with%
\begin{equation}
\mathbf{\tilde{D}}_{m;1,1}=\text{diag}\left\{ 1,...,1,\mathbf{p}_{k+1}^{T}%
\mathbf{e}_{m;k+1},...,\mathbf{p}_{k+l%
{\acute{}}%
}^{T}\mathbf{e}_{m;k+l%
{\acute{}}%
}\right\} ,m\in \left\{ A,B\right\} ,  \label{ddd}
\end{equation}%
and $\widetilde{\mathbf{Z}}_{R}^{PNC}=\mathbf{P}^{T}\mathbf{Z}_{R}^{PNC}$ $%
\in
\mathbb{C}
^{(n_{R}-l%
{\acute{}}%
)\times 1}$ with the entries being i.i.d. random variables $\sim \mathcal{N}%
_{c}(0,N_{0}\dot{)}$. Note that the equivalent channel matrices $\widetilde{%
\mathbf{H}}_{A}^{PNC}$ and $\widetilde{\mathbf{H}}_{B}^{PNC}$ are $(k+l%
{\acute{}}%
)$-by-$(k+l%
{\acute{}}%
)$ square matrices. For such an equivalent MIMO TWRC, efficient techniques
can be employed to align the signal directions of the two user into a same
set of $k+l%
{\acute{}}%
$ directions. This provides a platform to carry out $k$ PNC streams.

So far, we have presented the signal processing techniques used in the proposed space-division scheme to
manipulate the uplink channel. The
encoding and decoding details of the overall scheme will be described in
the next section.

\section{An Achievable Rate-Region of MIMO TWRC}

In this section, we derive an achievable rate-pair of the proposed
space-division based PNC scheme. Based on that, we optimize the
system parameters to determine the achievable rate-region.

\subsection{Achievable Rate-Pairs}

\subsubsection{Complete-Decoding Spatial Streams}

The equivalent channel model seen by the complete-decoding
spatial streams is given in (\ref{S_CD}), with the equivalent channel
matrices given by $\mathbf{D}_{m,2,2}\mathbf{R}_{m;2,2}$, $m\in \left\{
A,B\right\} $.

The signal model in (\ref{S_CD}) is a standard MIMO MAC channel. Let $%
\mathbf{Q}_{m}^{CD}=\frac{1}{T}E\left[ \mathbf{X}_{m}^{CD}(\mathbf{X}%
_{m}^{CD})^{\dag }\right] $ be the input covariance matrix of the
complete-decoding spatial steams of user $m$. Then, the achievable rate-pair
of the complete-decoding spatial streams satisfies \cite{Cover91}%
\begin{subequations}\label{Rate CD}
\begin{eqnarray}
R_{A}^{CD}+R_{B}^{CD} &\leq &\frac{1}{2}\log \left\vert \mathbf{I}+\frac{1}{%
N_{0}}\dsum\limits_{m\in \left\{ A,B\right\} }\mathbf{D}_{m;2,2}\mathbf{R}%
_{m;2,2}\mathbf{Q}_{m}^{CD}\mathbf{R}_{m;2,2}^{\dag }\mathbf{D}%
_{m;2,2}^{\dag }\right\vert   \\
R_{m}^{CD} &\leq &\frac{1}{2}\log \left\vert \mathbf{I}+\frac{1}{N_{0}}%
\mathbf{D}_{m;2,2}\mathbf{R}_{m;2,2}\mathbf{Q}_{m}^{CD}\mathbf{R}%
_{m;2,2}^{\dag }\mathbf{D}_{m;2,2}^{\dag }\right\vert ,m\in \left\{
A,B\right\} .
\end{eqnarray}
\end{subequations}

\subsubsection{PNC Spatial Streams}

The equivalent channel seen by the PNC streams is given in (\ref{S_relay5}).
Recall that $\widetilde{\mathbf{H}}_{m}^{PNC}$ is a ($k+l%
{\acute{}}%
$)-by-($k+l%
{\acute{}}%
$) square matrix, and the efficient design of PNC for this case has been
discussed in \cite{HJYangIT11}-\cite{KhinaISIT11}. Here, we follow the
GSVD-based approach in \cite{HJYangIT11}, as briefly described below.

Applying the generalized singular-value decomposition (GSVD) \cite{GolubMC97}
to $\widetilde{\mathbf{H}}_{m}^{PNC}$, we obtain%
\begin{equation}
\widetilde{\mathbf{H}}_{m}^{PNC}=\mathbf{B\Sigma }_{m}\mathbf{T}%
_{m}^{\dagger },m\in \left\{ A,B\right\} ,
\end{equation}%
where $\mathbf{B}\in
\mathbb{C}
^{(k+l%
{\acute{}}%
)\times (k+l%
{\acute{}}%
)}$ is a nonsingular matrix, $\mathbf{T}_{m}\in
\mathbb{C}
^{(k+l%
{\acute{}}%
)\times (k+l%
{\acute{}}%
)}$ is an orthogonal matrix, $m\in \left\{ A,B\right\} $, and $\mathbf{%
\Sigma }_{m}\in
\mathbb{C}
^{(k+l%
{\acute{}}%
)\times (k+l%
{\acute{}}%
)}$ is a diagonal matrix with the $i$th diagonal element denoted by $\sigma
_{m;i}$. We further take the QR decomposition to the matrix $\mathbf{B}$,
yielding%
\begin{equation}
\widetilde{\mathbf{H}}_{m}^{PNC}=\mathbf{Q\tilde{R}\Sigma }_{m}\mathbf{T}%
_{m}^{\dagger },m\in \left\{ A,B\right\} ,  \label{Decom}
\end{equation}%
where $\mathbf{\tilde{R}}\in
\mathbb{C}
^{(k+l%
{\acute{}}%
)\times (k+l%
{\acute{}}%
)}$ is an upper triangular matrix. The transmit signal $\mathbf{X}_{m}^{PNC}$
in (\ref{S_relay5}) is designed as%
\begin{equation}
\mathbf{X}_{m}^{PNC}=\mathbf{T}_{m}\mathbf{\Psi }_{m}^{1/2}\mathbf{S}%
_{m}^{PNC},m\in \left\{ A,B\right\} ,  \label{SigModel}
\end{equation}%
where $\mathbf{\Psi }_{m}^{1/2}=$ diag$\left\{ \sqrt{\psi _{m;1}},\sqrt{\psi
_{m;2}},\cdots ,\sqrt{\psi _{m;k+l%
{\acute{}}%
}}\right\} $ is a diagonal matrix with $\psi _{m;i}\geq 0,i=1,2,...,k+l%
{\acute{}}%
$, and $\mathbf{S}_{m}$ $\in
\mathbb{C}
^{(k+l%
{\acute{}}%
)\times T}$ is the signal matrix with each element independent and
identically drawn from $\mathcal{N}_{c}(0,1)$.

Let $R_{m}^{PNC}$ be the total rate of the PNC spatial streams of user $m$.
From Theorem 1 in \cite{HJYangIT11}, the achievable rate-pair is given by%
\begin{equation}
R_{m}^{PNC}=\sum_{i=1}^{k+l%
{\acute{}}%
}\frac{1}{2}\left[ \log \left( \frac{I(i)\sigma _{m;i}^{2}\psi _{m;i}^{{}}}{%
\sigma _{A;i}^{2}\psi _{A;i}^{{}}+\sigma _{B;i}^{2}\psi _{B;i}^{{}}}+\frac{%
\tilde{r}_{i,i}^{2}\sigma _{m;i}^{2}\psi _{m;i}^{{}}}{N_{0}}\right) \right]
^{+},m\in \left\{ A,B\right\}  \label{Rate PNC}
\end{equation}%
where $I(i)$ is the indicator function with $I(i)=1$ for $i=1$ and $I(i)=0$
for $i\neq 1$.

\subsubsection{The Overall Scheme}

We now consider the overall achievable rate-pair of the proposed
space-division based PNC scheme. Before going into details, we
note that the power constraint of user $m$, i.e., $\frac{1}{T}E\left[
\left\Vert \mathbf{X}_{m}\right\Vert _{F}^{2}\right] \leq P_{m},$ $m\in
\left\{ A,B\right\} $, can be equivalently expressed as%
\begin{equation}
\text{tr}\{\mathbf{Q}_{m}^{CD}\}+\sum_{i=1}^{k+l%
{\acute{}}%
}\psi _{m;i}^{{}}\leq P_{m},m\in \left\{ A,B\right\} .  \label{PC1}
\end{equation}%
We are now ready to present the following theorem on the achievable rates of
the proposed scheme.

\begin{theo}
For given $\mathbf{Q}_{m}^{CD}$, $\mathbf{\Psi }_{m}^{{}}$, and $\mathbf{Q}%
_{R}$ satisfying (\ref{PC1}) and tr$(\mathbf{Q}_{R})\leq P_{R}$, a rate-pair
$(R_{A},R_{B})$ for the MIMO\ TWRC is achievable if%
\begin{equation}
R_{m}\leq \min \{R_{m}^{CD}+R_{m}^{PNC},R_{m}^{DL}\},m\in \left\{
A,B\right\} ,  \label{49}
\end{equation}%
where $R_{A}^{CD}$ and $R_{B}^{CD}$ satisfy (\ref{Rate CD}), $R_{m}^{PNC}$
is given by (\ref{Rate PNC}), and $R_{m}^{DL}$ is given by (\ref{Outer Bound}%
).
\end{theo}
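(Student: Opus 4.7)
The plan is to prove achievability by decomposing the scheme into uplink and downlink phases, bounding each separately, and then taking the minimum per user. On the uplink, I exploit the block upper-triangular structure of the transformed channel (\ref{SS}) to handle the CD and PNC streams via successive decoding; on the downlink, I use the broadcast-with-receiver-side-information argument of \cite{HJYangIT11}--\cite{KhinaISIT11}.

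First, I establish uplink achievability. Because the lower block row of (\ref{SS}) reads $\mathbf{Y}_R^{CD}=\sum_{m\in\{A,B\}}\mathbf{D}_{m;2,2}\mathbf{R}_{m;2,2}\mathbf{X}_m^{CD}+\mathbf{Z}_R^{CD}$ and is free of PNC-stream interference, the relay first treats it as a standard MIMO MAC with i.i.d.\ complex Gaussian inputs of covariance $\mathbf{Q}_m^{CD}$ and jointly decodes $(\mathbf{X}_A^{CD},\mathbf{X}_B^{CD})$ at any rate-pair inside the region (\ref{Rate CD}). It then subtracts the interference term $\sum_m \mathbf{D}_{m;1,1}\mathbf{R}_{m;1,2}\mathbf{X}_m^{CD}$ from the upper block of (\ref{SS}) to obtain $\sum_m \mathbf{D}_{m;1,1}\mathbf{R}_{m;1,1}\mathbf{X}_m^{PNC}+\mathbf{Z}_R^{PNC}$, applies the projection $\mathbf{P}^T$ of Section IV-D to reduce the residual to the square system (\ref{S_relay5}), and invokes the GSVD-based nested-lattice scheme of \cite{HJYangIT11} to compute the $k+l'$ network-coded linear functions at the rates (\ref{Rate PNC}). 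Because the unitary pre-coding $\mathbf{T}_m$ together with the diagonal loading $\mathbf{\Psi}_m^{1/2}$ yields $\frac{1}{T}E[\|\mathbf{X}_m\|_F^2]=\mathrm{tr}(\mathbf{Q}_m^{CD})+\sum_i \psi_{m;i}$, the constraint (\ref{PC1}) exactly enforces the transmit power budget. Combining the two sub-streams, user $m$'s uplink information reaches the relay at rate $R_m^{CD}+R_m^{PNC}$.

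Second, I establish downlink achievability. The relay bundles the decoded CD codewords of user $m$ with the network-coded quantities that carry user $m$'s PNC information into a single message $W_m^{\mathrm{rel}}$, and encodes it with a Gaussian codebook of covariance $\mathbf{Q}_R$. For any rate $R_m\le R_m^{DL}$, the other user recovers $W_m^{\mathrm{rel}}$ reliably as a standard point-to-point MIMO channel whose capacity is given by (\ref{Outer Bound}). That user then extracts user $m$'s original message by reading the CD streams directly and by subtracting its own known PNC codewords from the lattice-modulo network-coded messages, exactly as in the self-interference-cancellation arguments of \cite{HJYangIT11}--\cite{KhinaISIT11}.

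Taking the minimum of the uplink and downlink per-user rates yields (\ref{49}). The main obstacle is the last step of the downlink argument: verifying that the $k+l'$ network-coded quantities forwarded by the relay, together with one user's own PNC codewords, uniquely determine the other user's PNC codewords even though the relay never decoded $\mathbf{X}_A^{PNC}$ and $\mathbf{X}_B^{PNC}$ individually. This reduces to the invertibility of the GSVD-aligned lattice combinations, a fact already established in \cite{HJYangIT11}; the remaining parts are routine applications of MAC capacity, MIMO point-to-point capacity, and standard nested-lattice decoding error analysis.
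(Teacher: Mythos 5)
Your proposal is correct and follows essentially the same route as the paper's own sketch: decode the interference-free CD block as a MIMO MAC at the rates of (\ref{Rate CD}), cancel it from the upper block, project and apply the GSVD/lattice construction of \cite{HJYangIT11} to obtain (\ref{Rate PNC}), then forward the jointly re-encoded messages and network-coded functions over the downlink broadcast channel with receiver side information to get (\ref{Outer Bound}). The invertibility issue you flag at the end is handled the same way in the paper, namely by deferring to \cite{HJYangIT11}--\cite{KhinaISIT11}.
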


\begin{proof}
Here we provide a sketch of proof. The overall encoding and decoding process
for the proposed scheme is described as follows. The messages of the user $m$
are doubly indexed as $(W_{m}^{CD},W_{m}^{PNC})$, with $W_{m}^{CD}\in
\{1,2,...,2^{2TR_{m}^{CD}}\}$ for the complete-decoding spatial streams, and
$W_{m}^{PNC}\in \{1,2,...,2^{2TR_{m}^{PNC}}\}$ for the PNC spatial streams.
Each $W_{m}^{CD}$ is one-to-one mapped to $\mathbf{X}_{m}^{CD}$ in (\ref{SS}%
), and each $W_{m}^{PNC}$ is one-to-one mapped to $\mathbf{S}_{m}^{PNC}$ in (%
\ref{SigModel}). In the uplink phase, $\mathbf{X}_{m}^{CD}$ and $\mathbf{X}%
_{m}^{PNC}=\mathbf{T}_{m}\mathbf{\Psi }_{m}^{1/2}\mathbf{S}_{m}^{PNC}$ are
transmitted via the channel in (\ref{SS}), with the transmit power
constrained by (\ref{PC1}).

Upon receiving $\mathbf{Y}_{R}$, the relay first completely decode $\mathbf{X%
}_{A}^{CD}$ and $\mathbf{X}_{B}^{CD}$ based on $\mathbf{Y}_{R}^{CD}$ in (\ref%
{S_CD}), with the achievable rate-pair given in (\ref{Rate CD}). The decoded
$\mathbf{X}_{A}^{CD}$ and $\mathbf{X}_{B}^{CD}$ are subtracted from $\mathbf{%
Y}_{R}^{PNC}$. Let $(\mathbf{s}_{m,i}^{PNC})^{T}$ be the $i$th row of $%
\mathbf{S}_{m}^{PNC}$. Then, the network-coded PNC spatial streams, i.e., $%
\tilde{r}_{i,i}^{{}}\sigma _{A;i}^{{}}\psi _{A;i}^{1/2}\mathbf{s}%
_{A,i}^{PNC}+\tilde{r}_{i,i}^{{}}\sigma _{B;i}^{{}}\psi _{B;i}^{1/2}\mathbf{s%
}_{B,i}^{PNC}$, $i=k+l{\acute{}},k+l{\acute{}}-1,...,1$, are successively recovered and canceled from $\widetilde{\mathbf{Y%
}}_{R}^{PNC}$ in (\ref{S_relay5}), with the achievable rate-pair given in (%
\ref{Rate PNC}). The decoded messages from the complete-decoding streams, together with the network-coded messages from the PNC streams, are then jointly
encoded. The new codeword is forwarded to the two users in the
downlink phase, with the transmit power constrained by tr$(\mathbf{Q}%
_{R})\leq P_{R}$. Following the discussions in \cite{HJYangIT11}-\cite%
{KhinaISIT11}, the achievable rate-pair of the downlink phase is given by ($%
R_{A}^{DL},R_{B}^{DL}$) in (\ref{Outer Bound}). This completes the proof.
\end{proof}

\subsection{Determining Achievable Rate-Region}

Now we consider determining the boundary of the achievable rate-region. From
(\ref{49}), the downlink achievable rates are the same as the capacity upper
bound in (\ref{Outer Bound}). Here, we focus on the uplink rate-region.

The boundary of the uplink rate-region can be determined by solving the
following weighted-sum-rate maximization problem:%
\begin{subequations}\label{WSR}
\begin{eqnarray}
\text{maximize} &&\dsum\limits_{m\in \left\{ A,B\right\} }w_{m}\left(
R_{m}^{CD}+R_{m}^{PNC}\right)   \\
\text{subject to} &&\dsum\limits_{j=1}^{k+l%
{\acute{}}%
}\psi _{m;j}+\text{tr}\{\mathbf{Q}_{m}^{CD}\}\leq P_{m},\mathbf{Q}%
_{m}^{CD}\succeq \mathbf{0},\psi _{m;i}\geq 0,\text{ for }i=1,...,k+l%
{\acute{}}%
. \\
&&R_{m}^{PNC}=\sum_{i=1}^{k+l%
{\acute{}}%
}\frac{1}{2}\left[ \log \left( \frac{I(i)\sigma _{m;i}^{2}\psi _{m;i}^{{}}}{%
\sigma _{A;i}^{2}\psi _{A;i}^{{}}+\sigma _{B;i}^{2}\psi _{B;i}^{{}}}+\frac{%
\tilde{r}_{i,i}^{2}\sigma _{m;i}^{2}\psi _{m;i}^{{}}}{N_{0}}\right) \right]
^{+}, \\
&&R_{A}^{CD}+R_{B}^{CD}\leq \frac{1}{2}\log \left\vert \mathbf{I}+\frac{1}{%
N_{0}}\dsum\limits_{m\in \left\{ A,B\right\} }\mathbf{D}_{m;2,2}\mathbf{R}%
_{m;2,2}\mathbf{Q}_{m}^{CD}\mathbf{R}_{m;2,2}^{\dag }\mathbf{D}%
_{m;2,2}^{\dag }\right\vert , \\
&&R_{m}^{CD}\leq \frac{1}{2}\log \left\vert \mathbf{I}+\frac{1}{N_{0}}%
\mathbf{D}_{m;2,2}\mathbf{R}_{m;2,2}\mathbf{Q}_{m}^{CD}\mathbf{R}%
_{m;2,2}^{\dag }\mathbf{D}_{m;2,2}^{\dag }\right\vert ,m\in \left\{
A,B\right\} .
\end{eqnarray}%
\end{subequations}
The above problem involves the optimization of $l%
{\acute{}}%
,\{\mathbf{p}_{i}\}_{i=k+1}^{k+l%
{\acute{}}%
},\mathbf{Q}_{A}^{CD},\mathbf{Q}_{B}^{CD},\left\{ \psi _{A;i}\right\}
_{i=1}^{k+l%
{\acute{}}%
}$, and $\left\{ \psi _{B;i}\right\} _{i=1}^{k+l%
{\acute{}}%
}$, as detailed below.

\subsubsection{Determining the Projection Directions}

The optimization of the projection directions $\{\mathbf{p}%
_{i}\}_{i=k+1}^{k+l%
{\acute{}}%
}$ to maximize the weighted sum-rate is in general difficult to
solve. To simplify the problem, we consider the high SNR regime, with the
weighted sum-rate given by%
\begin{eqnarray}
&&w_{A}R_{A}^{PNC}+w_{B}R_{B}^{PNC}\overset{(a)}{\approx }\frac{1}{2}%
\dsum\limits_{m\in \left\{ A,B\right\} }w_{m}\left( \sum_{i=2}^{k+l%
{\acute{}}%
}\log \left( \frac{\tilde{r}_{i,i}^{2}\sigma _{m;i}^{2}\psi _{m;i}^{{}}}{%
N_{0}}\right) \right)  \notag \\
&&\overset{(b)}{=}\frac{1}{2}\dsum\limits_{m\in \left\{ A,B\right\}
}w_{m}\log \left\vert \frac{P_{m}}{N_{0}n_{m}}\mathbf{\tilde{R}\Sigma }_{m}%
\mathbf{\Sigma }_{m}^{\dag }\mathbf{\tilde{R}}^{\dag }\right\vert  \notag \\
&&\overset{(c)}{=}\frac{1}{2}\dsum\limits_{m\in \left\{ A,B\right\}
}w_{m}\log \left\vert \frac{P_{m}}{N_{0}n_{m}}\mathbf{\tilde{D}}_{m;1,1}%
\mathbf{R}_{m;1,1}\mathbf{R}_{m;1,1}^{\dag }\mathbf{\tilde{D}}_{m;1,1}^{\dag
}\right\vert  \notag \\
&&\overset{(d)}{=}\frac{1}{2}\dsum\limits_{m\in \left\{ A,B\right\} }\left(
w_{m}\log \left\vert \frac{P_{m}}{N_{0}n_{m}}\mathbf{R}_{m;1,1}\mathbf{R}%
_{m;1,1}^{\dag }\right\vert +w_{m}\log \left\vert \mathbf{\tilde{D}}_{m;1,1}%
\mathbf{\tilde{D}}_{m;1,1}^{\dag }\right\vert \right)  \label{WSR1}
\end{eqnarray}%
where step (\textit{a}) follows from substituting (\ref{Rate PNC}), step (%
\textit{b}) from the facts that $\mathbf{\tilde{R}}$ is upper-triangular and
that equal power allocation is asymptotically optimal (i.e., $\psi
_{m;i}^{{}}=\frac{P_{m}}{n_{m}}$), step (\textit{c}) by noting $\mathbf{%
\tilde{D}}_{m;1,1}\mathbf{R}_{m;1,1}=\widetilde{\mathbf{H}}_{m}^{PNC}=%
\mathbf{Q\tilde{R}\Sigma }_{m}\mathbf{T}_{m}^{\dagger }$ (cf., (\ref{S_relay5}) and (\ref{Decom})), and step (d) by
utilizing $\left\vert \mathbf{I}+\mathbf{AB}\right\vert =\left\vert \mathbf{I}+%
\mathbf{BA}\right\vert $. In the above, $\log \left\vert \mathbf{\tilde{D}}%
_{m;1,1}\mathbf{\tilde{D}}_{m;1,1}^{\dag }\right\vert $ is the only term
related to $\mathbf{p}_{i}$. Recall from (\ref{ddd}) that $\mathbf{\tilde{D}}_{m;1,1}=%
\mathbf{P}^{T}\mathbf{D}_{m;1,1}$ with $\mathbf{D}_{m;1,1}$ being the
principle submatrix of $\mathbf{D}_{m}$ in (\ref{Dm3}). Thus, the weighted
sum-rate maximization problem over $\{\mathbf{p}_{i}\}_{i=k+1}^{k+l%
{\acute{}}%
}$ can be decoupled into $l%
{\acute{}}%
$ independent subproblems as%
\begin{equation}
\max_{||\mathbf{p}_{i}||=1}w_{A}\log \left( \left\vert \mathbf{p}_{i}^{\dag }%
\mathbf{e}_{A;i}\right\vert ^{2}\right) +w_{B}\log \left( \left\vert \mathbf{%
p}_{i}^{\dag }\mathbf{e}_{B;i}\right\vert ^{2}\right) ,\text{ for }%
i=k+1,...,k+l%
{\acute{}}%
,
\end{equation}%
where $\mathbf{e}_{A;i}$ and $\mathbf{e}_{B;i}$ are given in (\ref{Direction}%
). From (\ref{d}) and the discussions therein, the optimal $\mathbf{p}_{i}$ to
maximize the weighted sum-rate is a real vector given by%
\begin{equation}
\mathbf{p}_{i}=\gamma _{i}\left( \mathbf{e}_{A;i}+\beta _{i}\mathbf{e}%
_{B;i}\right) ,\text{ for }i=k+1,...,k+l%
{\acute{}}%
,  \label{dd}
\end{equation}%
where%
\begin{equation}
\beta _{i}=\frac{1}{2}\left( \sqrt{\left( \lambda _{i}-1\right) ^{2}\left( 1-%
\frac{w_{B}}{w_{A}}\right) ^{2}+4\frac{w_{B}}{w_{A}}}-\left( \lambda
_{i}-1\right) \left( 1-\frac{w_{B}}{w_{A}}\right) \right) ,
\end{equation}%
and $\gamma _{i}$ is a scaling factor to ensure $||\mathbf{p}_{i}||$ $=1$.

\subsubsection{Determining $\mathbf{Q}_{A}^{CD}$ and $\mathbf{Q}_{B}^{CD}$}

Given $\{\mathbf{p}_{i}\}$ in (\ref{dd}), the optimization problem in (\ref%
{WSR}) can be decoupled into two separate problems by predetermining the
power allocated to the two signal subspaces. Let $P_{m}^{CD}$ be the power
of user $m$ used for the complete-decoding spatial streams. Then, the power
for the PNC streams is given by $P_{m}^{PNC}=P_{m}^{{}}-P_{m}^{CD},m\in
\left\{ A,B\right\} $. For given $P_{A}^{CD}$ and $P_{B}^{CD}$, the optimal $%
\mathbf{Q}_{A}^{CD}$ and $\mathbf{Q}_{B}^{CD}$ to (\ref{WSR}) can be found
by solving the following problem:
\begin{subequations}
\begin{eqnarray}
\text{maximize} &&w_{A}R_{A}^{CD}+w_{A}R_{A}^{CD} \\
\text{subject to} &&\text{tr}\{\mathbf{Q}_{m}^{CD}\}\leq P_{m}^{CD},\mathbf{Q%
}_{m}^{CD}\succeq \mathbf{0}, \\
&&R_{A}^{CD}+R_{B}^{CD}\leq \frac{1}{2}\log \left\vert \mathbf{I}+\frac{1}{%
N_{0}}\dsum\limits_{m\in \left\{ A,B\right\} }\mathbf{D}_{m,2,2}\mathbf{R}%
_{m;2,2}\mathbf{Q}_{m}^{CD}\mathbf{R}_{m;2,2}^{\dag }\mathbf{D}%
_{m,2,2}^{\dag }\right\vert \\
&&R_{m}^{CD}\leq \frac{1}{2}\log \left\vert \mathbf{I}+\frac{1}{N_{0}}%
\mathbf{D}_{m,2,2}\mathbf{R}_{m;2,2}\mathbf{Q}_{m}^{CD}\mathbf{R}%
_{m;2,2}^{\dag }\mathbf{D}_{m,2,2}^{\dag }\right\vert ,m\in \left\{
A,B\right\} .
\end{eqnarray}%
\end{subequations}
The above is a standard weighted sum-rate maximization problem for a MIMO
multiple-access channel with two users \cite{YuIT04}. This problem is
convex, and the optimal solution can be numerically obtained using convex
programming tools \cite{Boyd04}.

\subsubsection{Determining Power Allocation for PNC Streams}

Now we consider the optimization of $\left\{ \psi _{A;i}\right\} _{i=1}^{k+l%
{\acute{}}%
}$ and $\left\{ \psi _{B;i}\right\} _{i=1}^{k+l%
{\acute{}}%
}$. Given $P_{A}^{PNC}$ and $P_{B}^{PNC}$, the optimal $\left\{ \psi
_{A;i}\right\} _{i=1}^{k+l%
{\acute{}}%
}$ and $\left\{ \psi _{B;i}\right\} _{i=1}^{k+l%
{\acute{}}%
}$ can be determined by solving
\begin{subequations}
\begin{eqnarray}
\text{maximize} &&\dsum\limits_{m\in \left\{ A,B\right\} }w_{m}\left(
\sum_{i=1}^{k+l%
{\acute{}}%
}\frac{1}{2}\left[ \log \left( \frac{I(i)\sigma _{m;i}^{2}\psi _{m;i}^{{}}}{%
\sigma _{A;i}^{2}\psi _{A;i}^{{}}+\sigma _{B;i}^{2}\psi _{B;i}^{{}}}+\frac{%
\tilde{r}_{i,i}^{2}\sigma _{m;i}^{2}\psi _{m;i}^{{}}}{N_{0}}\right) \right]
^{+}\right) \\
\text{subject to} &&\dsum\limits_{i=1}^{k+l%
{\acute{}}%
}\psi _{m;i}\leq P_{m}^{PNC},\psi _{m;i}\geq 0,\text{ for }i=1,...,k+l%
{\acute{}}%
.
\end{eqnarray}%
\end{subequations}
A similar problem has been considered in \cite{HJYangIT11}, and the optimal
solution can be obtained by solving the Karush-Kuhn-Tuchker (KKT)
conditions. We omit details here for simplicity.

Base on the above discussions, the weighted sum-rate problem in (\ref{WSR})
is numerically solvable given the values of $l%
{\acute{}}%
,P_{m}^{CD}$ and $P_{m}^{PNC}$, $m\in \left\{ A,B\right\} $. The optimal $l%
{\acute{}}%
,P_{m}^{CD}$ and $P_{m}^{PNC}$, $m\in \left\{ A,B\right\} $ can be found
using the exhaustive search. The complexity involved is not significant by
noting $P_{m}^{CD}+P_{m}^{PNC}=P_{m},m\in \left\{ A,B\right\} $ and the fact
that $l%
{\acute{}}%
$ is an integer between $0$ and $l$.

\section{Asymptotic Sum-Rate Analysis}

In the preceding section, we have shown the achievable rates of the proposed
space-division based network-coding strategy for MIMO TWRCs. In general, it
is difficult to represent the achievable rate of the optimized
space-division based scheme in a closed-form. Thus, it is not easy to
evaluate the gap between the achievable rate of the proposed scheme and the
capacity upper bound of the MIMO TWRC. In this section, we derive a
closed-form expression for the asymptotic sum-rate of the proposed strategy
in the high SNR regime.

\subsection{Asymptotic Sum-Rate as SNR $\rightarrow \infty $}

Here, we analyze the uplink achievable sum-rate%
\begin{equation}
R^{SD}=\dsum\limits_{m\in \left\{ A,B\right\} }R_{m}^{CD}+R_{m}^{PNC}
\end{equation}%
as the SNRs, i.e., $\frac{P_{A}}{N_{0}}$ and $\frac{P_{B}}{N_{0}}$, tend to
infinity. It is known that, in the high SNR regime, equal power allocation
is asymptotically optimal. Then, the upper bound of the uplink sum-rate of
the MIMO TWRC is given by (cf., (\ref{Outer Bound}))%
\begin{equation}
R^{UL}\approx \frac{1}{2}\dsum\limits_{m\in \left\{ A,B\right\} }\log
\left\vert \mathbf{I}_{n_{R}}+\frac{P_{m}}{N_{0}n_{m}}\mathbf{H}_{m}\mathbf{H%
}_{m}^{\dagger }\right\vert  \label{UpperBound}
\end{equation}%
where \textquotedblleft $x\approx y$\textquotedblright\ means%
\begin{equation}
\lim_{SNR\longrightarrow \infty }\left( x-y\right) =0. \notag
\end{equation}

Now, we present the following theorem on the asymptotic sum-rate of the
proposed scheme. Denote by $R^{SD}$ the uplink achievable sum-rate of the
proposed space-division scheme.

\begin{theo}
\label{Theorem 3}For a given $l%
{\acute{}}%
$, the uplink achievable sum-rate of the proposed space-division scheme
satisfies%
\begin{subequations}
\begin{equation}
\lim_{SNR\longrightarrow \infty }R^{UL}-R^{SD}=\Delta ^{SD}
\label{Inequality}
\end{equation}%
where%
\begin{equation}
\Delta ^{SD}\triangleq -\log \dprod\limits_{i=k+1}^{k+l%
{\acute{}}%
}\frac{\lambda _{i}}{2}-\log \dprod\limits_{i=k+l%
{\acute{}}%
+1}^{k+l}\sqrt{\lambda _{i}(2-\lambda _{i})}\geq 0.  \label{Delta_SD}
\end{equation}
\end{subequations}
\end{theo}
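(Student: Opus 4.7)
The plan is to compute the high-SNR expansions of $R^{UL}$ and $R^{SD}=R^{CD}+R^{PNC}$ separately, and observe that all SNR-dependent terms and all channel-dependent $\log|\mathbf{R}_{m;1,1}|$, $\log|\mathbf{R}_{m;2,2}|$ terms cancel in their difference, leaving exactly the claimed constant $\Delta^{SD}$. First I would start from (\ref{UpperBound}), apply the identity $|\mathbf{I}_{n_R}+c_m\mathbf{H}_{mR}\mathbf{H}_{mR}^\dagger|=|\mathbf{I}_{n_m}+c_m\mathbf{H}_{mR}^\dagger\mathbf{H}_{mR}|$ with $c_m=P_m/(N_0 n_m)$, and let $c_m\to\infty$ to get $R^{UL}\approx\tfrac{1}{2}\sum_m[n_m\log c_m+\log|\mathbf{H}_{mR}^\dagger\mathbf{H}_{mR}|]$. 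Substituting the joint decomposition (\ref{Decomp}) and using $\mathbf{U}^\dagger\mathbf{U}=\mathbf{I}$, $\mathbf{D}_m^\dagger\mathbf{D}_m=\mathbf{I}_{n_m}$, and the unitarity of $\mathbf{T}_m$ collapses this to $\log|\mathbf{R}_m^\dagger\mathbf{R}_m|=\log|\mathbf{R}_{m;1,1}\mathbf{R}_{m;1,1}^\dagger|+\log|\mathbf{R}_{m;2,2}\mathbf{R}_{m;2,2}^\dagger|$ by block-triangularity.

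For the PNC sum-rate, equation (\ref{WSR1}) already supplies the high-SNR form. Setting $w_A=w_B=1$ in (\ref{dd}) gives $\beta_i=1$, so $\mathbf{p}_i\propto\mathbf{e}_{A;i}+\mathbf{e}_{B;i}=(\sqrt{2\lambda_i},0)^T$ and hence $\mathbf{p}_i^T\mathbf{e}_{A;i}=\mathbf{p}_i^T\mathbf{e}_{B;i}=\sqrt{\lambda_i/2}$; combined with (\ref{ddd}) this yields $\tfrac{1}{2}\sum_m\log|\tilde{\mathbf{D}}_{m;1,1}\tilde{\mathbf{D}}_{m;1,1}^\dagger|=\sum_{i=k+1}^{k+l'}\log(\lambda_i/2)$. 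For the CD sum-rate, (\ref{S_CD}) is a MIMO MAC whose sum-rate at high SNR approaches $\tfrac{1}{2}\log|\mathbf{H}^{CD}(\mathbf{H}^{CD})^\dagger|$, where $\mathbf{H}^{CD}=[\sqrt{c_A}\,\mathbf{D}_{A;2,2}\mathbf{R}_{A;2,2},\;\sqrt{c_B}\,\mathbf{D}_{B;2,2}\mathbf{R}_{B;2,2}]$ is square by a dimension check. Using upper-triangularity of $\mathbf{R}_{m;2,2}$, the determinant factors as $|\det\mathbf{H}^{CD}|=c_A^{(l-l'+d_A)/2}c_B^{(l-l'+d_B)/2}|\det\mathbf{R}_{A;2,2}||\det\mathbf{R}_{B;2,2}||\det[\mathbf{D}_{A;2,2},\mathbf{D}_{B;2,2}]|$, and a column permutation bringing $[\mathbf{D}_{A;2,2},\mathbf{D}_{B;2,2}]$ to block-diagonal form reduces the last factor to $\prod_{i=k+l'+1}^{k+l}|\det[\mathbf{e}_{A;i},\mathbf{e}_{B;i}]|=\prod_i\sqrt{\lambda_i(2-\lambda_i)}$ by (\ref{Direction}).

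Combining the three expansions, the $\log c_m$ coefficients from $R^{PNC}$ and $R^{CD}$ sum to $(k+l')+(l-l'+d_m)=n_m$ and therefore match $R^{UL}$ exactly, while the $\log|\mathbf{R}_{m;1,1}|$ and $\log|\mathbf{R}_{m;2,2}|$ contributions match block-by-block. The only residue is $R^{UL}-R^{SD}\to-\log\prod_{i=k+1}^{k+l'}(\lambda_i/2)-\log\prod_{i=k+l'+1}^{k+l}\sqrt{\lambda_i(2-\lambda_i)}$, which is $\Delta^{SD}$. Non-negativity is immediate because $\lambda_i\in(1,2)$ on both index ranges, so $\lambda_i/2<1$ and, by AM-GM applied to $\lambda_i+(2-\lambda_i)=2$, $\lambda_i(2-\lambda_i)\le 1$. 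The main obstacle I anticipate is the CD step: carefully tracking the block structure of $\mathbf{D}_{A;2,2}$ and $\mathbf{D}_{B;2,2}$ (the intertwined $\mathbf{e}_{m;i}$ columns sitting alongside the $\mathbf{I}_{d_A}$ and $\mathbf{I}_{d_B}$ blocks in mutually orthogonal subspaces) so that $\det[\mathbf{D}_{A;2,2},\mathbf{D}_{B;2,2}]$ cleanly telescopes into the product of $2\times 2$ determinants, and verifying that the individual MAC rate constraints in (\ref{Rate CD}b) are non-binding at high SNR so that the asymptotic sum-rate is genuinely achievable under the power constraint (\ref{PC1}) rather than only an outer bound on it.
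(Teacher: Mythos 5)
Your proposal is correct and follows the same overall skeleton as the paper's Appendix~B: expand $R^{UL}$ and $R^{SD}$ at high SNR, let the $\log \left\vert \mathbf{R}_{m;1,1}\mathbf{R}_{m;1,1}^{\dag }\right\vert$ and $\log \left\vert \mathbf{R}_{m;2,2}\mathbf{R}_{m;2,2}^{\dag }\right\vert$ contributions cancel against $\log \left\vert \mathbf{R}_{m}\mathbf{R}_{m}^{\dag }\right\vert$ via block upper-triangularity, and identify the residue with the $\mathbf{D}$-factor terms; your PNC piece (with $\mathbf{p}_{i}=[1,0]^{T}$ and (\ref{ddd})) is handled exactly as in the paper. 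Where you genuinely diverge is the complete-decoding piece. You exploit the fact that the stacked matrix $[\mathbf{H}_{A}^{CD}\ \mathbf{H}_{B}^{CD}]$ is square, factor its determinant as $\det [\mathbf{D}_{A;2,2}\ \mathbf{D}_{B;2,2}]\cdot \det \mathbf{R}_{A;2,2}\cdot \det \mathbf{R}_{B;2,2}$, and telescope the first factor into $\prod_{i}\sqrt{\lambda _{i}(2-\lambda _{i})}$ through the $2\times 2$ blocks $[\mathbf{e}_{A;i}\ \mathbf{e}_{B;i}]$. The paper instead splits the MAC sum-rate by the chain rule, applies the matrix inversion lemma, and uses $\mathbf{I}+\frac{P_{A}}{N_{0}n_{A}}(\mathbf{H}_{A}^{CD})^{\dag }\mathbf{H}_{A}^{CD}\approx \frac{P_{A}}{N_{0}n_{A}}(\mathbf{H}_{A}^{CD})^{\dag }\mathbf{H}_{A}^{CD}$ to arrive at $\frac{1}{2}\log \left\vert \mathbf{I}-\mathbf{D}_{B;2,2}^{\dag }\mathbf{D}_{A;2,2}\mathbf{D}_{A;2,2}^{\dag }\mathbf{D}_{B;2,2}\right\vert$, which it then evaluates blockwise to the same product in (\ref{Con_Coded}). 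Your route is more direct and makes explicit the dimension count $(k+l^{\prime })+(l-l^{\prime }+d_{m})=n_{m}$ that balances the SNR exponents, at the price of having to argue nonsingularity of the stacked matrix (which holds since $\lambda _{i}<2$ for $i>k+l^{\prime }$). The two concerns you flag both check out: the column permutation does reduce $\det [\mathbf{D}_{A;2,2}\ \mathbf{D}_{B;2,2}]$ to a product of $2\times 2$ determinants and identity blocks because the paired columns live in mutually orthogonal planes $\mathcal{S}_{i}$, and the individual MAC constraints are asymptotically non-binding (the sum-rate corner point is achievable), a point the paper also leaves implicit. Your AM--GM argument for $\Delta ^{SD}\geq 0$ is fine; the paper asserts nonnegativity without proof.
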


The proof of Theorem 3 can be found in Appendix B. Notice that the first
term in (\ref{Delta_SD}), i.e., $-\log \dprod\limits_{i=k+1}^{k+l%
{\acute{}}%
}\frac{\lambda _{i}}{2}$, is the rate loss incurred by
the PNC spatial streams, and the second term, i.e., $\log \dprod\limits_{i=k+l%
{\acute{}}%
+1}^{k+l}\sqrt{\lambda _{i}(2-\lambda _{i})}$, is that incurred by the
complete-decoding spatial streams.

\begin{rema}
For the case of $n_{A},n_{B}\geq n_{R}$, we have $l=0$ and $\lambda _{i}=2$
for $i=1,...,k$. (See Remark (\ref{Remark 4}).) Then, from (\ref{Delta_SD}),
we have $\Delta ^{SD}=0$, which means that the scheme is asymptotically
optimal. This agrees with the fact that our proposed space-division scheme
reduces to the GSVD scheme which is indeed asymptotically optimal in the
high SNR regime \cite{HJYangIT11}.
\end{rema}

\begin{coro}
The optimal $l%
{\acute{}}%
$ to minimize the rate gap $\Delta ^{SD}$ in (\ref{Delta_SD}) satisfies%
\begin{equation}
2>\lambda _{k+1}\geq ...\geq \lambda _{k+l%
{\acute{}}%
}\geq \frac{8}{5}>\lambda _{k+l%
{\acute{}}%
+1}\geq ...\geq \lambda _{k+l}>1.  \label{Range1}
\end{equation}%
With this choice of $l%
{\acute{}}%
$, the asymptotic rate gap $\Delta ^{SD}$ is at most $l\log (5/4)$ bits,
which occurs when $\lambda _{k+1}=\lambda _{k+2}=...=\lambda _{k+l}=\frac{8}{%
5}$.
\end{coro}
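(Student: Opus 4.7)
The plan is to recognize that $\Delta^{SD}$ decomposes additively over the eigenvalues $\lambda_{k+1},\ldots,\lambda_{k+l}$, with each such eigenvalue paying either a \emph{PNC cost} $f_{\text{PNC}}(\lambda_i) \triangleq \log(2/\lambda_i)$ (if $i \leq k+l'$) or a \emph{CD cost} $f_{\text{CD}}(\lambda_i) \triangleq -\frac{1}{2}\log(\lambda_i(2-\lambda_i))$ (if $i > k+l'$). Since the parameter $l'$ acts only as a cutoff index and the $\lambda_i$ are already sorted in descending order, the minimization over $l'$ amounts to choosing, for each $\lambda_i$, whichever of the two costs is smaller, provided the resulting assignment is consistent with a monotone threshold on the value of $\lambda_i$.

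First, I would determine the crossover. Setting $f_{\text{PNC}}(\lambda)=f_{\text{CD}}(\lambda)$ and simplifying gives $4/\lambda^2 = 1/(\lambda(2-\lambda))$, i.e., $4(2-\lambda)=\lambda$, so $\lambda = 8/5$. Checking either endpoint confirms the direction of the inequality: as $\lambda\to 2^-$, $f_{\text{PNC}}\to 0$ while $f_{\text{CD}}\to \infty$, so PNC is cheaper; as $\lambda\to 1^+$, $f_{\text{CD}}\to 0$ while $f_{\text{PNC}}\to 1$, so CD is cheaper. Hence PNC is strictly better for $\lambda > 8/5$ and CD is strictly better for $\lambda < 8/5$. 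Because the $\lambda_i$ are monotonically decreasing in $i$, the pointwise-optimal assignment is automatically a threshold on $i$: pick $l'$ so that $\lambda_{k+l'}\geq 8/5 > \lambda_{k+l'+1}$, which is precisely the condition in (\ref{Range1}).

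For the upper bound, under this optimal choice each eigenvalue contributes at most $g(\lambda_i) \triangleq \min\{f_{\text{PNC}}(\lambda_i), f_{\text{CD}}(\lambda_i)\}$. I would show that $g$ attains its maximum on $(1,2)$ exactly at the crossover $\lambda=8/5$: on $[8/5,2)$ the relevant branch $f_{\text{PNC}}(\lambda)=\log(2/\lambda)$ is decreasing in $\lambda$ and so is maximized at $\lambda=8/5$, while on $(1,8/5]$ the branch $f_{\text{CD}}(\lambda)$ is maximized when $\lambda(2-\lambda)$ is minimized on that interval, which occurs at the endpoint $\lambda=8/5$. A direct computation at this point gives $f_{\text{PNC}}(8/5)=\log(10/8)=\log(5/4)$ and $f_{\text{CD}}(8/5)=-\tfrac{1}{2}\log(16/25)=\log(5/4)$. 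Summing over the $l$ eigenvalues in $(1,2)$ then yields $\Delta^{SD}\leq l\log(5/4)$, with equality iff every $\lambda_{k+i}$ equals $8/5$. The only subtle point is the monotonicity argument that justifies reducing the joint optimization over $l'$ to a per-eigenvalue comparison, and this is immediate from the descending ordering of the $\{\lambda_i\}$; beyond that the proof is a one-variable calculus exercise and I do not foresee further obstacles.
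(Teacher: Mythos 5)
Your proposal is correct and is essentially the argument the paper leaves implicit (no proof is given for this corollary): $\Delta^{SD}$ splits additively into per-eigenvalue costs $\log(2/\lambda_i)$ versus $-\tfrac{1}{2}\log(\lambda_i(2-\lambda_i))$, the unique crossover in $(1,2)$ is at $\lambda=8/5$ where both equal $\log(5/4)$, and the descending ordering of the $\lambda_i$ makes the pointwise-optimal assignment realizable as a single cutoff $l'$. Your endpoint checks, the monotonicity of each branch toward the crossover, and the resulting bound $\Delta^{SD}\leq l\log(5/4)$ with equality iff all $\lambda_{k+i}=8/5$ are all accurate.
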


\begin{rema}
From the above corollary, the asymptotic gap to the sum-capacity upper bound
is $l\log (5/4)$ bits for the worst case. Noting $l\leq n_{m},m\in \left\{
A,B\right\} $, we see that the gap is at most $\min \{n_{A},n_{B}\}\log (5/4)$ bits, or $\frac{1}{2}\log (5/4)\approx
0.161$ bits per user-antenna.
\end{rema}

\subsection{Average Sum-Rate via Large-System Analysis}

In this subsection, we investigate the statistical average of the rate gap $%
\Delta ^{SD}$ in fading channels. To this end, the distribution of \{$\lambda _{i}$\}, i.e., the eigenvalues of $%
\mathbf{U}_{A}\mathbf{U}_{A}^{\dagger }+\mathbf{U}_{B}\mathbf{U}%
_{B}^{\dagger }$, is required. However, such a distribution is difficult to
obtain in general. Here, we employ the large-system analysis to find an
approximation of the distribution of \{$\lambda _{i}$\}. The distribution
obtained in this way becomes exact as the number of antennas in the
system is large.

We assume Rayleigh fading, in which the channel coefficients are i.i.d.
circularly symmetric complex Gaussian random variables. Then, the matrices $%
\mathbf{U}_{A}$ and $\mathbf{U}_{B}$ in (\ref{Channel_Decomp}) are truncated
uniformly distributed unitary matrices, or alternatively, are asymptotically
free random matrices \cite{TulinoTextBook}. Thus, we can use the theory of
free probabilities to derive the asymptotic eigenvalue distribution (a.e.d.)
of $\mathbf{U}_{A}\mathbf{U}_{A}^{\dagger }+\mathbf{U}_{B}\mathbf{U}%
_{B}^{\dagger }$ as $n_{R}$ tends to infinity, with the result given in the
lemma below. Define $\eta _{m}\triangleq \frac{n_{m}}{n_{R}},m\in \left\{
A,B\right\} $.

\begin{lemm}
\label{Lemma 4}As $n_{R}\rightarrow \infty $ with $\frac{n_{A}}{n_{R}}%
\rightarrow \eta _{A}$ and $\frac{n_{B}}{n_{R}}\rightarrow \eta _{B}$, the a.e.d. of $\mathbf{U}_{A}\mathbf{U}_{A}^{\dagger }+\mathbf{U}%
_{B}\mathbf{U}_{B}^{\dagger }$ is given by%
\begin{eqnarray}
\mathcal{F}\left( \lambda ;\eta _{A},\eta _{B}\right) &=&\left[ 1-\eta
_{A}-\eta _{B}\right] ^{+}\delta \left( \lambda \right) +\left\vert \eta
_{A}-\eta _{B}\right\vert \delta \left( \lambda -1\right) +\left[ \eta
_{A}+\eta _{B}-1\right] ^{+}\delta \left( \lambda -2\right)  \notag \\
&&+\frac{1}{\pi }\func{Im}\left[ \frac{\sqrt{(1-\eta _{A}-\eta
_{B})^{2}-\left( 2\lambda -\lambda ^{2}\right) \left( 1- \left( \frac{\eta
_{A}-\eta _{B}}{\lambda -1}\right) ^{2}\right) }}{2\lambda -\lambda ^{2}}%
\right]  \label{Distribution}
\end{eqnarray}%
where $\delta \left( \cdot \right) $ is a Dirac delta function and $\func{Im}%
\left[ \cdot\right] $ is the imaginary part of a complex number.
\end{lemm}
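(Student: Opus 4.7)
The plan is to compute the asymptotic eigenvalue distribution of $\mathbf{U}_{A}\mathbf{U}_{A}^{\dagger}+\mathbf{U}_{B}\mathbf{U}_{B}^{\dagger}$ by recognizing both summands as random orthogonal projections and applying the machinery of free probability. First I would observe that for each $m\in\{A,B\}$, the matrix $\mathbf{U}_{m}\mathbf{U}_{m}^{\dagger}$ is the orthogonal projection onto $\mathcal{C}(\mathbf{H}_{mR})$, so it has rank $n_{m}$ and its empirical spectral distribution is the Bernoulli measure $\mu_{m}=(1-\eta_{m})\delta_{0}+\eta_{m}\delta_{1}$. Under Rayleigh fading, $\mathcal{C}(\mathbf{H}_{mR})$ is a uniformly random $n_{m}$-dimensional subspace of $\mathbb{C}^{n_{R}}$, so $\mathbf{U}_{m}$ may be chosen Haar-distributed on the corresponding Stiefel manifold, making $\mathbf{U}_{A}\mathbf{U}_{A}^{\dagger}$ and $\mathbf{U}_{B}\mathbf{U}_{B}^{\dagger}$ unitarily invariant and independent. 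By Voiculescu's asymptotic freeness theorem for independent unitarily invariant ensembles, these two projections become asymptotically free as $n_{R}\to\infty$ with $n_{A}/n_{R}\to\eta_{A}$ and $n_{B}/n_{R}\to\eta_{B}$. Hence the limiting spectrum of the sum is the free additive convolution $\mu_{A}\boxplus\mu_{B}$.

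Second, I would compute $\mu_{A}\boxplus\mu_{B}$ via the R-transform. For each Bernoulli atom $\mu_{m}$, the Cauchy transform $G_{\mu_{m}}(z)=(1-\eta_{m})/z+\eta_{m}/(z-1)$ may be inverted by solving a quadratic to obtain $K_{m}(z)=G_{\mu_{m}}^{-1}(z)$, and the R-transform is then $R_{\mu_{m}}(z)=K_{m}(z)-1/z$. Under free convolution, $R_{\mu_{A}\boxplus\mu_{B}}(z)=R_{\mu_{A}}(z)+R_{\mu_{B}}(z)$, so adding $1/z$ back and inverting yields the Cauchy transform $G(z)$ of the target distribution. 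Applying the Stieltjes inversion formula $f(\lambda)=-\frac{1}{\pi}\lim_{\epsilon\downarrow 0}\operatorname{Im}[G(\lambda+i\epsilon)]$ to the nonpolar part of $G$ should recover precisely the continuous density appearing as the last term of (\ref{Distribution}), with the factor $2\lambda-\lambda^{2}$ in the denominator and the characteristic square-root expression with the $(\eta_{A}-\eta_{B})/(\lambda-1)$ dependence in the numerator.

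Third, I would pin down the three point masses at $\lambda=0,1,2$ either as residues of $G(z)$ at those points or, more transparently, by a dimension-counting argument that is robust in the generic (full-rank) setting implicit in Rayleigh fading. The atom at $0$ corresponds to the orthogonal complement of $\mathcal{C}(\mathbf{H}_{AR})+\mathcal{C}(\mathbf{H}_{BR})$, whose dimension is generically $[n_{R}-n_{A}-n_{B}]^{+}$, yielding mass $[1-\eta_{A}-\eta_{B}]^{+}$. The atom at $2$ is carried by the common subspace $\mathcal{S}_{A\parallel B}$, which has generic dimension $k=[n_{A}+n_{B}-n_{R}]^{+}$, yielding mass $[\eta_{A}+\eta_{B}-1]^{+}$. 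The atom at $1$ aggregates $\mathcal{S}_{A\perp B}$ and $\mathcal{S}_{B\perp A}$: their generic dimensions are $d_{A}=[n_{A}-n_{B}]^{+}$ and $d_{B}=[n_{B}-n_{A}]^{+}$, summing to $|n_{A}-n_{B}|$ and producing mass $|\eta_{A}-\eta_{B}|$. Matching these against residues of $G(z)$ provides a cross-check.

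The main obstacle will be the algebraic manipulation: inverting the quadratic defining each $K_{m}$, summing the R-transforms, and inverting again to obtain $G(z)$ in a form whose imaginary part on the real axis simplifies to exactly the expression in (\ref{Distribution}). Branch choices for the square roots must be consistent so that the recovered density is nonnegative, the support is correctly identified (inside $[0,2]$), and the total mass (continuous plus atomic) is $1$. Once the branch and simplification are handled, combining the continuous density from the Stieltjes inversion with the three atoms computed above produces the claimed expression for $\mathcal{F}(\lambda;\eta_{A},\eta_{B})$.
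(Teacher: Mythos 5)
Your proposal is correct and follows essentially the same route as the paper's Appendix C: identify each $\mathbf{U}_{m}\mathbf{U}_{m}^{\dagger}$ as a projection with Bernoulli spectral law, invoke asymptotic freeness to reduce the problem to the free additive convolution computed via R-transforms, recover the continuous density by Stieltjes inversion, and attach the atoms at $\lambda=0,1,2$ by the same dimension-counting argument. The only cosmetic differences are your use of the Cauchy-transform convention and the optional residue cross-check for the point masses.
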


The proof of the above lemma can be found in Appendix C. As $%
n_{R}\rightarrow \infty $, we see that for $\eta _{A}+\eta _{B}\geq 1$, the
portion of eigenvalues $\left\{ \lambda _{i}\right\} $ equal to 2 is given
by $\eta _{A}+\eta _{B}-1$. This portion corresponds to the dimension of the common space $\mathcal{S}_{A\parallel B}$ of $\textbf{H}_{AR}$ and $\textbf{H}_{BR}$. In addition, for $\eta _{A}\neq \eta _{B}$, the
portion of eigenvalues $\left\{ \lambda _{i}\right\} $ equal to 1 is given
by $\left\vert \eta _{A}-\eta _{B}\right\vert $. This portion corresponds to the dimension of $\mathcal{S}_{A\perp B}$ if $\eta_A \geq \eta_B$ or the dimension of $\mathcal{S}_{B\perp A}$ if $\eta_A < \eta_B$.

We are now ready to present the following asymptotic result.

\begin{theo}
\label{Theorem 4}As $n_{R}\rightarrow \infty $ with $\frac{n_{A}}{n_{R}}%
\rightarrow \eta _{A}$ and $\frac{n_{B}}{n_{R}}\rightarrow \eta _{B}$, the gap to the capacity upper bound satisfies
\begin{equation}
r^{SD} \triangleq \lim_{n_{R}\rightarrow \infty }\text{ }\frac{\Delta ^{SD}}{n_{R}}\text{ }%
=-\left( \int_{1}^{\frac{8}{5}}\log \sqrt{\lambda (2-\lambda )}+\int_{\frac{8%
}{5}}^{2}\log \frac{\lambda }{2}\right) \mathcal{F}(\lambda ;\eta _{A},\eta
_{B})d\lambda .  \label{Limit1}
\end{equation}
\end{theo}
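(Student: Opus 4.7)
The plan is to combine Theorem~\ref{Theorem 3} (the deterministic gap formula), Corollary~2 (the optimal choice of $l{\acute{}}$), and Lemma~\ref{Lemma 4} (the a.e.d.\ of $\mathbf{U}_A\mathbf{U}_A^\dagger + \mathbf{U}_B\mathbf{U}_B^\dagger$). Concretely, first I would rewrite $\Delta^{SD}$ from Theorem~\ref{Theorem 3} as a sum over eigenvalues and then recognize it as an empirical-measure integral of a fixed function.

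\textbf{Step 1: Rewriting $\Delta^{SD}$ as a sum of a single function.} Using the optimal $l{\acute{}}$ from Corollary~2, which splits the eigenvalues in $(1,2)$ at the threshold $8/5$, we have
\begin{equation}
\Delta^{SD} \;=\; -\sum_{i:\,\lambda_i\in(8/5,\,2)} \log\frac{\lambda_i}{2} \;-\; \sum_{i:\,\lambda_i\in(1,\,8/5]} \log\sqrt{\lambda_i(2-\lambda_i)}. \notag
\end{equation}
Defining
\begin{equation}
f(\lambda) \;=\; \begin{cases} \log\sqrt{\lambda(2-\lambda)}, & \lambda\in(1,\,8/5], \\ \log(\lambda/2), & \lambda\in(8/5,\,2), \\ 0, & \text{otherwise}, \end{cases} \notag
\end{equation}
gives $\Delta^{SD} = -\sum_{i=1}^{n_R} f(\lambda_i)$, where we can extend the sum over all $n_R$ eigenvalues of $\mathbf{U}_A\mathbf{U}_A^\dagger + \mathbf{U}_B\mathbf{U}_B^\dagger$ since $f$ is zero outside $(1,2)$. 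A useful observation is that at the threshold $\lambda=8/5$ both pieces agree, so $f$ is continuous on $(1,2)$.

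\textbf{Step 2: Passing to the limit via the a.e.d.} Dividing by $n_R$ gives
\begin{equation}
\frac{\Delta^{SD}}{n_R} \;=\; -\int f(\lambda)\,d\hat{\mathcal{F}}_{n_R}(\lambda), \notag
\end{equation}
where $\hat{\mathcal{F}}_{n_R}$ is the empirical spectral distribution. By Lemma~\ref{Lemma 4}, $\hat{\mathcal{F}}_{n_R}\Rightarrow \mathcal{F}(\cdot;\eta_A,\eta_B)$ almost surely as $n_R\to\infty$ with $n_m/n_R\to\eta_m$. The crucial feature of $f$ is that it vanishes at $\lambda=1$ and $\lambda=2$ (since $\sqrt{1\cdot 1}=1$ and $2/2=1$, both having $\log=0$), so the three atoms of $\mathcal{F}$ at $0$, $1$, and $2$ all contribute zero to $\int f\,d\mathcal{F}$. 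Hence the limit reduces to the integral of $f$ against the continuous part of $\mathcal{F}$ over $(1,2)$, which is exactly the claimed expression.

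\textbf{Step 3: Technical closure.} The main obstacle is that weak convergence alone does not suffice because $f$ is unbounded near the endpoints $\lambda=1$ and $\lambda=2$ (where $\log\sqrt{\lambda(2-\lambda)}\to-\infty$ and $\log(\lambda/2)\to 0$ respectively; only the lower endpoint is problematic). To justify the interchange of limit and integral, I would invoke a uniform-integrability argument: the density component of $\mathcal{F}$ in (\ref{Distribution}) exhibits standard square-root edge behavior near $\lambda=1$, which makes the integrand $f(\lambda)\,\mathcal{F}(\lambda;\eta_A,\eta_B)$ absolutely integrable, and one can cut off $(1,1+\epsilon)\cup(2-\epsilon,2)$, apply weak convergence on the complement (where $f$ is bounded and continuous), and bound the tail contributions uniformly in $n_R$ using concentration results for free unitary matrices (or, equivalently, standard bounds on the smallest/largest angles between the random subspaces). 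Letting $\epsilon\to 0$ yields the identity in the theorem. The continuity of $f$ at the interior threshold $\lambda=8/5$ ensures there is no boundary atom issue internal to the integration region.
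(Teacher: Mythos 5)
Your proof follows the same route as the paper's own (which simply says: apply Lemma~\ref{Lemma 4} and let $n_R \to \infty$ in (\ref{Delta_SD})), and your Steps~1--2 correctly supply the details the paper omits, including the role of Corollary~2 in fixing the split at $\lambda = 8/5$ and the observation that $f$ vanishes at the atoms $\lambda \in \{0,1,2\}$. One correction to Step~3: $\log\sqrt{\lambda(2-\lambda)}$ equals $0$ at $\lambda=1$, not $-\infty$ (the product $\lambda(2-\lambda)=1-(\lambda-1)^2$ is \emph{maximized} there, and on the subinterval $(1,8/5]$ where you actually use this branch it stays in $[\log(4/5),0]$), so your $f$ is bounded and continuous on all of $[1,2]$; weak convergence of the empirical spectral distribution therefore suffices directly, and the uniform-integrability, cutoff, and concentration machinery is unnecessary.
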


\begin{proof}
The a.e.d. of $\lambda _{i}$ is given by Lemma \ref{Lemma 4}. Then, letting $%
n_{R}$ tends to infinity in (\ref{Delta_SD}), we immediately obtain the
theorem.
\end{proof}

Let $\overline{R}^{UL}$ be the average sum-capacity upper bound. Then, for a
large $n_{R}$, the average sum-rate of the proposed SD scheme can be first-order
approximated as%
\begin{equation}\label{approx}
\overline{R}^{SD}=\overline{R}^{UL}-n_{R}r^{SD}
\end{equation}%
with $r^{SD}$ given in (\ref{Limit1}).

We next study the symmetric case that the two users are equipped with the
same number of antennas, i.e., $\eta _{A}=\eta _{B}=\eta $.

\begin{coro}
\label{Corollary 5} For $0\leq \eta \leq \frac{1}{10}$,%
\begin{subequations}
\begin{equation}
r^{SD} %
=-\int_{1}^{\lambda ^{\ast }(\eta)}\log \sqrt{\lambda (2-\lambda )}\mathcal{G}%
(\lambda ;\eta )d\lambda ;  \label{Limit2}
\end{equation}%
for $\frac{1}{10}<\eta \leq 1$,%
\begin{equation}
r^{SD}=-\left(
\int_{1}^{\frac{8}{5}}\log \sqrt{\lambda (2-\lambda )}+\int_{\frac{8}{5}%
}^{\lambda ^{\ast }(\eta )}\log \frac{\lambda }{2}\right) \mathcal{G}%
(\lambda ;\eta )d\lambda ,  \label{Limit3}
\end{equation}%
where $\lambda ^{\ast }(\eta )=1+\sqrt{1-\left( 1-2\eta \right) ^{2}}$ and%
\begin{equation}
\mathcal{G}\left( \lambda ;\eta \right) =\frac{1}{\pi }\frac{\sqrt{\left(
2\lambda -\lambda ^{2}\right) -(1-2\eta )^{2}}}{2\lambda -\lambda ^{2}}.
\label{G_lemda}
\end{equation}
\end{subequations}
\end{coro}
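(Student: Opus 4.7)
The plan is to derive Corollary 5 by directly specializing Theorem 4 and Lemma 4 to the symmetric case $\eta_A=\eta_B=\eta$, then partitioning the integration range according to where the endpoint $\lambda^{\ast}(\eta)$ of the continuous spectrum falls relative to the threshold $\tfrac{8}{5}$ at which the integrand in (\ref{Limit1}) changes form.

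First, I would specialize the a.e.d.\ in Lemma \ref{Lemma 4} to $\eta_A=\eta_B=\eta$. The coefficient $|\eta_A-\eta_B|$ multiplying $\delta(\lambda-1)$ vanishes, the ``$(\eta_A-\eta_B)/(\lambda-1)$'' term inside the square root disappears, and the continuous density reduces to exactly $\mathcal{G}(\lambda;\eta)$ in (\ref{G_lemda}). Taking the imaginary part of the principal square root forces the argument $(1-2\eta)^2-(2\lambda-\lambda^2)$ to be negative, i.e.\ $(1-\lambda)^2 \leq 1-(1-2\eta)^2 = 4\eta(1-\eta)$. This pins down the support of $\mathcal{G}(\cdot;\eta)$ as $[\,1-2\sqrt{\eta(1-\eta)},\,1+2\sqrt{\eta(1-\eta)}\,]=[\,2-\lambda^{\ast}(\eta),\,\lambda^{\ast}(\eta)\,]$, with the claimed expression $\lambda^{\ast}(\eta)=1+\sqrt{1-(1-2\eta)^2}$.

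Next, I would substitute this specialized $\mathcal{F}$ into (\ref{Limit1}). Three simplifications allow all the singular parts to be discarded: the atom at $\lambda=0$ lies outside the integration range $[1,2]$; the atom at $\lambda=2$ (present only when $\eta>\tfrac{1}{2}$) contributes zero because the integrand on that interval, $\log(\lambda/2)$, vanishes at $\lambda=2$; and the symmetric case produces no atom at $\lambda=1$. Hence (\ref{Limit1}) reduces to the integral of the displayed logarithmic expression against $\mathcal{G}(\lambda;\eta)$ over $[1,2]\cap\mathrm{supp}(\mathcal{G})=[1,\min\{\lambda^{\ast}(\eta),2\}]$, and since $\lambda^{\ast}(\eta)\leq 2$ for all $\eta\in[0,1]$, the upper limit is simply $\lambda^{\ast}(\eta)$.

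Finally, I would split according to the ordering of $\lambda^{\ast}(\eta)$ and $\tfrac{8}{5}$. Solving $\lambda^{\ast}(\eta)=\tfrac{8}{5}$ yields $4\eta(1-\eta)=\tfrac{9}{25}$, so $\eta\in\{\tfrac{1}{10},\tfrac{9}{10}\}$; for $\eta\leq\tfrac{1}{10}$ one has $\lambda^{\ast}(\eta)\leq\tfrac{8}{5}$ and only the $\log\sqrt{\lambda(2-\lambda)}$ integrand appears, yielding (\ref{Limit2}); for $\eta>\tfrac{1}{10}$ the integration range crosses $\tfrac{8}{5}$, splitting into (\ref{Limit3}) (and the formula remains valid for $\eta\geq\tfrac{9}{10}$ since then the second integral collapses to zero because its endpoints coincide). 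The main obstacle in this argument is the bookkeeping around the two atomic masses, specifically verifying that the $\lambda=2$ atom indeed contributes nothing when $\eta>\tfrac{1}{2}$; once that is checked, everything else is a routine restriction of the integration interval to the continuous support of $\mathcal{G}(\cdot;\eta)$.
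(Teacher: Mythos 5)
Your proposal is correct and follows essentially the same route as the paper's own (much terser) proof: specialize Lemma \ref{Lemma 4} to $\eta_A=\eta_B=\eta$ so that the density reduces to $\mathcal{G}(\lambda;\eta)$ supported on $[2-\lambda^{\ast},\lambda^{\ast}]$, substitute into (\ref{Limit1}), and split the range at $\tfrac{8}{5}$ according to whether $\lambda^{\ast}(\eta)$ exceeds it; your extra bookkeeping of the atoms and of the regime $\eta\geq\tfrac{9}{10}$ only fills in details the paper leaves implicit. One microscopic quibble: for $\eta>\tfrac{9}{10}$ the endpoints of the second integral in (\ref{Limit3}) do not coincide, but the integral still vanishes because $\mathcal{G}(\cdot;\eta)$ is zero on $(\lambda^{\ast},\tfrac{8}{5})$.
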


\begin{proof}
Letting $\eta _{A}=\eta _{B}=\eta $, we obtain that $\mathcal{F}\left(
\lambda ;\eta _{A},\eta _{B}\right) =\mathcal{G}\left( \lambda ;\eta \right)
$ for $1<\lambda <\lambda ^{\ast },$ and $\mathcal{F}\left( \lambda ;\eta
_{A},\eta _{B}\right) =0$ for $\lambda ^{\ast }<\lambda <2$. In addition, $%
\lambda ^{\ast }(\eta)=\frac{8}{5}$ implies $\eta =\frac{1}{10}$. Based on these
facts and Theorem \ref{Theorem 4}, we obtain the corollary.
\end{proof}

\begin{rema}
From the above, we see that, if $\eta \leq \frac{1}{10},$ the probability of
$\lambda _{i}>\frac{8}{5}$ approaches zero as $n_{R}\rightarrow \infty $,
implying that complete decoding achieves a higher rate than PNC
for all spatial streams.
\end{rema}

\begin{coro}
\label{Corollary 6}The asymptotic normalized rate gap $r^{SD}$ in (\ref%
{Limit1}) is maximized at $\eta _{A}=\eta _{B}=1/2$, with the maximum given
by%
\begin{equation}
-\frac{1}{\pi }\left( \int_{1}^{\frac{8}{5}}\frac{\log \sqrt{\lambda
(2-\lambda )}}{\sqrt{2\lambda -\lambda ^{2}}}d\lambda +\int_{\frac{8}{5}}^{2}%
\frac{\log \frac{\lambda }{2}}{\sqrt{2\lambda -\lambda ^{2}}}d\lambda
\right) \approx 0.053\text{ bit.}
\end{equation}
\end{coro}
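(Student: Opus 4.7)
The plan is to show that the continuous part of the asymptotic eigenvalue distribution $\mathcal{F}(\lambda;\eta_A,\eta_B)$ is pointwise maximized on $(1,2)$ at $\eta_A=\eta_B=1/2$, and that the integrand weighting this density in the expression for $r^{SD}$ is nonnegative. These two facts together immediately give the maximum. First I would note that the formula for $r^{SD}$ in Theorem~\ref{Theorem 4} involves only values of $\lambda\in[1,2]$, so the Dirac masses at $\lambda=0$ are outside the integration range, while the masses at $\lambda=1$ and $\lambda=2$ multiply the integrand
\begin{equation*}
f(\lambda)\triangleq
\begin{cases}
-\log\sqrt{\lambda(2-\lambda)}, & \lambda\in[1,8/5],\\
-\log(\lambda/2), & \lambda\in[8/5,2],
\end{cases}
\end{equation*}
which vanishes at both endpoints. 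Hence the delta contributions are zero and only the continuous density $g(\lambda;\eta_A,\eta_B)$ from Lemma~\ref{Lemma 4} matters. A one-line check shows $f(\lambda)\ge 0$ on $[1,2]$ since $\lambda(2-\lambda)\le 1$ on $[1,8/5]$ and $\lambda/2\le 1$ on $[8/5,2]$.

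Next I would inspect the continuous density
\begin{equation*}
g(\lambda;\eta_A,\eta_B)=\frac{1}{\pi(2\lambda-\lambda^2)}\sqrt{(2\lambda-\lambda^2)\left(1-\left(\tfrac{\eta_A-\eta_B}{\lambda-1}\right)^2\right)-(1-\eta_A-\eta_B)^2}
\end{equation*}
on its support in $(1,2)$, where $2\lambda-\lambda^2>0$. The radicand is the difference of $(2\lambda-\lambda^2)$ and two manifestly nonnegative terms, $(2\lambda-\lambda^2)\bigl(\tfrac{\eta_A-\eta_B}{\lambda-1}\bigr)^2$ and $(1-\eta_A-\eta_B)^2$. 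Both of these vanish simultaneously if and only if $\eta_A=\eta_B$ and $\eta_A+\eta_B=1$, i.e.\ $\eta_A=\eta_B=1/2$. At that point the radicand reaches its maximum value $2\lambda-\lambda^2$, giving $g(\lambda;1/2,1/2)=1/(\pi\sqrt{2\lambda-\lambda^2})$. Thus for every fixed $\lambda\in(1,2)$ and every admissible $(\eta_A,\eta_B)\in[0,1]^2$,
\begin{equation*}
g(\lambda;\eta_A,\eta_B)\le g(\lambda;1/2,1/2).
\end{equation*}

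Combining the nonnegativity of $f$ with this pointwise domination yields $r^{SD}(\eta_A,\eta_B)\le r^{SD}(1/2,1/2)$, so the maximum is attained at $\eta_A=\eta_B=1/2$. Substituting the simplified density $\mathcal{G}(\lambda;1/2)=1/(\pi\sqrt{2\lambda-\lambda^2})$ (equivalently the $\eta=1/2$ specialization of Corollary~\ref{Corollary 5}, which gives $\lambda^*(1/2)=2$ and places the upper limit at $2$) into the two-piece expression from Theorem~\ref{Theorem 4} produces exactly the claimed closed form. The final numerical value $\approx 0.053$ then follows by routine numerical quadrature of the two one-dimensional integrals; since each integrand has only mild inverse-square-root endpoint singularities at $\lambda=0$ and $\lambda=2$ (and is smooth away from them), standard adaptive quadrature suffices.

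There is no real obstacle here beyond bookkeeping: the only step that requires any care is verifying that the delta masses at $\lambda=1,2$ drop out, which follows from $f(1)=f(2)=0$, and noting that shrinking the support of $g$ (as $(\eta_A,\eta_B)$ moves away from $(1/2,1/2)$) cannot compensate for the pointwise bound above since $f\ge 0$ everywhere on $[1,2]$.
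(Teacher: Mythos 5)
Your proposal is correct and follows essentially the same route as the paper: both arguments rest on showing that the continuous part of $\mathcal{F}(\lambda;\eta_A,\eta_B)$ is pointwise maximized at $\eta_A=\eta_B=1/2$ and then integrating against the (nonnegative) rate-gap integrand. The only difference is cosmetic: the paper splits the maximization into two steps (first $\eta_A=\eta_B$ at fixed sum, then optimize $\eta$), whereas you do it in one step via the radicand decomposition, and you additionally spell out two details the paper leaves implicit — that the integrand vanishes at $\lambda=1,2$ so the Dirac masses drop out, and that $f\ge 0$ makes the pointwise domination transfer to the integral even though the support of the density shrinks.
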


\begin{proof}
We first consider optimizing $\eta _{A}$ and $\eta _{B}$ under the
constraint of $\eta _{A}+\eta _{B}=2\eta $. From (\ref{Distribution}), we
see that, for any $\lambda \in (1,2)$, $\mathcal{F}\left( \lambda ;\eta
_{A},\eta _{B}\right) $ is maximized at $\eta _{A}=\eta _{B}=\eta $, and so is $r^{SD}$.

What remains is to optimize $\eta $. From (\ref{G_lemda}), $\mathcal{G}\left( \lambda ;\eta \right) $ is
maximized at $\eta =1/2$. Therefore, $r^{SD}$ is
maximized at $\eta =1/2$, which completes the proof.
\end{proof}

Fig. \ref{SD_rate_gp} illustrates the function of the normalized asymptotic
rate gap $r^{SD}$ against $\eta $. From Fig. \ref{SD_rate_gp}, this rate gap is maximized at $%
\eta =1/2$, which verifies Corollary \ref{Corollary 6}. Also, this rate gap
vanishes as $\eta $ tends to 0, implying that, for any fixed $n_{A}=n_{B}$,
the proposed space-division scheme can achieve the asymptotic capacity as $%
n_{R}$ tends to infinity. Moreover, this rate gap vanishes as $\eta $ tends
to 1. This agrees with the fact that, for $\eta \geq 1$, or equivalently, $%
n_{A}=n_{B}\geq n_{R}$, the proposed space-division based scheme reduces to
the GSVD scheme in \cite{HJYangIT11}.

\section{Numerical Results}

In this section, we provide numerical results to evaluate the performance of
the proposed space-division based network-coding strategy for MIMO TWRCs.
The results presented below are obtained by averaging over 1,0000 random
channel realizations. Rayleigh-fading is assumed, i.e., the coefficients in
the channel matrices are independently and identically drawn from $\mathcal{N%
}_{c}(0,1)$.

We first present the numerical results for a MIMO TWRC of $n_{A}=n_{B}=2$
and $n_{R}=4$ in Fig. \ref{Figure nT2nR4}. The sum-capacity upper bound
(UB), the proposed space-division (SD) scheme, the GSVD scheme in \cite%
{HJYangIT11} and the complete-decoding scheme in \cite{YangIT11} are
included for comparison. We see that, at a relatively high SNR, e.g., SNR =
25 dB, the rate gap between the proposed SD scheme and the sum-capacity
upper bound is about 0.15 bit/channel-use, which is almost unnoticeable. We
also plot the high-SNR analytical result in (\ref{approx}) of the proposed SD scheme.
We observe that our analytical result are very tight for SNRs greater than
10 dB. From this figure, it is clear that the proposed SD scheme\
significantly outperforms the other schemes in the entire SNR range of
interest. For example, at the rate of $14$ bits per channel use, the
proposed SD scheme outperforms the complete-decoding and GSVD schemes by
more than $2.4$ dB. The slope of the achievable sum-rate curve is parallel
to that of the capacity upper bound, which implies that the proposed SD
scheme achieves full multiplexing gain.

In Fig. \ref{Figure nT2nR3}, we present the numerical results for a MIMO
TWRC of $n_{A}=n_{B}=2$ and $n_{R}=3$. The same set of rate curves from
simulation as in Fig. \ref{Figure nT2nR4} are included for comparison.
Again, we see that the gap between the sum-rate of the proposed SD scheme
and the sum-capacity upper bound is almost unnoticeable at a relatively high
SNR, e.g., greater than 15 dB. The proposed SD scheme outperforms its
counterparts throughout the SNR range of interest.

In Figures \ref{Figure Eda1_2} and \ref{Figure Eda2_3}, we show the scaling
effect of the antennas on the average achievable sum-rates. We see that the
asymptotic rate gap between the proposed SD scheme and the sum-capacity
upper bound increases linearly as the increase of $n_{R}$ for fixed $\eta _{A}$
and $\eta _{B}$. For example, for the case of $\eta _{A}=\eta
_{B}=1/2$ in Fig. \ref{Figure Eda1_2}, the rate gap at SNR = 25 dB is 0.14
bits per channel use for $n_{R}=4$; 0.29 bits per channel use for $n_{R}=6$;
and 0.40 bits per channel use for $n_{R}=8$. These numerical results agree
well with the asymptotic results in Corollaries \ref{Corollary 5} and \ref%
{Corollary 6}.

In Fig. \ref{Figure CapacityRegion}, we show the achievable rate-region of
the proposed SD scheme. The capacity-region outer bound and the achievable
rate-region of the complete-decoding scheme are also included for
comparison. From Fig. \ref{Figure CapacityRegion}, the difference between
the achievable rate-region of the proposed SD scheme and the capacity region
outer bound is negligible for a relatively high SNR. We also see that the
proposed SD scheme can achieve rate-pairs that cannot be achieved by the
complete-decoding scheme.

\section{Conclusion}

In this paper, we developed a new joint channel decomposition for MIMO
TWRCs. Based on that, we proposed a space-division based network-coding
scheme with the achievable sum-rate within $\frac{1}{2}\log (5/4)\approx
0.161$ bit per user-antenna of the capacity upper bound in the high SNR
regime. We also show that, for Rayleigh-fading MIMO TWRCs, the average gap
between the achievable rate of the proposed scheme and the capacity upper
bound is no more than $0.053$ bit per relay-antenna in the high SNR regime. We
remark that this marginal gap is due to the fact that the complete-decoding
and PNC strategies, collectively, fail to achieve the asymptotic capacity
upper bound, even for the case of single-antenna users. To completely remove
this gap, more advanced multi-dimension PNC relaying strategies may be
required. Moreover, in this paper, channel state information is assumed to
be globally known by both the transmitter and receiver sides. It is of
theoretical, and more practical, interests to investigate how to efficiently
communicate over MIMO TWRCs where only the receiver-side channel state
information is available. We will look into these problems in our future
research.

\appendices
\section{Proof of Theorem \protect\ref{Theorem 1}}
Here we prove Theorem \ref{Theorem 1}. Let $\lambda _{i}$ be an eigenvalue
of $\mathbf{U}_{A}\mathbf{U}_{A}^{\dagger }+\mathbf{U}_{B}\mathbf{U}%
_{B}^{\dagger }$ and $\mathbf{u}_{i}$ be the corresponding unit-length eigenvector
satisfying%
\begin{equation}
\left( \mathbf{U}_{A}\mathbf{U}_{A}^{\dagger }+\mathbf{U}_{B}\mathbf{U}%
_{B}^{\dagger }\right) \mathbf{u}_{i}=\lambda _{i}\mathbf{u}_{i}.
\label{eigen}
\end{equation}%
We are interested in four cases of $\lambda _{i}$: (a) $\lambda _{i}=2$; (b)
$1<\lambda _{i}<2;$ (c) $\lambda _{i}=1$; and (d) $0<\lambda _{i}<1$.

For case (a), $\lambda _{i}=2$ implies that%
\begin{equation*}
\mathbf{U}_{A}\mathbf{U}_{A}^{\dagger }\mathbf{u}_{i}=\mathbf{u}_{i}\text{
and }\mathbf{U}_{B}\mathbf{U}_{B}^{\dagger }\mathbf{u}_{i}=\mathbf{u}_{i}%
\text{.}
\end{equation*}%
Thus, $\mathbf{u}_{i}$ lies in the common space of $\mathcal{C}(\mathbf{U}%
_{A})$ and $\mathcal{C}(\mathbf{U}_{B})$.

For case (c), we have%
\begin{subequations}
\begin{equation}
\mathbf{U}_{A}\mathbf{U}_{A}^{\dagger }\mathbf{u}_{i}=\mathbf{u}_{i}\text{
and }\mathbf{U}_{B}\mathbf{U}_{B}^{\dagger }\mathbf{u}_{i}=\mathbf{0}
\label{case c1}
\end{equation}%
or%
\begin{equation}
\mathbf{U}_{A}\mathbf{U}_{A}^{\dagger }\mathbf{u}_{i}=\mathbf{0}\text{ and }%
\mathbf{U}_{B}\mathbf{U}_{B}^{\dagger }\mathbf{u}_{i}=\mathbf{u}_{i}\text{.}
\label{case c2}
\end{equation}
\end{subequations}

We next show that the eigenvalues in case (b) and case (d) appear in a
pair-wise manner. Denote%
\begin{equation}
\mathbf{l}_{m;i}=\mathbf{U}_{m}\left( \mathbf{U}_{m}^{\dagger }\mathbf{U}%
_{m}\right) ^{-1}\mathbf{U}_{m}^{\dagger }\mathbf{u}_{i}=\mathbf{U}_{m}%
\mathbf{U}_{m}^{\dagger }\mathbf{u}_{i}.  \label{Eq 3}
\end{equation}%
Note that $\mathbf{l}_{m;i}$ is the projection of vector $\mathbf{u}_{i}$
onto the column space of $\mathbf{U}_{m}$. From (\ref{eigen}), we obtain%
\begin{equation}
\mathbf{u}_{i}=\frac{1}{\lambda _{i}}\left( \mathbf{l}_{A;i}+\mathbf{l}%
_{B;i}\right) .  \label{Eq 4}
\end{equation}%
The above implies that $\mathbf{u}_{i}$, $\mathbf{l}_{A;i}$ and $\mathbf{l}%
_{B;i}$ lie on the same two-dimension plane (denoted by $\mathcal{S}_{i}$).
We have the following facts.

\begin{lemm}
For any $\lambda _{i}$ in case (b), the corresponding $\mathbf{u}_{i}$ is
the angular bisector of $\mathbf{l}_{A;i}$ and $\mathbf{l}_{B;i}$, i.e.%
\begin{equation}
\left\Vert \mathbf{l}_{A;i}\right\Vert ^{2}=\mathbf{u}_{i}^{\dagger }\mathbf{%
l}_{A;i}=\mathbf{u}_{i}^{\dagger }\mathbf{l}_{B;i}=\left\Vert \mathbf{l}%
_{B;i}\right\Vert ^{2}.  \label{Eq 5}
\end{equation}
\end{lemm}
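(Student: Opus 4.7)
The plan is to split the claimed chain of equalities into two independent pieces: first the outer equalities $\|\mathbf{l}_{m;i}\|^2=\mathbf{u}_i^{\dagger}\mathbf{l}_{m;i}$ for each $m\in\{A,B\}$, which are purely a consequence of $\mathbf{U}_m\mathbf{U}_m^{\dagger}$ being an orthogonal projector, and second the middle equality $\mathbf{u}_i^{\dagger}\mathbf{l}_{A;i}=\mathbf{u}_i^{\dagger}\mathbf{l}_{B;i}$, which is where the hypothesis $1<\lambda_i<2$ enters and which is the main work.

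For the outer equalities, I would just compute: since $\mathbf{U}_m^{\dagger}\mathbf{U}_m=\mathbf{I}_{n_m}$, the map $\mathbf{U}_m\mathbf{U}_m^{\dagger}$ is idempotent, so $\mathbf{l}_{m;i}^{\dagger}\mathbf{l}_{m;i}=\mathbf{u}_i^{\dagger}\mathbf{U}_m\mathbf{U}_m^{\dagger}\mathbf{U}_m\mathbf{U}_m^{\dagger}\mathbf{u}_i=\mathbf{u}_i^{\dagger}\mathbf{U}_m\mathbf{U}_m^{\dagger}\mathbf{u}_i=\mathbf{u}_i^{\dagger}\mathbf{l}_{m;i}$. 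This takes care of the first and third equalities in (\ref{Eq 5}).

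The crux is the middle equality. My approach is to apply $\mathbf{U}_A\mathbf{U}_A^{\dagger}$ to the identity $\mathbf{l}_{A;i}+\mathbf{l}_{B;i}=\lambda_i\mathbf{u}_i$ from (\ref{Eq 4}). Using idempotency on the left summand I obtain $\mathbf{U}_A\mathbf{U}_A^{\dagger}\mathbf{l}_{B;i}=(\lambda_i-1)\mathbf{l}_{A;i}$, and by the symmetric computation $\mathbf{U}_B\mathbf{U}_B^{\dagger}\mathbf{l}_{A;i}=(\lambda_i-1)\mathbf{l}_{B;i}$. Then I compute the scalar $\mathbf{l}_{B;i}^{\dagger}\mathbf{U}_A\mathbf{U}_A^{\dagger}\mathbf{l}_{B;i}$ in two ways: substituting the first derived identity gives $(\lambda_i-1)\mathbf{l}_{B;i}^{\dagger}\mathbf{l}_{A;i}$, while using idempotency gives $\|\mathbf{U}_A\mathbf{U}_A^{\dagger}\mathbf{l}_{B;i}\|^2=(\lambda_i-1)^2\|\mathbf{l}_{A;i}\|^2$. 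Equating and dividing by $\lambda_i-1\neq 0$ yields $(\lambda_i-1)\|\mathbf{l}_{A;i}\|^2=\mathbf{l}_{B;i}^{\dagger}\mathbf{l}_{A;i}$. The analogous manipulation starting from $\mathbf{U}_B\mathbf{U}_B^{\dagger}\mathbf{l}_{A;i}=(\lambda_i-1)\mathbf{l}_{B;i}$ yields $(\lambda_i-1)\|\mathbf{l}_{B;i}\|^2=\mathbf{l}_{A;i}^{\dagger}\mathbf{l}_{B;i}$. The left-hand sides are real, so the two inner products are equal (conjugates of one another and also real), and dividing by $\lambda_i-1\neq 0$ gives $\|\mathbf{l}_{A;i}\|^2=\|\mathbf{l}_{B;i}\|^2$, which combined with the outer equalities proves (\ref{Eq 5}).

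The only delicate point is making sure the division by $\lambda_i-1$ is legitimate; this is exactly where case (b) ($1<\lambda_i<2$, so $\lambda_i\neq 1$) is used, and it also makes geometric sense: when $\lambda_i=1$ one falls into case (c), where $\mathbf{l}_{A;i}$ and $\mathbf{l}_{B;i}$ are clearly not of equal length (one of them is zero). Beyond that the argument is routine linear algebra using only the projection identity $\mathbf{U}_m^{\dagger}\mathbf{U}_m=\mathbf{I}_{n_m}$ and the eigenvalue equation (\ref{eigen}).
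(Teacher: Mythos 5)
Your proof is correct and uses essentially the same machinery as the paper: the identities $\mathbf{U}_A\mathbf{U}_A^{\dagger}\mathbf{l}_{B;i}=(\lambda_i-1)\mathbf{l}_{A;i}$ and its mirror are exactly the paper's (\ref{Eq 6})--(\ref{Eq 7}), and the reality/conjugate-symmetry observation plays the same role as the paper's remark that the scalar $\mathbf{u}_i^{\dagger}\mathbf{U}_A\mathbf{U}_A^{\dagger}\mathbf{U}_B\mathbf{U}_B^{\dagger}\mathbf{u}_i$ is real and hence equals its Hermitian transpose. The only cosmetic difference is that you establish $\|\mathbf{l}_{A;i}\|^2=\|\mathbf{l}_{B;i}\|^2$ first and deduce the middle equality from the outer ones, whereas the paper proves the middle equality directly; both correctly isolate the division by $\lambda_i-1\neq 0$ as the place where case (b) is needed.
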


\begin{proof}
To prove the lemma, we first multiply both sides of (\ref{eigen}) by $%
\mathbf{U}_{A}\mathbf{U}_{A}^{\dagger }$. Then, after some straightforward
manipulations, we obtain%
\begin{equation}
\left( \lambda _{i}-1\right) \mathbf{U}_{A}\mathbf{U}_{A}^{\dagger }\mathbf{u%
}_{i}=\mathbf{U}_{A}\mathbf{U}_{A}^{\dagger }\mathbf{U}_{B}\mathbf{U}%
_{B}^{\dagger }\mathbf{u}_{i}.  \label{Eq 6}
\end{equation}%
Similarly, we have
\begin{equation}
\left( \lambda _{i}-1\right) \mathbf{U}_{B}\mathbf{U}_{B}^{\dagger }\mathbf{u%
}_{i}=\mathbf{U}_{B}\mathbf{U}_{B}^{\dagger }\mathbf{U}_{A}\mathbf{U}%
_{A}^{\dagger }\mathbf{u}_{i}.  \label{Eq 7}
\end{equation}%
Then,
\begin{align*}
& \mathbf{u}_{i}^{\dagger }\mathbf{l}_{A;i}\text{ }\overset{(a)}{=}\mathbf{u}%
_{i}^{\dagger }\mathbf{U}_{A}\mathbf{U}_{A}^{\dagger }\mathbf{u}_{i}\overset{%
(b)}{=}\frac{1}{\lambda _{i}-1}\mathbf{u}_{i}^{\dagger }\mathbf{U}_{A}%
\mathbf{U}_{A}^{\dagger }\mathbf{U}_{B}\mathbf{U}_{B}^{\dagger }\mathbf{u}%
_{i} \\
& \overset{(c)}{=}\frac{1}{\lambda _{i}-1}\mathbf{u}_{i}^{\dagger }\mathbf{U}%
_{B}\mathbf{U}_{B}^{\dagger }\mathbf{U}_{A}\mathbf{U}_{A}^{\dagger }\mathbf{u%
}_{i}\overset{(d)}{=}\mathbf{u}_{i}^{\dagger }\mathbf{U}_{B}\mathbf{U}%
_{B}^{\dagger }\mathbf{u}_{i}\overset{(e)}{=}\mathbf{u}_{i}^{\dagger }%
\mathbf{l}_{B;i}\text{ }
\end{align*}%
where step $(a)$ follows from (\ref{Eq 3}), $(b)$ from (\ref{Eq 6}), $(c)$
from the fact that the Hermitian transpose of a real-valued scalar is itself, $(d)$ from (\ref{Eq
7}), and $(e)$ again from (\ref{Eq 3}). From (\ref{Eq 3}), the projection of
$\mathbf{u}_{i}$ onto $\mathbf{l}_{m;i}$ is just $\mathbf{l}_{m;i}$. Thus, $%
\left\Vert \mathbf{l}_{m;i}\right\Vert ^{2}=\mathbf{u}_{i}^{\dagger }\mathbf{%
l}_{m;i}$, which completes the proof.
\end{proof}

\begin{lemm}
For any $\lambda _{i}$ $\in (1,2)$ (as in case (b)), $\lambda _{i}^{\prime
}=2-\lambda _{i}$ is also an eigenvalue of $\mathbf{U}_{A}\mathbf{U}%
_{A}^{\dagger }+\mathbf{U}_{B}\mathbf{U}_{B}^{\dagger }$, and the
corresponding unit-length eigenvector is given by%
\begin{equation}
\mathbf{u}_{i}^{\prime }=\frac{1}{\sqrt{\lambda _{i}\lambda _{i}^{\prime }}}%
\left( \mathbf{l}_{A;i}-\mathbf{l}_{B;i}\right) .  \label{Eq 9}
\end{equation}
\end{lemm}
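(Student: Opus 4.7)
The plan is to verify the claim directly: substitute the proposed vector into the operator $\mathbf{U}_A\mathbf{U}_A^\dagger + \mathbf{U}_B\mathbf{U}_B^\dagger$, show it gets scaled by $\lambda_i' = 2-\lambda_i$, and then confirm the normalization factor $1/\sqrt{\lambda_i \lambda_i'}$ produces a unit vector. All the machinery needed is already in place from equations (\ref{eigen})--(\ref{Eq 7}) and Lemma~1 of the appendix, so no new identities must be derived.

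For the eigenvalue part, first I would compute $\mathbf{U}_A\mathbf{U}_A^\dagger(\mathbf{l}_{A;i}-\mathbf{l}_{B;i})$. Since $\mathbf{U}_A\mathbf{U}_A^\dagger$ is an idempotent projector (because $\mathbf{U}_A$ is orthonormal), it fixes $\mathbf{l}_{A;i}=\mathbf{U}_A\mathbf{U}_A^\dagger \mathbf{u}_i$; and by~(\ref{Eq 6}) it sends $\mathbf{l}_{B;i}=\mathbf{U}_B\mathbf{U}_B^\dagger \mathbf{u}_i$ to $(\lambda_i-1)\mathbf{l}_{A;i}$. Thus $\mathbf{U}_A\mathbf{U}_A^\dagger(\mathbf{l}_{A;i}-\mathbf{l}_{B;i}) = (2-\lambda_i)\mathbf{l}_{A;i} = \lambda_i' \mathbf{l}_{A;i}$. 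The symmetric computation using~(\ref{Eq 7}) yields $\mathbf{U}_B\mathbf{U}_B^\dagger(\mathbf{l}_{A;i}-\mathbf{l}_{B;i}) = -\lambda_i' \mathbf{l}_{B;i}$. Adding gives $(\mathbf{U}_A\mathbf{U}_A^\dagger+\mathbf{U}_B\mathbf{U}_B^\dagger)(\mathbf{l}_{A;i}-\mathbf{l}_{B;i}) = \lambda_i'(\mathbf{l}_{A;i}-\mathbf{l}_{B;i})$, so the scalar prefactor $1/\sqrt{\lambda_i \lambda_i'}$ in~(\ref{Eq 9}) is irrelevant for the eigenvalue relation and $\mathbf{u}_i'$ is indeed an eigenvector with eigenvalue $\lambda_i'$.

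For the normalization, I would first evaluate $\|\mathbf{l}_{A;i}\|^2$. Taking the inner product of both sides of $\lambda_i \mathbf{u}_i = \mathbf{l}_{A;i} + \mathbf{l}_{B;i}$ with $\mathbf{u}_i$ and invoking Lemma~1 (which gives $\mathbf{u}_i^\dagger \mathbf{l}_{A;i} = \mathbf{u}_i^\dagger \mathbf{l}_{B;i} = \|\mathbf{l}_{A;i}\|^2 = \|\mathbf{l}_{B;i}\|^2$), I obtain $\|\mathbf{l}_{A;i}\|^2 = \lambda_i/2$. Next, $\mathbf{l}_{A;i}^\dagger \mathbf{l}_{B;i} = \mathbf{u}_i^\dagger \mathbf{U}_A\mathbf{U}_A^\dagger \mathbf{U}_B\mathbf{U}_B^\dagger \mathbf{u}_i = (\lambda_i - 1)\|\mathbf{l}_{A;i}\|^2$ by~(\ref{Eq 6}). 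Combining these,
\begin{equation*}
\|\mathbf{l}_{A;i}-\mathbf{l}_{B;i}\|^2 = 2\|\mathbf{l}_{A;i}\|^2 - 2(\lambda_i-1)\|\mathbf{l}_{A;i}\|^2 = \lambda_i(2-\lambda_i) = \lambda_i \lambda_i',
\end{equation*}
so $\|\mathbf{u}_i'\|^2 = 1$, completing the verification.

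The proof is essentially mechanical and there is no significant obstacle; the only subtlety is bookkeeping—recognizing that the two projectors $\mathbf{U}_m\mathbf{U}_m^\dagger$ act on $\mathbf{l}_{A;i}$ and $\mathbf{l}_{B;i}$ through the same identities~(\ref{Eq 6})--(\ref{Eq 7}) that were used in Lemma~1, and that $\lambda_i \in (1,2)$ guarantees $\lambda_i \lambda_i' > 0$ so the square root in the normalization makes sense. Once these pieces are assembled, both the eigenvector property and the unit-length claim fall out immediately.
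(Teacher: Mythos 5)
Your proposal is correct and follows essentially the same route as the paper's proof: both verify the eigenvector relation by applying the two projectors to $\mathbf{l}_{A;i}-\mathbf{l}_{B;i}$ via the identities $\left( \lambda _{i}-1\right) \mathbf{U}_{A}\mathbf{U}_{A}^{\dagger }\mathbf{u}_{i}=\mathbf{U}_{A}\mathbf{U}_{A}^{\dagger }\mathbf{U}_{B}\mathbf{U}_{B}^{\dagger }\mathbf{u}_{i}$ and its counterpart, and both establish unit length from $\left\Vert \mathbf{l}_{A;i}\right\Vert ^{2}=\left\Vert \mathbf{l}_{B;i}\right\Vert ^{2}=\lambda _{i}/2$ together with $\mathbf{l}_{A;i}^{\dag }\mathbf{l}_{B;i}=(\lambda _{i}-1)\left\Vert \mathbf{l}_{A;i}\right\Vert ^{2}$. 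The only difference is cosmetic bookkeeping (you group terms by projector rather than expanding all four at once), so no further comment is needed.
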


\begin{proof}
By definition, we have%
\begin{align}
& \left( \mathbf{U}_{A}\mathbf{U}_{A}^{\dagger }+\mathbf{U}_{B}\mathbf{U}%
_{B}^{\dagger }\right) \mathbf{u}_{i}^{\prime }  \notag \\
& \overset{(a)}{=}\frac{1}{\sqrt{\lambda _{i}\lambda _{i}^{\prime }}}\left(
\mathbf{U}_{A}\mathbf{U}_{A}^{\dagger }+\mathbf{U}_{B}\mathbf{U}%
_{B}^{\dagger }\right) \left( \mathbf{l}_{A;i}-\mathbf{l}_{B;i}\right)
\notag \\
& \overset{(b)}{=}\frac{1}{\sqrt{\lambda _{i}\lambda _{i}^{\prime }}}\left(
\mathbf{U}_{A}\mathbf{U}_{A}^{\dagger }\mathbf{u}_{i}+\left( \lambda
_{i}-1\right) \mathbf{U}_{B}\mathbf{U}_{B}^{\dagger }\mathbf{u}_{i}-\left(
\lambda _{i}-1\right) \mathbf{U}_{A}\mathbf{U}_{A}^{\dagger }\mathbf{u}_{i}-%
\mathbf{U}_{B}\mathbf{U}_{B}^{\dagger }\mathbf{u}_{i}\right)  \notag \\
& =\frac{\lambda _{i}^{\prime }}{\sqrt{\lambda _{i}\lambda _{i}^{\prime }}}%
\left( \mathbf{l}_{A;i}-\mathbf{l}_{B;i}\right) =\lambda _{i}^{\prime }%
\mathbf{u}_{i}^{\prime }
\end{align}%
where step $(a)$ follows from (\ref{Eq 9}), and step $(b)$ from (\ref{Eq 3}%
), (\ref{Eq 6}) and (\ref{Eq 7}).

What remains is to show that $\left\Vert \mathbf{u}_{i}^{\prime }\right\Vert
=1$. To see this, we left-multiply both sides of (\ref{eigen}) by $\mathbf{u}%
_{i}^{\dag }$, yielding%
\begin{equation}
\left\Vert \mathbf{l}_{A;i}\right\Vert ^{2}+\left\Vert \mathbf{l}%
_{B;i}\right\Vert ^{2}=\lambda _{i}.  \label{Eq 8}
\end{equation}%
Together with (\ref{Eq 5}), we obtain%
\begin{equation}
\left\Vert \mathbf{l}_{A;i}\right\Vert ^{2}=\left\Vert \mathbf{l}%
_{B;i}\right\Vert ^{2}=\frac{\lambda _{i}}{2}.  \label{Eq 2}
\end{equation}%
Moreover, left multiplying (\ref{Eq 6}) and (\ref{Eq 7}) respectively by $%
\mathbf{u}_{i}^{\dagger }$ and plugging in (\ref{Eq 3}), we obtain%
\begin{equation}
\mathbf{l}_{A;i}^{\dag }\mathbf{l}_{B;i}=(\lambda _{i}-1)\left\Vert \mathbf{l%
}_{A;i}\right\Vert ^{2}=(\lambda _{i}-1)\left\Vert \mathbf{l}%
_{B;i}\right\Vert ^{2}=\mathbf{l}_{B;i}^{\dag }\mathbf{l}_{A;i}.
\label{Eq 10}
\end{equation}%
Then%
\begin{align}
\mathbf{u}_{i}^{\prime \dag }\mathbf{u}_{i}^{\prime }& =\frac{1}{\lambda
_{i}\lambda _{i}^{\prime }}\left( \mathbf{l}_{A;i}-\mathbf{l}_{B;i}\right)
^{\dag }\left( \mathbf{l}_{A;i}-\mathbf{l}_{B;i}\right)  \notag \\
& =\frac{1}{\lambda _{i}\lambda _{i}^{\prime }}\left( \left\Vert \mathbf{l}%
_{A;i}\right\Vert ^{2}-\mathbf{l}_{A;i}^{\dag }\mathbf{l}_{B;i}-\mathbf{l}%
_{B;i}^{\dag }\mathbf{l}_{A;i}+\left\Vert \mathbf{l}_{B;i}\right\Vert
^{2}\right)  \notag \\
& \overset{(a)}{=}\frac{1}{\lambda _{i}\lambda _{i}^{\prime }}\left( \lambda
_{i}-(\lambda _{i}-1)\left\Vert \mathbf{l}_{A;i}\right\Vert ^{2}-(\lambda
_{i}-1)\left\Vert \mathbf{l}_{B;i}\right\Vert ^{2}\right) \notag\\
& \overset{(b)}{=}1
\end{align}%
where step ($a$) follows from (\ref{Eq 2}) and (\ref{Eq 10}), and step ($b$)
from (\ref{Eq 2}) and the fact of $\lambda _{i}^{\prime }=2-\lambda _{i}$.
This completes the proof.
\end{proof}

\begin{lemm}
The subspace $\mathcal{S}_{i}$ spanned by $\textbf{l}_{A;i}$ and $\textbf{l}_{B;i}$ is orthogonal to $\mathcal{S}%
_{j}$, for any $j$ $\neq i$.
\end{lemm}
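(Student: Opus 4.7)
The plan is to verify $\mathcal{S}_i \perp \mathcal{S}_j$ by showing that all four cross inner products $\mathbf{l}_{m;i}^\dagger \mathbf{l}_{m';j}$, $m,m'\in\{A,B\}$, vanish for $i\neq j$. Since $\mathbf{U}_m^\dagger \mathbf{U}_m = \mathbf{I}_{n_m}$, the matrix $\mathbf{U}_m \mathbf{U}_m^\dagger$ is an orthogonal projector and hence idempotent, which immediately collapses the same-subscript products via $\mathbf{l}_{A;i}^\dagger \mathbf{l}_{A;j} = \mathbf{u}_i^\dagger \mathbf{l}_{A;j}$ and $\mathbf{l}_{B;i}^\dagger \mathbf{l}_{B;j} = \mathbf{u}_i^\dagger \mathbf{l}_{B;j}$. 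So the task reduces to showing $\mathbf{u}_i^\dagger \mathbf{l}_{A;j} = \mathbf{u}_i^\dagger \mathbf{l}_{B;j} = 0$ whenever $i \neq j$.

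The core of the argument is to extract two different expressions for the mixed product $\mathbf{l}_{A;i}^\dagger \mathbf{l}_{B;j} = \mathbf{u}_i^\dagger (\mathbf{U}_A \mathbf{U}_A^\dagger)(\mathbf{U}_B \mathbf{U}_B^\dagger) \mathbf{u}_j$. Applying (\ref{Eq 6}) at index $j$ on the right gives $(\lambda_j - 1)\mathbf{u}_i^\dagger \mathbf{l}_{A;j}$, while taking the Hermitian adjoint of (\ref{Eq 7}) at index $i$ (using that $\lambda_i$ is real) and acting on the left gives $(\lambda_i - 1)\mathbf{u}_i^\dagger \mathbf{l}_{B;j}$. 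On the other hand, taking the inner product of (\ref{Eq 4}) at index $j$ with $\mathbf{u}_i$ and invoking orthonormality $\mathbf{u}_i^\dagger \mathbf{u}_j = 0$ yields $\mathbf{u}_i^\dagger \mathbf{l}_{A;j} = -\mathbf{u}_i^\dagger \mathbf{l}_{B;j}$. Combining the three relations produces
\begin{equation*}
(\lambda_i + \lambda_j - 2)\,\mathbf{u}_i^\dagger \mathbf{l}_{A;j} = 0,
\end{equation*}
which immediately delivers the desired orthogonality in the generic regime $\lambda_i + \lambda_j \neq 2$.

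The main obstacle will be the degenerate case $\lambda_i + \lambda_j = 2$, which splits into two sub-cases that must be treated by hand. If $\lambda_i = \lambda_j = 1$ (both in case (c)), then (\ref{case c1})--(\ref{case c2}) force one of $\mathbf{l}_{A;i}, \mathbf{l}_{B;i}$ to be zero and the other to equal $\mathbf{u}_i$ (likewise for $j$), so the four cross products reduce to scalar multiples of $\mathbf{u}_i^\dagger \mathbf{u}_j = 0$. The more delicate sub-case is $\lambda_i \in (1,2)$ paired with $\lambda_j = 2 - \lambda_i \in (0,1)$: the previous lemma identifies $\mathbf{u}_j$ with the mate $\mathbf{u}_i' = (\lambda_i \lambda_j)^{-1/2}(\mathbf{l}_{A;i} - \mathbf{l}_{B;i})$, and a short calculation based on the projector identities $\mathbf{U}_A \mathbf{U}_A^\dagger \mathbf{l}_{A;i} = \mathbf{l}_{A;i}$ and $\mathbf{U}_A \mathbf{U}_A^\dagger \mathbf{l}_{B;i} = (\lambda_i - 1)\mathbf{l}_{A;i}$ (together with the $B$-counterparts) shows $\mathbf{l}_{A;j} \propto \mathbf{l}_{A;i}$ and $\mathbf{l}_{B;j} \propto \mathbf{l}_{B;i}$, so that $\mathcal{S}_j = \mathcal{S}_i$. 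The lemma is therefore read with each such (b)/(d) pair collapsed to a single $\mathcal{S}_i$, consistent with the indexing $i = 1,\ldots,k+l$ used in Theorem \ref{Theorem 1}; under this convention the degenerate case never arises for genuinely distinct indices, and the direct argument above suffices.
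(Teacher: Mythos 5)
Your proof is correct, but it takes a genuinely different route from the paper's. The paper's own argument is a two-line appeal to spectral orthogonality: by (\ref{Eq 4}) and (\ref{Eq 9}), the plane $\mathcal{S}_{i}$ contains both $\mathbf{u}_{i}$ and its mate $\mathbf{u}_{i}^{\prime }$ (eigenvectors for $\lambda _{i}$ and $2-\lambda _{i}$), and since these two orthogonal vectors span the two-dimensional $\mathcal{S}_{i}$, orthogonality of $\mathcal{S}_{i}$ and $\mathcal{S}_{j}$ follows at once from the mutual orthogonality of the four eigenvectors $\mathbf{u}_{i},\mathbf{u}_{i}^{\prime },\mathbf{u}_{j},\mathbf{u}_{j}^{\prime }$. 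You instead compute the four cross inner products $\mathbf{l}_{m;i}^{\dagger }\mathbf{l}_{m^{\prime };j}$ directly from the projector identities (\ref{Eq 6})--(\ref{Eq 7}) and the relation $\mathbf{u}_{i}^{\dagger }\mathbf{u}_{j}=0$, arriving at $(\lambda _{i}+\lambda _{j}-2)\,\mathbf{u}_{i}^{\dagger }\mathbf{l}_{A;j}=0$; this avoids invoking the preceding lemma on $\mathbf{u}_{i}^{\prime }$ but forces you to dispose of the degenerate case $\lambda _{i}+\lambda _{j}=2$ by hand. Your resolution of that case is sound, and in fact, for the indices for which the lemma is actually used in the paper ($i,j\in \{k+1,\ldots ,k+l\}$, all in case (b)), one has $\lambda _{i}+\lambda _{j}>2$ automatically, so the degeneracy never occurs. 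The trade-off is that the paper's proof is shorter and handles orthogonality to the case-(a) and case-(c) eigenvectors in the same breath, whereas yours is more self-contained (it does not need $\mathbf{u}_{i}^{\prime }$ in the generic case) and makes explicit exactly which algebraic identities drive the orthogonality.
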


\begin{proof}
From (\ref{Eq 4}) and (\ref{Eq 9}), we see that both $\mathbf{u}_{i}$ and $%
\mathbf{u}_{i}^{\prime }$ lie on the plane $\mathcal{S}_{i}$. As $\mathbf{u}%
_{i}$ and $\mathbf{u}_{i}^{\prime }$ are orthogonal to each other, $\mathcal{%
S}_{i}$ is spanned by $\mathbf{u}_{i}$ and $\mathbf{u}_{i}^{\prime }$. Then,
the lemma holds straightforwardly by noting the orthogonality between the
eigenvectors.
\end{proof}

Now we give an overall picture of the eigenvalues and eigenvectors of $%
\mathbf{U}_{A}\mathbf{U}_{A}^{\dagger }+\mathbf{U}_{B}\mathbf{U}%
_{B}^{\dagger }$. Denote the $k$ eigenvalues in case (a) by $\lambda
_{1},\cdots ,\lambda _{k}$, and the corresponding orthogonal eigenvectors by
$\mathbf{u}_{1},\cdots ,$ $\mathbf{u}_{k}$. Also denote the $l$ eigenvalues
in case (b) by $\lambda _{k+1},\cdots ,\lambda _{k+l}$ in the descending
order, and the corresponding eigenvectors by $\mathbf{u}_{k+1},\cdots ,$ $%
\mathbf{u}_{k+l}$. As the eigenvalues in (b) and (d) appears in a pair-wise
manner, we further denote the $l$ eigenvalues in case (d) by $\lambda
_{k+1}^{\prime },\cdots ,$ $\lambda _{k+l}^{\prime }$ in the descending
order, and the corresponding eigenvectors by $\mathbf{u}_{k+1}^{\prime
},\cdots ,$ $\mathbf{u}_{k+l}^{\prime }$. Moreover, we denote the $d_{A}$
orthogonal eigenvectors in case (c.1) by $\mathbf{u}_{k+l+1},\cdots ,\mathbf{%
u}_{k+l+d_{A}}$, and the $d_{B}$ orthogonal eigenvectors in case (c.2) by $%
\mathbf{u}_{k+l+d_{A}+1},\cdots ,\mathbf{u}_{k+l+d_{A}+d_{B}}$. Let%
\begin{equation}
\mathbf{U}=\left[ \mathbf{u}_{1},\cdots ,\mathbf{u}_{k},\mathbf{u}_{k+1},%
\mathbf{u}_{k+1}^{\prime },\cdots ,\mathbf{u}_{k+l},\mathbf{u}_{k+l}^{\prime
},\mathbf{u}_{k+l+1},...,\mathbf{u}_{k+l+d_{A}+d_{B}}\right] .  \label{Def_U}
\end{equation}%
It can be readily verified that $\mathbf{U}$ is an orthonormal matrix
satisfying $\mathbf{U}_{{}}^{\dag }\mathbf{U}=\mathbf{I}_{n_{A}+n_{B}-k}$.
Define%
\begin{subequations}
\begin{equation}
\mathbf{U}_{A}^{\prime }=\left[ \mathbf{u}_{1},\cdots ,\mathbf{u}_{k},\frac{%
\mathbf{l}_{A;k+1}}{\left\Vert \mathbf{l}_{A;k+1}\right\Vert },\cdots ,\frac{%
\mathbf{l}_{A;k+l}}{\left\Vert \mathbf{l}_{A;k+l}\right\Vert },\mathbf{u}%
_{k+l+1},...,\mathbf{u}_{k+l+d_{A}}\right]
\end{equation}%
and%
\begin{equation}
\mathbf{U}_{B}^{\prime }=\left[ \mathbf{u}_{1},\cdots ,\mathbf{u}_{k},\frac{%
\mathbf{l}_{B;k+1}}{\left\Vert \mathbf{l}_{B;k+1}\right\Vert },\cdots ,\frac{%
\mathbf{l}_{B;k+l}}{\left\Vert \mathbf{l}_{B;k+l}\right\Vert },\mathbf{u}%
_{k+l+d_{A}+1},...,\mathbf{u}_{k+l+d_{A}+d_{B}}\right] .
\end{equation}%
\end{subequations}
In the above, $\mathbf{u}_{1},\cdots ,\mathbf{u}_{k}$ are the eigenvectors
in case (a); $\mathbf{u}_{k+l+1},...,\mathbf{u}_{k+l+d_{A}}$ are the
eigenvectors in case (c) satisfying $\mathbf{U}_{A}\mathbf{U}_{A}^{\dag }%
\mathbf{u}_{i}=\mathbf{u}_{i},$ for $i=k+l+1,...,k+l+d_{A}$; $\mathbf{u}%
_{k+l+d_{A}+1},\cdots ,$ $\mathbf{u}_{k+l+d_{A}+d_{B}}$ are the eigenvectors
in case (c) satisfying $\mathbf{U}_{B}\mathbf{U}_{B}^{\dag }\mathbf{u}_{i}=%
\mathbf{u}_{i},$ for $i=k+l+d_{A}+1,...,k+l+d_{A}+d_{B}$.

Then, from Lemmas 3 and 4, it can be verified that $\mathbf{D}_{m}$ in (\ref%
{Dm}) satisfies%
\begin{equation}
\mathbf{U}_{m}^{\prime }=\mathbf{UD}_{m},m\in \left\{ A,B\right\} .
\label{Dm2}
\end{equation}

From Lemma 3 and the fact that $\mathbf{l}_{m;i}\in \mathcal{S}_{i}$ for $%
i=k+1,\cdots ,k+l$, the columns of $\mathbf{U}_{m}^{\prime }$ are
orthonormal. Together with the fact that all columns of $\mathbf{U}%
_{m}^{\prime }$ lie in the columnspace of $\mathbf{U}_{m}$ (and so in the
columnspace of $\mathbf{H}_{m}$), we see that $\mathbf{H}_{m}$ and $\mathbf{%
U}_{m}^{\prime }$ share the same columnspace. Thus, there exists an $n_{m}$%
-by-$n_{m}$ square matrix $\mathbf{G}_{m}$ such that%
\begin{equation}
\mathbf{H}_{mR}=\mathbf{U}_{m}^{\prime }\mathbf{G}_{m}.  \label{Eq 13}
\end{equation}%
Combining (\ref{Dm2}) and (\ref{Eq 13}), we obtain%
\begin{equation}
\mathbf{H}_{mR}=\mathbf{UD}_{m}\mathbf{G}_{m}  \label{Eq 14}
\end{equation}%
which completes the proof of Theorem \ref{Theorem 1}.

\section{Proof of Theorem \protect\ref{Theorem 3}}

We first consider the sum-rate upper bound:
\begin{eqnarray}
&&R^{UL}\overset{(a)}{\approx }\frac{1}{2}\dsum\limits_{m\in \left\{
A,B\right\} }\log \left\vert \mathbf{I}_{n_{R}}+\frac{P_{m}}{N_{0}n_{m}}%
\mathbf{UD}_{m}\mathbf{R}_{m}\mathbf{R}_{m}^{\dag }\mathbf{D}_{m}^{\dagger }%
\mathbf{U}^{\dagger }\right\vert  \notag \\
&&\overset{(b)}{=}\frac{1}{2}\dsum\limits_{m\in \left\{ A,B\right\} }\log
\left\vert \mathbf{I}_{n_{m}}+\frac{P_{m}}{N_{0}n_{m}}\mathbf{R}_{m}\mathbf{R%
}_{m}^{\dag }\right\vert  \notag \\
&&\overset{(c)}{\approx }\frac{1}{2}\dsum\limits_{m\in \left\{ A,B\right\}
}\log \left\vert \frac{P_{m}}{N_{0}n_{m}}\mathbf{R}_{m}\mathbf{R}_{m}^{\dag
}\right\vert  \label{Upp}
\end{eqnarray}%
where step (\textit{a}) follows by substituting (\ref{Decomp}) into (\ref%
{UpperBound}), step (\textit{b}) follows from the facts that $\mathbf{D}%
_{m}^{\dagger }\mathbf{U}^{\dagger }\mathbf{U\mathbf{D}}_{m}\mathbf{=I}%
_{n_{m}}$ and $\left\vert \mathbf{I}+\mathbf{AB}\right\vert =\left\vert
\mathbf{I}+\mathbf{BA}\right\vert $, and step (\textit{c}) utilizes the fact
that $\mathbf{R}_{m}$ is a square matrix.

Now we consider the achievable sum-rate of the proposed space-division
scheme. For notational simplicity, let $\mathbf{H}_{m}^{CD}=\mathbf{D}%
_{m;2,2}\mathbf{R}_{m;2,2}$, $m\in \left\{ A,B\right\} $. From (\ref{Rate CD}%
), the sum-rate of the complete-decoding spatial streams can be expressed as%
\begin{eqnarray}
R_{A}^{CD}+R_{B}^{CD}\overset{(a)}{=}\frac{1}{2}\log \left\vert \mathbf{I}%
+\dsum\limits_{m\in \left\{ A,B\right\} }\frac{P_{m}}{N_{0}n_{m}}\mathbf{H}%
_{m}^{CD}(\mathbf{H}_{m}^{CD})^{\dag }\right\vert \notag
\end{eqnarray}
\begin{eqnarray*}
&&\overset{(b)}{=}\frac{1}{2}\log \left\vert \mathbf{I}+\frac{P_{A}}{%
N_{0}n_{A}}\mathbf{H}_{A}^{CD}(\mathbf{H}_{A}^{CD})^{\dag }\right\vert +%
\frac{1}{2}\log \frac{\left\vert \mathbf{I}+\dsum\limits_{m\in \left\{
A,B\right\} }\frac{P_{m}}{N_{0}n_{m}}\mathbf{H}_{m}^{CD}(\mathbf{H}%
_{m}^{CD})^{\dag }\right\vert }{\left\vert \mathbf{I}+\frac{P_{A}}{N_{0}n_{A}%
}\mathbf{H}_{A}^{CD}(\mathbf{H}_{A}^{CD})^{\dag }\right\vert } \\
&&\overset{(c)}{\approx }\frac{1}{2}\log \left\vert \frac{P_{A}}{N_{0}n_{A}}%
\mathbf{R}_{A;2,2}\mathbf{R}_{A;2,2}^{\dag }\right\vert +\frac{1}{2}\log
\left\vert \frac{P_{B}}{N_{0}n_{B}}(\mathbf{H}_{B}^{CD})^{\dag }\left(
\mathbf{I}+\frac{P_{A}}{N_{0}n_{A}}\mathbf{H}_{A}^{CD}(\mathbf{H}%
_{A}^{CD})^{\dag }\right) ^{-1}\mathbf{H}_{B}^{CD}\right\vert \\
&&\overset{(d)}{=}\dsum\limits_{m\in \left\{ A,B\right\} }\frac{1}{2}\log
\left\vert \frac{P_{m}}{N_{0}n_{m}}\mathbf{R}_{m;2,2}\mathbf{R}%
_{m;2,2}^{\dag }\right\vert +\frac{1}{2}\log \left\vert \mathbf{D}%
_{B;2,2}^{\dag }\left( \mathbf{I}+\frac{P_{A}}{N_{0}n_{A}}\mathbf{H}%
_{A}^{CD}(\mathbf{H}_{A}^{CD})^{\dag }\right) ^{-1}\mathbf{D}%
_{B;2,2}\right\vert
\end{eqnarray*}%
where step (\textit{a}) utilizes the fact that equal power allocation is
asymptotically optimal, and step (d) follows by substituting $\mathbf{H}%
_{B}^{CD}=\mathbf{D}_{B;2,2}\mathbf{R}_{B;2,2}$. Applying the matrix
inversion lemma to $\left( \mathbf{I}+\frac{P_{A}}{N_{0}n_{A}}\mathbf{H}%
_{A}^{CD}(\mathbf{H}_{A}^{CD})^{\dag }\right) ^{-1}$, we further obtain%
\begin{eqnarray}
&&R_{A}^{CD}+R_{B}^{CD}=\dsum\limits_{m\in \left\{ A,B\right\} }\frac{1}{2}%
\log \left\vert \frac{P_{m}}{N_{0}n_{m}}\mathbf{R}_{m;2,2}\mathbf{R}%
_{m;2,2}^{\dag }\right\vert  \notag \\
&&\text{ \ \ \ \ \ }+\frac{1}{2}\log \left\vert \mathbf{I}-\frac{P_{A}}{%
N_{0}n_{A}}\mathbf{D}_{B;2,2}^{\dag }\mathbf{H}_{A}^{CD}\left( \mathbf{I}+%
\frac{P_{A}}{N_{0}n_{A}}(\mathbf{H}_{A}^{CD})^{\dag }\mathbf{H}%
_{A}^{CD}\right) ^{-1}(\mathbf{H}_{A}^{CD})^{\dag }\mathbf{D}%
_{B;2,2}\right\vert  \notag \\
&&\overset{(a)}{\approx }\dsum\limits_{m\in \left\{ A,B\right\} }\frac{1}{2}%
\log \left\vert \frac{P_{m}}{N_{0}n_{m}}\mathbf{R}_{m;2,2}\mathbf{R}%
_{m;2,2}^{\dag }\right\vert +\frac{1}{2}\log \left\vert \mathbf{I}-\mathbf{D}%
_{B;2,2}^{\dag }\mathbf{D}_{A;2,2}\mathbf{D}_{A;2,2}^{\dag }\mathbf{D}%
_{B;2,2}\right\vert  \notag \\
&&\overset{(b)}{=}\dsum\limits_{m\in \left\{ A,B\right\} }\frac{1}{2}\log
\left\vert \frac{P_{m}}{N_{0}n_{m}}\mathbf{R}_{m;2,2}\mathbf{R}%
_{m;2,2}^{\dag }\right\vert +\frac{1}{2}\log \dprod\limits_{i=k+l%
{\acute{}}%
+1}^{k+l}\lambda _{i}(2-\lambda _{i})  \label{Con_Coded}
\end{eqnarray}%
where step (\textit{a}) follows by noting $\mathbf{I}+\frac{P_{A}}{N_{0}n_{A}%
}(\mathbf{H}_{A}^{CD})^{\dag }\mathbf{H}_{A}^{CD}\approx \frac{P_{A}}{%
N_{0}n_{A}}(\mathbf{H}_{A}^{CD})^{\dag }\mathbf{H}_{A}^{CD}$ and $\mathbf{H}_{A}^{CD}=\mathbf{D}%
_{A;2,2}\mathbf{R}_{A;2,2}$, $m\in \left\{ A,B\right\} $, and step (%
\textit{b}) utilizes the definitions in (\ref{Dm}) and (\ref{Dm3}).
Moreover, letting $w_{A}=w_{B}=1$ in (\ref{WSR1}), we obtain\ the sum-rate
of the PNC spatial streams as%
\begin{equation}
R_{A}^{PNC}+R_{B}^{PNC}=\frac{1}{2}\dsum\limits_{m\in \left\{ A,B\right\}
}\left(\log \left\vert \frac{P_{m}}{N_{0}n_{m}}\mathbf{R}_{m;1,1}\mathbf{R}%
_{m;1,1}^{\dag }\right\vert +\log \left\vert \mathbf{\tilde{D}}_{m;1,1}%
\mathbf{\tilde{D}}_{m;1,1}^{\dag }\right\vert\right) \text{.}  \label{PNC_Coded}
\end{equation}%
From (\ref{dd}), $\mathbf{p}_{i}$ is the angular bisection of $\mathbf{e}%
_{A;i}$ and $\mathbf{e}_{B;i}$, or equivalently, $\mathbf{p}_{i}=[1,0]^{T}$,
for the sum-rate case of $w_{A}=w_{B}=1$. Then, using the definition in (\ref%
{ddd}), we obtain%
\begin{equation}
\log \left\vert \mathbf{\tilde{D}}_{m;1,1}\mathbf{\tilde{D}}_{m;1,1}^{\dag
}\right\vert =\dsum\limits_{i=k+1}^{k+l%
{\acute{}}%
}\log \frac{\lambda _{i}}{2}.  \label{DD}
\end{equation}%
Combining (\ref{Upp})-(\ref{DD}), we complete the
proof of Theorem 3.

\section{Proof of Lemma \protect\ref{Lemma 4}}

We prove by using the theory of free probability \cite{Voiculescu93}. The
a.e.d. of $\mathbf{U}_{m}\mathbf{U}_{m}^{\dagger }$ is given by
\begin{equation*}
p_{m}\left( \lambda \right) =\eta _{m}\delta \left( \lambda -1\right)
+\left( 1-\eta _{m}\right) \delta \left( \lambda \right) \text{, }m\in
\left\{ A,B\right\} .
\end{equation*}%
Let $X_{m}$ be a random variable with PDF $p_{m}\left( \lambda \right) $.
Its Stieltjes transform is given by (cf., (2.40) in \cite{TulinoTextBook})%
\begin{equation*}
S_{X_{m}}\left( z\right) =E\left[ \frac{1}{X_{m}-z}\right] =\frac{\eta _{m}}{%
1-z}-\frac{1-\eta _{m}}{z}.
\end{equation*}%
Then, the inverse function of $S_{X_{m}}\left( z\right) $ is given by%
\begin{equation*}
S_{X_{m}}^{-1}\left( s\right) =\frac{-\left( 1-s\right) \pm \sqrt{\left(
1-s\right) ^{2}-4s\left( \eta _{m}-1\right) }}{2s}.
\end{equation*}%
Using the relation between Stieltjes transform and R-transform (cf., (2.72)
in \cite{TulinoTextBook}), we obtain the R-transform of $X_{m}$ as
\begin{equation*}
R_{X_{m}}\left( z\right) =S_{X_{m}}^{-1}\left( -z\right) -\frac{1}{z}=\frac{%
z-1\mp \sqrt{\left( z-1\right) ^{2}+4\eta _{m}z}}{2z}.
\end{equation*}%
From Theorem 2.64 of \cite{TulinoTextBook}, as $\mathbf{U}_{A}\mathbf{U}%
_{A}^{\dagger }$ and $\mathbf{U}_{B}\mathbf{U}_{B}^{\dagger }$ are
asymptotically free random matrices, the R-transform
of the a.e.d. of $\mathbf{U}_{A}\mathbf{U}_{A}^{\dagger }+\mathbf{U}_{B}%
\mathbf{U}_{B}^{\dagger }$ is given by%
\begin{equation*}
R_{AB}\left( z\right) =R_{X_{A}}\left( z\right) +R_{X_{B}}\left( z\right)
=\dsum\limits_{m\in \left\{ A,B\right\} }\frac{z-1\mp \sqrt{\left(
z-1\right) ^{2}+4\eta _{m}z}}{2z}.
\end{equation*}%
Then, the Stieltjes transform of the a.e.d. of $\mathbf{U}_{A}\mathbf{U}%
_{A}^{\dagger }+\mathbf{U}_{B}\mathbf{U}_{B}^{\dagger }$ satisfies%
\begin{equation*}
S_{AB}^{-1}\left( -z\right) =1\mp \dsum\limits_{m\in \left\{ A,B\right\} }%
\frac{\sqrt{\left( z-1\right) ^{2}+4\eta _{m}z}}{2z}.
\end{equation*}%
Letting $y=S_{AB}^{-1}\left( -z\right) $ , we obtain
\begin{equation*}
\dsum\limits_{m\in \left\{ A,B\right\} }\sqrt{\left( z-1\right) ^{2}+4\eta
_{m}z}=\mp 2z(y-1).
\end{equation*}%
Multiplying $\sqrt{\left( z-1\right) ^{2}+4\eta _{A}z}-\sqrt{\left(
z-1\right) ^{2}+4\eta _{B}z}$ on both sides, we have%
\begin{equation*}
\sqrt{\left( z-1\right) ^{2}+4\eta _{A}z}-\sqrt{\left( z-1\right) ^{2}+4\eta
_{B}z}=\mp \frac{2(\eta _{A}-\eta _{B})}{y-1}.
\end{equation*}%
Adding the above two equations and taking the square, we further obtain%
\begin{equation*}
\left( z-1\right) ^{2}+4\eta _{A}z=\left( z(y-1)+\frac{\eta _{A}-\eta _{B}}{%
y-1}\right) ^{2}.
\end{equation*}%
Solving $z$, we obtain%
\begin{equation*}
S_{AB}\left( z\right) =-\frac{1-\eta _{A}-\eta _{B}\mp \sqrt{(1-\eta
_{A}-\eta _{B})^{2}+\left( 2z-z^{2}\right) \left( \left( \frac{\eta
_{A}-\eta _{B}}{z-1}\right) ^{2}-1\right) }}{2z-z^{2}}.
\end{equation*}%
From (2.45) in \cite{TulinoTextBook}, the a.e.d. of $\mathbf{U}_{A}\mathbf{U}%
_{A}^{\dagger }+\mathbf{U}_{B}\mathbf{U}_{B}^{\dagger }$ is given by%
\begin{equation*}
\mathcal{F}\left( \lambda \right) =\lim_{\omega \rightarrow 0^{+}}\frac{1}{%
\pi }\func{Im}\left[ S_{AB}\left( \lambda +j\omega \right) \right] .
\end{equation*}%
Thus, for $0<\lambda <1$ and $1<\lambda <2$, we obtain
\begin{equation}
\mathcal{F}\left( \lambda \right) =\frac{1}{\pi }\func{Im}\left[ \frac{\sqrt{%
(1-\eta _{A}-\eta _{B})^{2}+\left( 2\lambda -\lambda ^{2}\right) \left(
\left( \frac{\eta _{A}-\eta _{B}}{\lambda -1}\right) ^{2}-1\right) }}{%
2\lambda -\lambda ^{2}}\right] .  \label{F_lemda}
\end{equation}%
In addition, for a randomly generated pair of $\mathbf{U}_{A}$ and $%
\mathbf{U}_{B}$, there are $n_{A}+n_{B}-n_{R}$ orthogonal eigenvectors for $%
\lambda _{i}=2$, $\left\vert n_{A}-n_{B}\right\vert $ orthogonal
eigenvectors for $\lambda _{i}=1$, and $n_{R}-n_{A}-n_{B}$ orthogonal
eigenvectors for $\lambda _{i}=0$. Thus, as $n_{R}$ tends to infinity, the
PDF $\mathcal{F}\left( \lambda \right) $ at $\lambda =2$ is given by $\left[
\eta _{A}+\eta _{B}-1\right] ^{+}\delta \left( \lambda -2\right) $; that at $%
\lambda =1$ is given by $\left\vert \eta _{A}-\eta _{B}\right\vert \delta
\left( \lambda -1\right) $; and that at $\lambda =0$ is given by $\left[
1-\eta _{A}-\eta _{B}\right] ^{+}\delta \left( \lambda \right) $. This
concludes the proof of the lemma.

\section*{Acknowledgment}

The work of Xiaojun Yuan was partially supported by a grant from the
University Grants Committee (Project No. AoE/E-02/08) of the Hong Kong
Special Administrative Region, China. The work of Tao Yang was supported by
CSIRO OCE Postdoctoral Fellowships. It is also supported under the
Australian Government's Australian Space Research Program.

\newpage
\begin{figure}[tp]
\centering%
\includegraphics[width=4.2in,
height=1.8in]{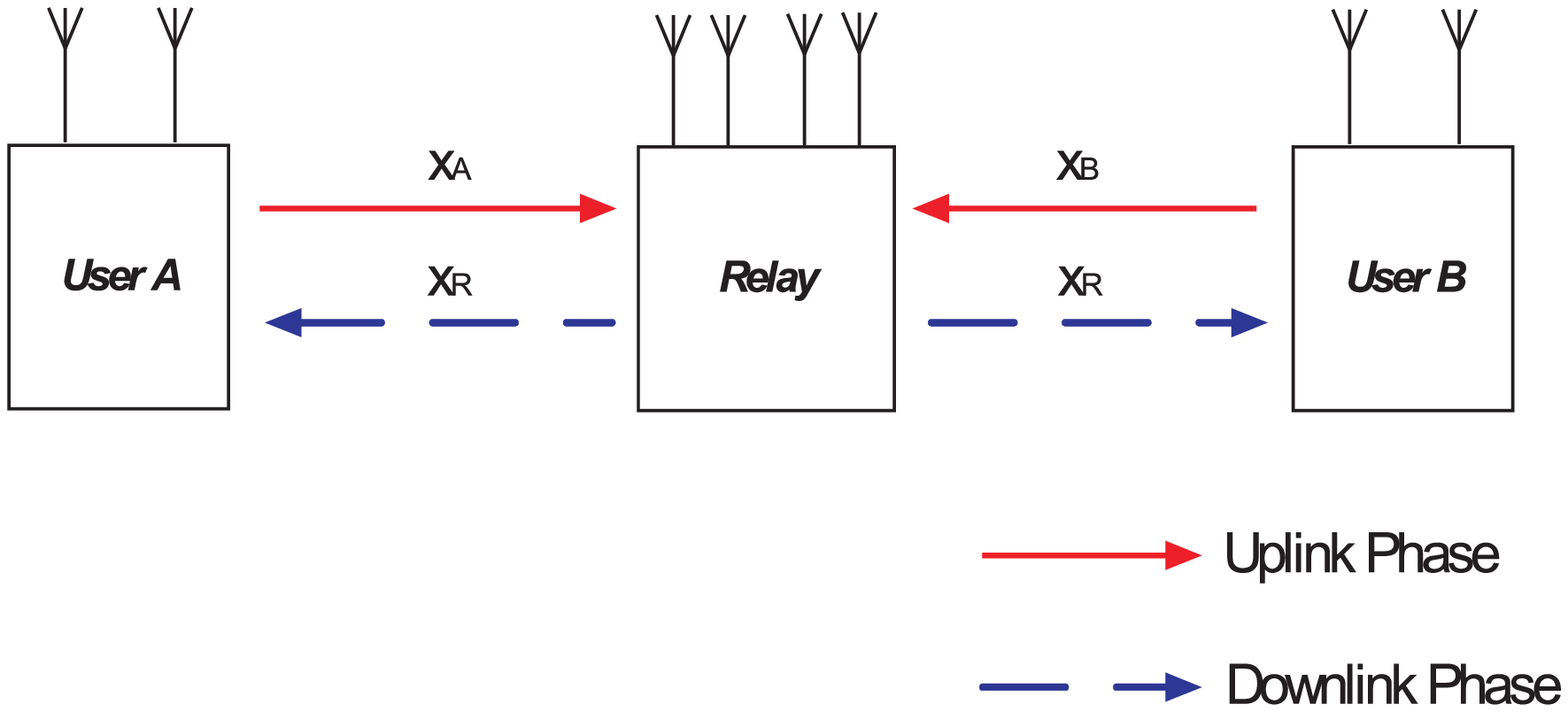}
\caption{Configuration of a MIMO TWRC.}
\label{Fig_Config_MIMOTWRC1}
\end{figure}
\begin{figure}[tp]
\centering\includegraphics[width=7.33in,
height=4.0in]{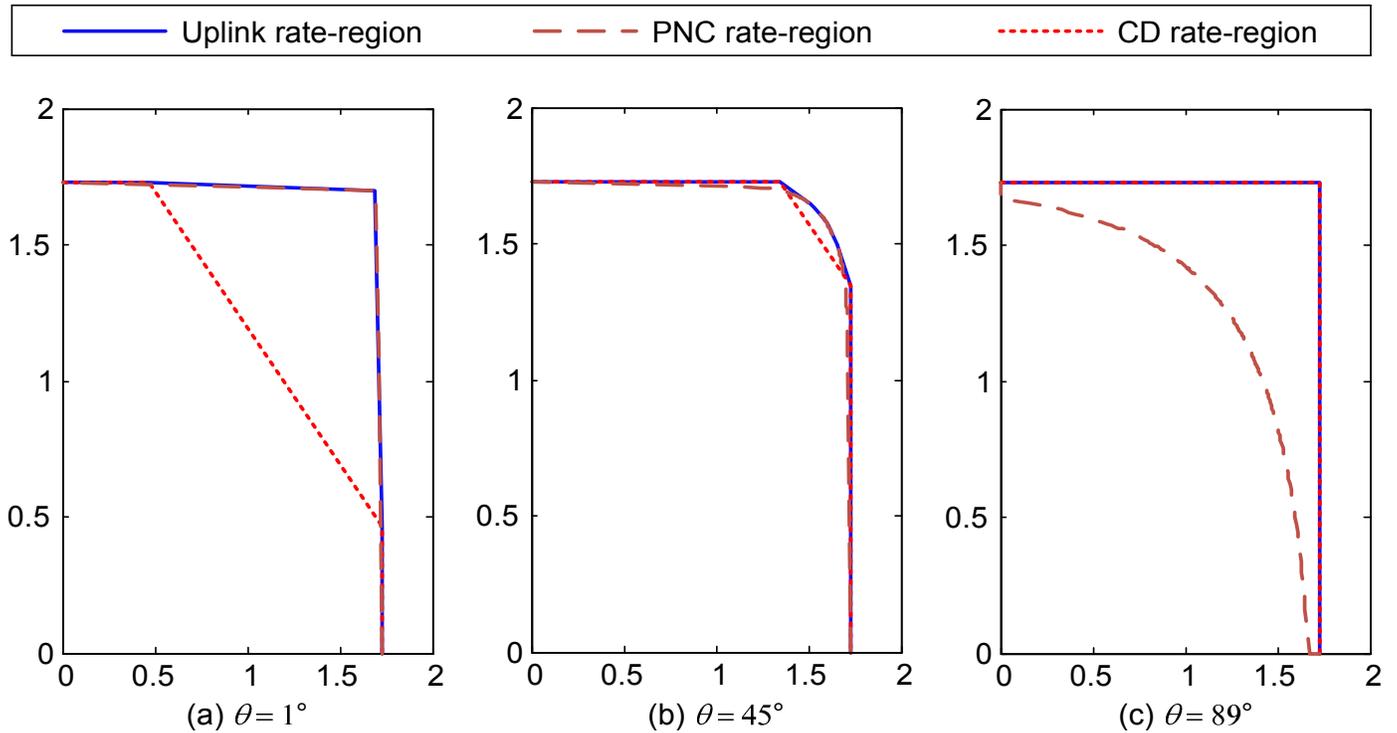}
\caption{The uplink rate-regions of the TWRCs with single-antenna users. $%
\mathbf{h}_{A}=[1,0]^{T}$ and $\mathbf{h}_{B}=[\cos \protect\theta ,\sin
\protect\theta ]^{T}$. Channel SNR = $1/N_{0}$ = 10 dB. The horizontal
axises represent the rate of user $A$; the vertical axises represent the
rate of user $B$; the unit is \textit{bit per channel use}.}
\label{SIMO}
\end{figure}
\begin{figure}[tp]
\centering\includegraphics[width=4.6in,height=3.8in]{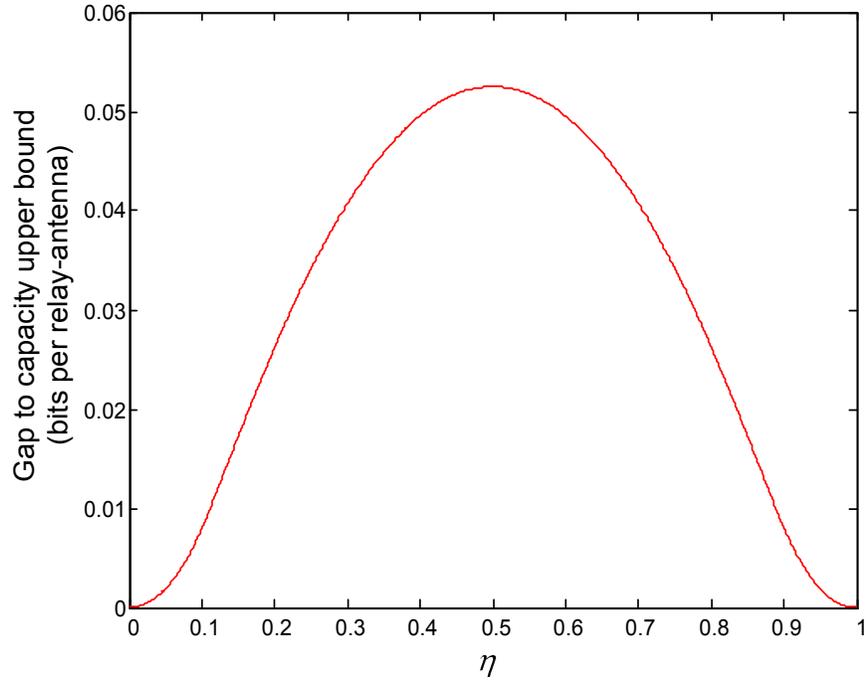}
\caption{The function of the average normalized gap $r^{SD}$ in (\ref{Limit1}) against $%
\protect\eta $.}
\label{SD_rate_gp}
\end{figure}
\begin{figure}[tp]
\centering\includegraphics[width=4.6in,height=3.8in]{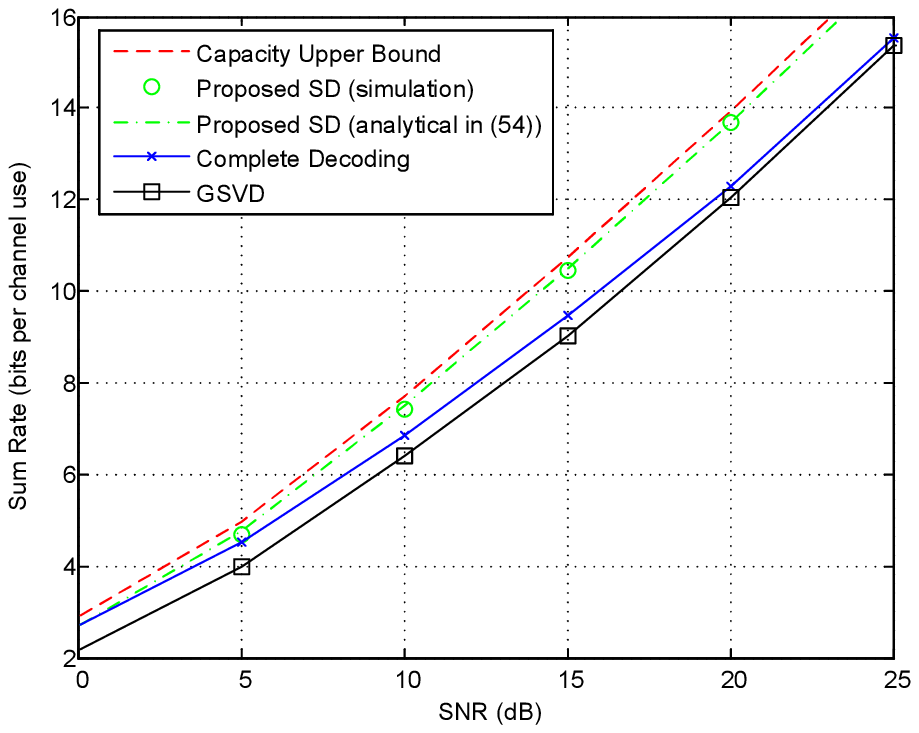}
\caption{Average achievable sum-rates of various schemes for the Rayleigh
fading MIMO TWRC with $n_{A}=n_{B}=2$ and $n_{R}=4$.}
\label{Figure nT2nR4}
\end{figure}
\begin{figure}[tp]
\centering\includegraphics[width=4.6in,height=3.8in]{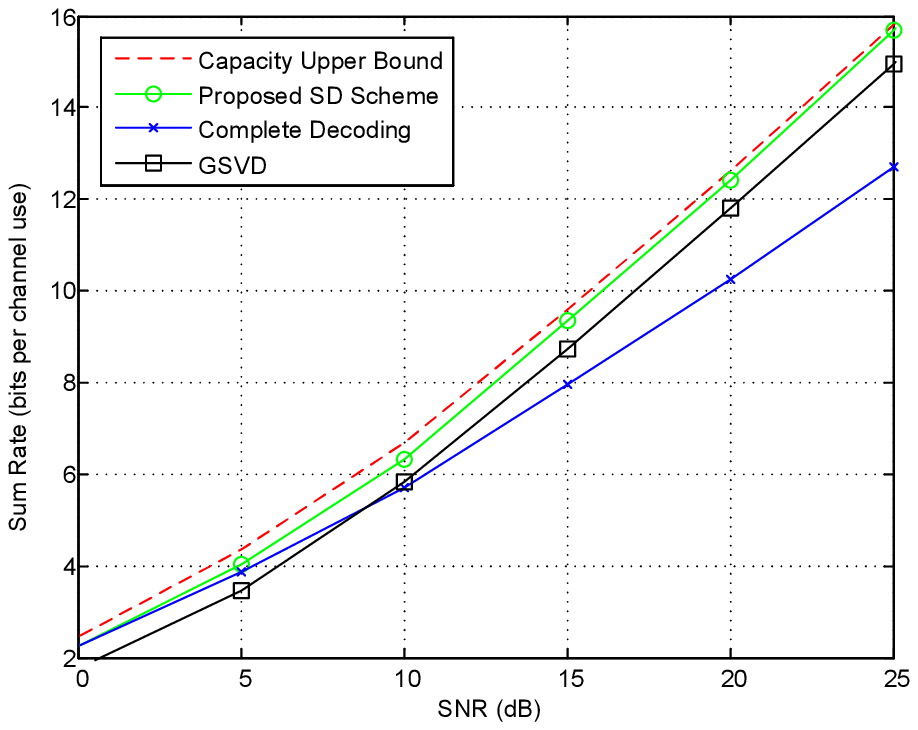}
\caption{Average achievable sum-rates of various schemes for the Rayleigh
fading MIMO TWRC with $n_{A}=n_{B}=2$ and $n_{R}=3$. }
\label{Figure nT2nR3}
\end{figure}
\begin{figure}[tp]
\centering\includegraphics[width=4.6in,height=3.8in]{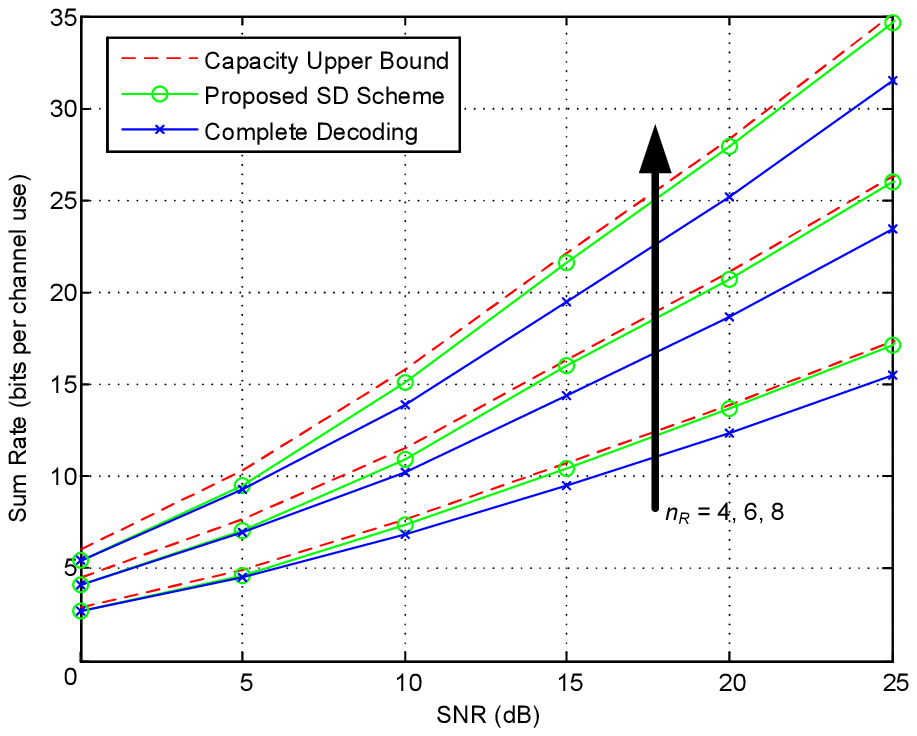}
\caption{Scaling effect of the average sum-rates of various schemes for the Rayleigh
fading MIMO TWRCs with $\protect\eta _{A}=\protect\eta _{B}=1/2$. }
\label{Figure Eda1_2}
\end{figure}
\begin{figure}[tp]
\centering\includegraphics[width=4.6in,height=3.8in]{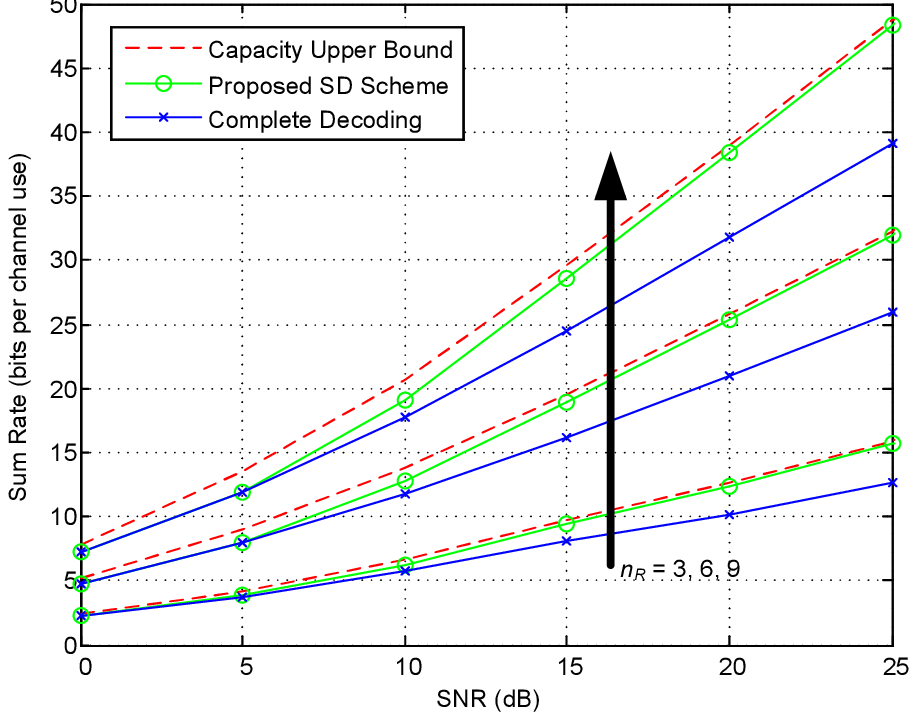}
\caption{Scaling effect of the average sum-rates of various schemes for the Rayleigh
fading MIMO TWRCs with $\protect\eta _{A}=\protect\eta _{B}=2/3$. }
\label{Figure Eda2_3}
\end{figure}
\begin{figure}[tp]
\centering\includegraphics[width=4.8in,height=4in]{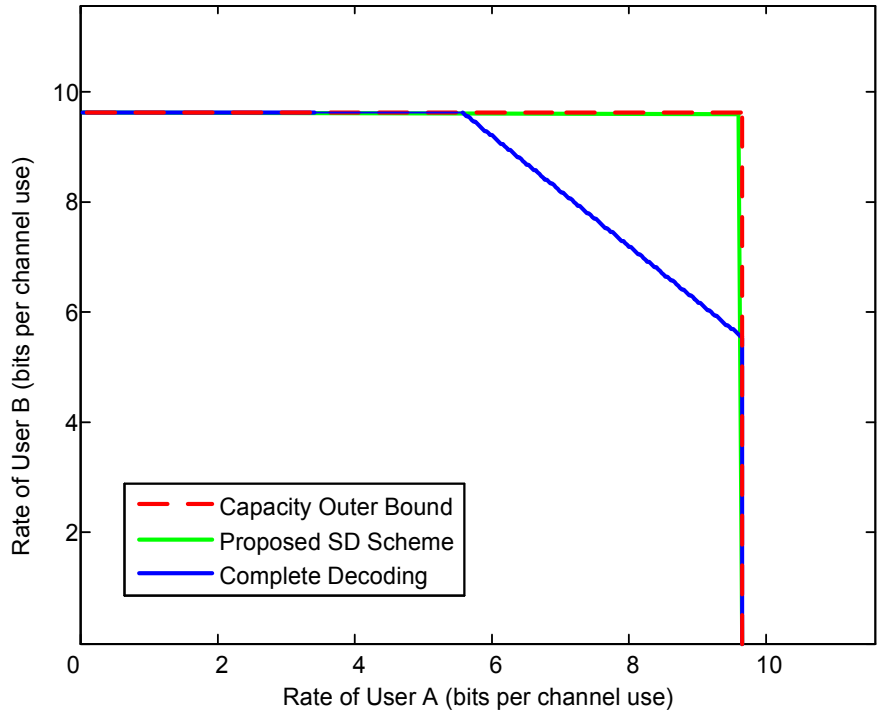}
\caption{Average achievable rate-regions for the Rayleigh
fading MIMO TWRC with $n_{A}=n_{B}=2$ and $n_{R}=3$. The average SNRs for all
the channel links are set to $30$ dB.}
\label{Figure CapacityRegion}
\end{figure}

\end{document}